\documentclass[11pt]{article}
\usepackage[utf8]{inputenc}
\usepackage{arxiv_reqs}
\usepackage{tikz}
\usepackage{xspace}
\usepackage{common}

\newcommand{\fpl}{\textsc{FPL*}}

\title{Memory Bounds for the Experts Problem}
\author{
Vaidehi Srinivas\thanks{Northwestern University. 
E-mail: \email{vaidehi@u.northwestern.edu}}
\and
David P. Woodruff\thanks{Carnegie Mellon University. 
E-mail: \email{dwoodruf@cs.cmu.edu}}
\and
Ziyu Xu\thanks{Carnegie Mellon University. 
E-mail: \email{xzy@cmu.edu}}
\and
Samson Zhou\thanks{Carnegie Mellon University. 
E-mail: \email{samsonzhou@gmail.com}}
}
\date{}

\begin{document}

\begin{titlepage}
\maketitle
\thispagestyle{empty}

\begin{abstract}
Online learning with expert advice is a fundamental problem of sequential prediction.  In this problem, the algorithm has access to a set of \(n\) ``experts" who make predictions on each day. The goal on each day is to process these predictions, and make a prediction with the minimum cost. After making a prediction, the algorithm sees the actual outcome on that day, updates its state, and then moves on to the next day.  An algorithm is judged by how well it does compared to the best expert in the set.
    
The classical algorithm for this problem is the multiplicative weights algorithm, which has been well-studied in many fields since as early as the 1950s. Variations of this algorithm have been applied to and optimized for a broad range of problems, including boosting an ensemble of weak-learners in machine learning, and approximately solving linear and semi-definite programs.  However, every application, to our knowledge, relies on storing weights for every expert, and uses \(\Omega(n)\) memory.  There is little work on understanding the memory required to solve the online learning with expert advice problem (or run standard sequential prediction algorithms, such as multiplicative weights) in natural streaming models, which is especially important when the number of experts, as well as the number of days on which the experts make predictions, is large. 
    
We initiate the study of the learning with expert advice problem in the streaming setting, and show lower and upper bounds. Our lower bound for i.i.d., random order, and adversarial order streams uses a reduction to a custom-built problem using a novel masking technique, to show a smooth trade-off for regret versus memory. Our upper bounds show novel ways to run standard sequential prediction algorithms in rounds on small ``pools" of experts, thus reducing the necessary memory.  For random-order streams, we show that our upper bound is tight up to low order terms.  We hope that these results and techniques will have broad applications in online learning, and can inspire algorithms based on standard sequential prediction techniques, like multiplicative weights, for a wide range of other problems in the memory-constrained setting.
\end{abstract}

\end{titlepage}

\newpage

\section{Introduction}
The online learning with experts problem forms a general framework for sequential forecasting. 
On each day (or time), the algorithm must make a prediction about an outcome, based on the predictions of a set of ``experts.'' 
In this \emph{online learning with experts problem}, we let \(n\) be the number of experts, and the algorithm is tasked with making predictions for \(T\) days. 
For each day, the algorithm is provided with the predictions of each expert in \([n] = \{1, 2, \dots, n\}\), and then produces its own prediction based on all the information it has received in the previous days and the experts' predictions from the current day. 
The algorithm is provided with feedback on its prediction in the form of the cost of its prediction and the costs of all expert predictions on the current day. This process repeats for each day the algorithm makes predictions. 
The costs are restricted to be in the range $[0,\rho]$ for some parameter $\rho>0$.

The online learning with experts problem has been primarily studied with respect to achieving the optimal regret, i.e.,\ the additional total cost the algorithm incurs over the best performing expert in hindsight (expert that incurs the least cumulative cost), divided by the number of days. 
The well-known weighted majority algorithm was derived in \cite{littlestone_weighted_1989} for solving the discrete prediction with experts problem (where the set of possible predictions is restricted to a finite set, and the cost is 1 if the prediction is correct, and 0 otherwise) with \(O(M + \log n)\) total mistakes, where \(M\) is the number of mistakes the best expert makes. 
Optimizations to the weighted majority algorithm, such as the randomized weighted majority algorithm, achieve regret $O\left(\sqrt{\frac{\log n}{T}}\right)$. 
There have been many improvements to the weighted and randomized weighted majority algorithms in subsequent work.  There has also been work designing other algorithms that achieve this regret bound, such as follow the perturbed leader \cite{FPLKalaiVempala05}, that are less computationally expensive.  However, all of these algorithms rely on a paradigm of keeping track of the cumulative cost of every expert, which requires the algorithm to use \(\Omega(n)\) bits of memory on each day. 

In this paper, we approach the online learning with experts problem from the angle of memory-bounded learning in the data stream model, and analyze whether there exists a low-memory algorithm that still performs reasonably well when compared to the best expert. 
Specifically, we consider the memory, i.e., the space complexity, in the streaming model. 
We first note that in the discrete version of the prediction with experts problem, where there are two possible answers, $0$ and $1$, and the cost of each decision is in $\{0,1\}$ (0 for picking the correct answer, and 1 otherwise), a trivial random guessing algorithm is correct on half of the days in expectation.  
At the other extreme, the regret bound of \(O\left(\sqrt{\frac{\log n}{T}}\right)\) can be achieved by storing $O(n)$ words (i.e., the weight of each expert) in memory and implementing the weighted majority algorithm. 
Thus it is natural to ask:
\begin{quote}
\textit{What is the space/accuracy tradeoff of an algorithm for the online learning with experts problem in the streaming model?}
\end{quote}

\paragraph{Prior Work on the Experts Problem.} The experts problem has been studied in the discrete decision setting \cite{littlestone_weighted_1989} and variants where costs are determined by a loss function \cite{haussler_tight_1995,vovk_aggregating_1990,vovk_game_1998,vovk1999derandomizing,vovk2005defensive}. Consequently, many different problems can be reframed into the experts problem framework, including portfolio optimization \cite{cover1996universal,cover2011universal}, boosting \cite{freund1997decision}, and forecasting \cite{brayshaw2020new}.  
For a complete reference on these results, see \cite{cesa2006prediction}. More recent work has shown that the multiplicative weights algorithm, a generalization of the weighted majority algorithm, has a tight asymptotic regret bound of \(\sqrt{\frac{\ln n}{4T}}\) in the general case \cite{gravin_tight_2017}. 
Previous work has also addressed the computational efficiency of an experts algorithm from the perspective of time complexity. Efficient algorithms have been considered in instances where extra assumptions can be made on the expert, such as imposing a tree structure \cite{helmbold1997predicting,takimoto2001predicting}, assuming the experts are threshold functions \cite{maass1998efficient}, or assuming the experts have a nice linear structure \cite{FPLKalaiVempala05}.  Although many different algorithms have been developed in this area for these problems, they all revolve around tracking the performance of every expert, which requires at least \(\Omega(n)\) memory. To the best of our knowledge, our work is the first that examines the experts problem from a streaming perspective.

\subsection{Setup of Online Learning with Experts}

In the experts problem, there are \(n\) experts and \(T\) days. On each day \(t \in [T] = \{1, \dots, T\}\), the \(i\)-th expert makes a prediction \(x_i^{(t)}\) from the answer set \(\answers\), e.g.,\ \(\answers = \{0, 1\}\) in the discrete decision case. The algorithm then receives feedback in the form of costs \(c_i^{(t)}\) for the \(i\)-th expert on day \(t\), where we define \(c_0^{(t)}\) to be the cost incurred by the prediction of the algorithm. In the discrete decision case, the algorithm does not receive the feedback directly, but rather receives the ``correct answer'' \(y^{(t)} \in \{0, 1\}\) on day \(t\).  We can model this as a data stream that encompasses the following elements in the prescribed order: \(x_1^{(t)}, \dots, x_n^{(t)}, c_0^{(t)}, \dots, c_n^{(t)}\), where we see this sequence first for $t = 1$, then for $t = 2$, and so on until $t = T$. In the discrete decision case, for each $t$ we instead only see the sequence \(x_1^{(t)},  \dots, x_n^{(t)}, y^{(t)}\). Further, in the discrete decision case, the costs are determined by \(y^{(t)}\), and our algorithms do not need to see the sequence of the  $c_1^{(t)}, \ldots, c_n^{(t)}$, which is a useful property of our algorithms. Our lower bounds, however, still hold in the discrete decision case even if the algorithm is given $c_0^{(t)}, \ldots, c_n^{(t)}$. 

For the majority of the paper, we will consider the setup where our algorithm picks an answer made by an expert (equivalent to picking an expert). 
We only consider the discrete decision case to show that our bounds also hold when the expert explicitly picks an answer.
We use $M \in \naturals$ to denote an upper bound on the number of mistakes made by the best expert. 
Note that the streaming algorithm sees these sequences in order of increasing $t$, and cannot go back to see predictions or correct answers from previous days; the only information it has about such values is from what it stores in its memory. 
Further, our upper bounds hold in a setting where the expert predictions and feedback are considered distinct stream elements, i.e.,\ streaming memory is needed to store information about the expert predictions of the \emph{current} day (in addition to all of its information from the past days), \(x_1^{(t)}, \dots, x_n^{(t)}\),  before the costs and correct answer of the current day are seen. Note that our lower bounds hold even in the setting where \(x_1^{(t)}, \dots, x_n^{(t)}\) and \(y^{(t)}\) can be viewed at the same time. 

\paragraph{Random-Order Streaming Model.} We frequently consider the experts problem in the \emph{random-order streaming model} \cite{guha2009stream}, where we assume the days are permuted in a uniformly random order with respect to the order the days arrive in the stream. This is equivalent to saying that the input distribution over \((x^{(1)}, c^{(1)}), \dots, (x^{(T)}, c^{(T)})\) is \textit{exchangeable}, i.e.,\  any permutation on \([T]\) of the data is equally likely in the input distribution. We note that this is slightly different from the traditional notion of a random-order stream where the permutation is over elements in the stream, whereas in the experts problem, the permutation is over the pairs of expert predictions and costs that are associated with a single day in the problem. Note that the permutation only applies across the ordering of days --- the order of experts remains fixed and unchanged across the days. In addition, the exchangeability assumption in the random order model allows it to subsume the i.i.d.\ model, where \((x^{(t)}, c^{(t)}) \sim \mu\) for all \(t \in [T]\).  
As a result, all upper bounds that hold in the random order model also hold in the i.i.d.\ model.

\subsection{Our Contributions}
\label{subsec:Contributions}
We initiate the study of the online learning with experts problem in the streaming model. 
We consider streams where the order of days and corresponding expert choices and outcomes may be either (1) worst case streams, (2) random-order streams, where the order of the days is assumed to be in a random order, or (3) i.i.d. streams, where the expert choices and outcomes on each day are drawn from a fixed distribution. 
We prove upper and lower bounds for the experts problem in these different streaming models.  
We refer to a word of memory as $O(\log(nT))$ bits, and we use the $\widetilde{O}(\cdot)$ notation to suppress $\log^{O(1)}(nT)$ factors. 

\paragraph{Lower Bound for I.I.D., Random Order, and Arbitrary Streams.} 
We show that any algorithm which achieves a regret of $\delta$ with constant probability in the streaming model must use  $\Omega\left(\frac{n}{\delta^2 T}\right)$ bits of memory. This lower bound holds for both arbitrary and random-order streams. In fact, the lower bound is valid even if the correct answers and the predictions of each expert are i.i.d., which implies validity of the lower bound in random order and arbitrary-order settings.

\begin{restatable}{theorem}{thmlowerbound}
\label{thm:LowerBound}
Let $\delta$, $p<\frac{1}{2}$ be fixed constants, i.e., independent of other input parameters.  
Any algorithm that achieves $\delta$ regret for the experts problem with probability at least $1-p$ must use at least $\Omega\left(\frac{n}{\avgregret^2T}\right)$ space. 
This lower bound holds even when the costs are binary and expert predictions, as well as the correct answer, are constrained to be i.i.d.\ across the days, albeit with different distributions across the experts.
\end{restatable}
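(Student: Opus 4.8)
The plan is to prove the space lower bound via the streaming-to-communication paradigm: design a one-way communication problem specifically suited to the experts setting (the ``custom-built problem''), prove it requires $\Omega(n/(\delta^2T))$ bits of communication, and then argue that any streaming algorithm using space $s$ and achieving regret $\delta$ with constant probability can be turned into a protocol for this problem with an $O(s)$-bit message --- namely its memory contents at a suitably chosen point in the stream --- so that $s=\Omega(n/(\delta^2T))$.

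For the embedding, I would work with a distribution over instances in which the $n$ experts split into ``good'' experts, each of whom predicts the day's correct answer with a small advantage $\epsilon=\Theta(\delta)$ over a fair coin, and ``noise'' experts, each of whom is a fair coin, with the identity of the good experts --- that is, a hidden string carrying $\Theta(n/(\delta^2T))$ bits of information --- encoded into the instance. Two properties need to be arranged. First, the benchmark property: the best expert in hindsight is a good expert, and its advantage accumulates into savings on the order of $\delta T$, so that an algorithm with regret $\delta$ is forced to actually track and follow a good expert on a constant fraction of days --- merely guessing, or taking a majority/weighted vote over all experts, must fail. Second, and this is where the \emph{masking technique} enters, the instance is arranged so that the ``clues'' revealing which experts are good leak out only at a controlled rate, so that from any short suffix of the stream an algorithm cannot reconstruct enough of the hidden structure to locate a good expert; an algorithm that has not already committed memory to (roughly) the right $\Theta(n/(\delta^2T))$ experts cannot recover in time. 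Combining the two properties, running the purported algorithm solves the custom problem from an $O(s)$-bit message, and a standard \textsc{Index}-style information-theoretic argument --- a message of $s$ bits carries at most $s$ bits about a uniformly random hidden string, yet must suffice to answer a uniformly random query with constant advantage --- forces $s=\Omega(n/(\delta^2T))$.

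Since the construction only ever uses binary predictions, a binary correct answer, and binary costs, it lies in the discrete-decision case and is unaffected by additionally handing the algorithm $c_0^{(t)},\dots,c_n^{(t)}$. Moreover, I would make the masked distribution a product distribution over the $T$ days --- each day an independent draw from a fixed (string-dependent) mixture, with the masking realized per day rather than as a designated prefix or suffix --- so that the argument applies verbatim when the days are i.i.d., and hence also in the random-order and arbitrary-order models.

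I expect the main obstacle to be the design and analysis of the masking so that three things hold simultaneously: the hidden structure stays hidden from every short suffix of the stream (so that memory, not fresh samples, is what the algorithm must rely on); the construction remains a legal i.i.d.-across-days instance; and the benchmark stays strong, i.e., low regret genuinely forces following a good expert rather than being satisfiable by some cheap memoryless statistic such as a majority vote. Balancing these is exactly what pins down $\epsilon\asymp\delta$ and the amount of hidden information at $\Theta(n/(\delta^2T))$; the accompanying regret accounting --- turning ``the algorithm fails to follow a good expert'' into ``the algorithm incurs $\Omega(\delta T)$ excess cost'' --- together with the information-theoretic bookkeeping for the masked leakage, is the technical heart of the argument.
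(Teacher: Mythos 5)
Your high-level plan shares two key ingredients with the paper's proof---the per-day XOR masking, and the choice of bias $\epsilon\asymp\delta$ for the planted expert(s)---and you correctly identify that the construction should be a product distribution over days so the lower bound applies simultaneously in the i.i.d., random-order, and adversarial models. But the communication-complexity backbone you propose is genuinely different from the paper's, and in a way that leaves a real gap.

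The paper does \emph{not} use a two-party one-way protocol or an \textsc{Index}-style ``recover a random coordinate of a hidden string'' argument. It sets up a $T$-party blackboard problem, $\varepsilon$-\(\diffdist\): in the NO case all $n$ coordinates of all $T$ players are fair coins; in the YES case a single coordinate $L\in[n]$ is biased by $\varepsilon$. The lower bound is $\Omega(n/\varepsilon^2)$ on \emph{total} communication across all $T$ players, obtained by a direct-sum decomposition over the $n$ \emph{experts} (Lemma \ref{lemma:DecomposableLemma}) combined with an $\Omega(1/\varepsilon^2)$ mutual-information lower bound for single-coordinate distributed detection via a strong data processing inequality (Theorem \ref{thm:EpsilonIC}, Lemma \ref{lemma:ReductionLemma}). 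The streaming bound $\Omega(n/(\delta^2 T))$ then follows by pigeonhole: some player must write $\Omega(n/(\delta^2 T))$ bits. Crucially, the hard instance contains at most \emph{one} good expert, and the reduction asks the algorithm only to \emph{distinguish} YES from NO (via its observed accuracy against a threshold)---not to name the good expert.

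Your \textsc{Index}-style framing runs into two concrete problems that the paper's detection framing avoids. First, an experts algorithm is never queried; it only predicts. You would need to exhibit an explicit reduction in which the decoder, given the algorithm's $s$-bit memory at a split point, can answer a fresh uniformly random \textsc{Index} query with constant advantage, and it is not clear how to extract that from a regret guarantee alone. The paper's reduction instead thresholds the algorithm's empirical accuracy, which is directly controlled: masking makes the NO-case accuracy exactly $1/2$ in distribution, and the regret guarantee forces the YES-case accuracy above $\frac{1}{2}+\Omega(\delta)$. Second, to make ``the hidden string carries $\Theta(n/(\delta^2T))$ bits'' literally true, you need a large set of planted good experts (a single good expert carries only $O(\log n)$ bits of identity), and you would then have to argue that achieving $\delta$ regret forces the algorithm to recover a constant fraction of that set---a substantially harder claim, since a low-regret algorithm might only ever ``find'' one good expert. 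None of this is insurmountable, but it is exactly the technical heart that your sketch defers, and the paper's route (detection, not identification; multi-party direct-sum over experts via SDPI, not a two-party one-way \textsc{Index}) is a cleaner way to sidestep both difficulties.
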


Note that for $\delta=\sqrt{\frac{1}{T}}$, Theorem~\ref{thm:LowerBound} implies that $\Omega(n)$ space is necessary. 
Thus, we should not expect to obtain the same regret bounds as the best algorithms for online learning with experts, e.g., multiplicative weights, when the space is significantly smaller than $n$. 
In other words, Theorem~\ref{thm:LowerBound} shows a separation for the online learning with experts problem between the classical centralized setting and the streaming model when only $o\left(\frac{n}{\log n}\right)$ space is permitted, in which case the \(O\left(\sqrt{\frac{\log n}{T}}\right)\) regret that is obtainable for the classical centralized setting cannot be achieved by any streaming algorithm with $o\left(\frac{n}{\log n}\right)$ space. 
Moreover, this separation holds for random-order, i.i.d., and arbitrary-order streams. 

\paragraph{Upper Bounds for Random-Order Streams.}
We next consider upper bounds for the online learning with experts problem in the random-order streaming model.
We consider the case where the costs of the decisions of each expert have value in  $[0,\rho]$ rather than in $\{0,1\}$, where $\rho>0$ is called the \emph{width} of the problem. 
\begin{theorem}[Informal]
\label{thm:general:regret:informal}
There exists an algorithm that takes a target parameter $\delta>\sqrt{\frac{16\log^2 n}{T}}$, uses $\tilde{O}\left(\frac{n}{\delta^2 T}\right)$ space, and achieves an expected regret of $\rho\delta$ in the random-order model, where $\rho$ is the width of the problem. 
\end{theorem}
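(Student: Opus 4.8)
The plan is to avoid storing a weight for each of the $n$ experts by running a standard no-regret procedure (multiplicative weights / randomized weighted majority, or \fpl) not on all experts at once but on a sequence of small ``pools'' of at most $s = \widetilde{O}\!\left(\frac{n}{\delta^2 T}\right)$ experts, and to exploit the random order so that the handful of days on which a given pool is active form an (essentially) uniform, well-concentrated sample of the whole stream. Concretely, I would partition the $T$ days into rounds; in each round I maintain a pool of at most $s$ experts, run the base no-regret algorithm on that pool for the round's $L$ days (incurring regret $O(\rho\sqrt{L\log s})$ against the best expert in the pool), and also accumulate each pool member's within-round cost, which by exchangeability is an unbiased estimate of an $\frac{L}{T}$-fraction of its total cost with deviation $O(\rho\sqrt{L\log n})$ (sampling without replacement, so Hoeffding/Bernstein-type tail bounds apply, with a union bound over the $\le n$ experts).

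The pool is updated between rounds so that a near-optimal expert is present in the pool for the overwhelming majority of the days while at most $s$ experts are ever active simultaneously. I would do this with a successive-elimination / median-elimination style tournament: cycle fresh batches of experts into the pool, keep accurate enough running cost estimates of the incumbents, and evict an expert only once its estimated per-day rate provably exceeds the best surviving rate by more than its current confidence radius. The two structural facts to establish are (i) a near-optimal expert is never evicted, so the pool always retains an expert of total cost within $\rho\delta T$ (up to low-order terms) of the global optimum $C^* = \min_i \sum_t c_i^{(t)}$, and (ii) an expert that survives $d$ days of watching must have per-day rate within $O(\rho\sqrt{(\log n)/d})$ of optimal, so the cost \emph{charged} to it while it is in the pool is only $O(\rho\sqrt{d\log n})$; scheduled against the expert-day budget $sT = \widetilde{O}(n/\delta^2)$ and the round structure, these charges must be shown to telescope to $O(\rho\delta T)$.

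The final bound then comes from summing three contributions: the base-algorithm regret over rounds, $O(\rho\sqrt{RT\log s})$, which is $\widetilde{O}(\rho\delta T)$ once the number of rounds is set to $\Theta(\delta^2 T)$ (equivalently $L = \Theta(1/\delta^2)$); the cost of the exploration/eviction process, bounded by the charging argument above; and the suboptimality of the best pool member at any time, bounded by (i). Choosing $s$, the round length, the batch sizes, the eviction thresholds, and a slightly smaller internal accuracy parameter $\delta' = \delta/\mathrm{polylog}(n)$ --- so that the $\mathrm{polylog}$ losses are absorbed into the $\widetilde{O}$ in the memory bound, while the hypothesis $\delta > \sqrt{16\log^2 n / T}$ guarantees $\rho\sqrt{T\log n} = O(\rho\delta' T)$ --- then yields expected regret $\rho\delta$ with $\widetilde{O}\!\left(\frac{n}{\delta^2 T}\right)$ words of memory.

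I expect the main obstacle to be part (ii): controlling the cost incurred \emph{while} the tournament is still running. Naively cycling every expert through the pool and watching each long enough to $\delta$-certify it already spends $\widetilde{\Theta}(n/\delta^2)$ expert-days and, against an adversarial cost matrix with a single excellent expert and the rest uniformly bad, could cost $\Theta(\rho T/\mathrm{polylog})$ before the good expert is located --- too much when $\delta$ is small. The resolution must be adaptive: experts that are clearly bad have to be evicted after only a few days, and the argument that a slowly-evicted expert is necessarily near-optimal (hence cheap to have kept) is exactly what makes the charging argument close. Getting the eviction schedule, the confidence radii, and the interaction with the per-round base algorithm to fit simultaneously inside the $\Theta(T)$-day and $\widetilde{O}(s)$-word budgets is the technical heart of the proof.
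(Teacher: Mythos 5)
Your proposal takes a genuinely different route from the paper, and it has a gap that you yourself identify but do not close. You propose a successive-elimination tournament: maintain per-expert cost estimates for the incumbents, evict an expert once its empirical rate provably exceeds the best survivor by more than its confidence radius, and cycle fresh batches into the pool. The difficulty you flag in part (ii) --- bounding the cost \emph{paid} during exploration before the good expert is located, in the presence of a cost matrix with one excellent expert and $n-1$ uniformly bad ones --- is exactly where the plan stalls, and the sketch of a ``charging argument'' that is supposed to telescope those costs to $O(\rho\delta T)$ is the load-bearing step you have not supplied. Without it, the most natural instantiations of the tournament spend $\widetilde\Theta(n/\delta^2)$ expert-days of watching, and the argument that slowly-evicted experts are necessarily cheap enough to have kept is not obviously true for the bad experts, only for near-optimal ones.

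The paper sidesteps this obstacle by never reasoning about individual evictions at all. Algorithm~\ref{alg:SublinearAlgorithm:MW} samples a fresh pool of $k=O(n\log^2 n/(\delta^2 T))$ experts uniformly at random, runs the base sequential prediction algorithm on it, and resamples the \emph{entire} pool only when \emph{every} expert in the pool has exceeded the threshold $\frac{M}{T}(t-u)+4\sqrt{(t-u)\log T}$ since the round started. Two things follow cheaply in the random-order model: (a) by Hoeffding under sampling without replacement (Lemma~\ref{lem:best:upper:random}, Corollary~\ref{cor:no:resample}), once the best expert lands in a pool, that pool is never resampled with probability $1-1/T$, so the number of rounds is dominated by a geometric random variable with success probability $k/n$ and is $O(\delta^2 T/\log n)$ with high probability; and (b) until the last day of a round, some pool member is empirically near the target rate, so the sequential prediction algorithm's guarantee (Definition~\ref{def:seq-pred-alg}) bounds the round's expected cost by roughly $(1+\delta/2)(\alpha t_j + 4\sqrt{t_j\log n}+1)+O(\ln n/\delta)$, and summing over the bounded number of rounds gives $M+O(\delta T)$. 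There is no per-expert confidence bookkeeping and no amortized charging; the whole-pool resampling rule is what makes the round count geometric and the accounting a one-line sum. You also do not address that the resampling threshold depends on $M$; the paper handles the unknown $M$ via a $(1+O(\delta))$-approximation obtained by a binary search over $O(\log(1/\delta))$ short prefix epochs of length $\frac{\delta T}{2\log(1/\delta)}$ each, which together contribute only an additive $\delta T$ to the cost (Theorem~\ref{thm:general:regret}).

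If you want to rescue your tournament approach, you would need to prove a concrete eviction schedule and confidence-radius rule for which the charging argument in part (ii) actually closes, and show it fits in the space and day budgets; as written, the proposal is a plausible direction but not a proof.
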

When the width $\rho$ is normalized to $1$, our space dependence on the regret $\delta$ in Theorem~\ref{thm:general:regret:informal} is tight,  given the lower bound in Theorem~\ref{thm:LowerBound}.
We present the formal version of Theorem~\ref{thm:general:regret:informal} as Theorem~\ref{thm:general:regret} in Section~\ref{sec:mw}. 
Our algorithm shows that there are indeed natural tradeoffs between the memory required, the regret \(\delta\) of the streaming algorithm, the total number \(T\) of days, and the total number \(n\) of experts. 

\paragraph{Upper Bounds for Predictions on Arbitrary Streams.} 
We next consider the online learning with experts problem in the more general adversarial streaming model.
We propose an algorithm that is correct on a ``large'' fraction of days that allows for a space-accuracy tradeoff, even if the best expert makes a number of mistakes that is almost linear in $T$. 
\begin{theorem}
\label{thm:main}
(Informal) Given an upper bound $M\in\left[0,\frac{\delta^2 T}{1280\log^2 n}\right]$ on the cost that the best expert incurs, and a target regret $\delta>\sqrt{\frac{128\log^2 n}{T}}$, there exists a streaming algorithm that uses space $\widetilde{O}\left(\frac{n}{\delta T}\right)$ and with probability at least $4/5$, has regret $\rho\delta$, where $\rho$ is the width of the problem. The algorithm does not need to know $M$ in advance. 
\end{theorem}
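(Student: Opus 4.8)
The plan is to run a standard sequential‑prediction procedure — multiplicative weights, or its perturbed‑leader variant \fpl{} — not on all $n$ experts, but on a small \emph{pool} $P$ of $K=\widetilde{O}(n/(\delta T))$ experts that is continually refreshed. The algorithm maintains, for each of the (at most $K$) experts currently in $P$, its cumulative cost and the number of days since it entered $P$; on each day it predicts by running multiplicative weights over $P$, updates these counters from the observed costs, and, after a short grace period of $g=\Theta(\log(nK))$ days, \emph{evicts} any pooled expert whose \emph{average} daily cost so far exceeds a threshold $\theta=\Theta(\rho\delta)$ (note $\theta$ does not involve $M$), refilling each vacated slot with an expert drawn uniformly from $[n]$ (and resetting that slot's multiplicative‑weights weight to $1/K$); when $P$ has turned over completely the round ends and we restart. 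The analysis separates the days into an \emph{exploration} phase, before $P$ first contains an expert competitive with the benchmark, and an \emph{exploitation} phase afterwards.

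\noindent\textbf{Exploitation.} Since the best expert incurs total cost at most $\rho M\le \rho\delta^2 T/(1280\log^2 n)$, its average daily cost is well below $\theta$, so once it enters $P$ it is never evicted; let $t_0$ be the first day this happens. From day $t_0$ on, the multiplicative‑weights regret bound (applied with weight $1/K$ on the best expert) gives algorithm cost over $[t_0,T]$ at most $\rho M + 2\sqrt{\rho M\cdot\rho\ln K}+\rho\ln K$, and plugging in $M\le\delta^2 T/(1280\log^2 n)$ and $\delta>\sqrt{128\log^2 n/T}$ makes each of the last two terms at most $\rho\delta T/4$; hence the cost from day $t_0$ on is at most $\rho M+\rho\delta T/2$. (The same calculation, with $\rho M$ replaced by $O(\rho\delta T)$, shows it also suffices for $P$ to contain \emph{any} expert of average cost $O(\rho\delta)$ — which is all that is needed in the worst case, and is why $\theta$ is taken as $\Theta(\rho\delta)$ rather than a constant fraction of $\rho$.)

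\noindent\textbf{Exploration, and the main obstacle.} It remains to show $t_0=O(\delta T)$ with probability at least $4/5$; combined with the trivial per‑day bound of $\rho$, the exploration phase then costs $O(\rho\delta T)$, and rescaling the constants absorbed into $K$ and $g$ yields total regret $\rho\delta$. This is precisely where adversarial order is the difficulty: unlike the random‑order setting of Theorem~\ref{thm:general:regret}, a prefix of the stream reveals nothing about an expert's overall cost, so quality cannot be certified cheaply. I would argue as follows. Because the (oblivious) adversary fixes the cost sequence before the pool's randomness is drawn, the grace‑period window attached to a freshly sampled expert is uniformly placed relative to that expert's cost profile; a Chernoff bound then shows that any expert whose overall average cost exceeds $\Omega(\rho)$ has empirical average above $\theta$ within $g$ days and is evicted, while the best expert (and any expert of average cost $o(\rho\delta)$) survives the test throughout its tenure with high probability. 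Hence, until $P$ holds an expert of average cost $O(\rho\delta)$, a constant fraction of $P$ is flushed every $g$ days, so $P$ ingests $\Omega(K/g)$ fresh uniform samples per day, and a sample equal to the best expert appears within $O((n/K)g)=O(\delta T)$ days except with probability $1/5$. The crux — the step I expect to be most technical — is calibrating $\theta$ and $g$ so that (i) the benchmark expert is never flushed, (ii) experts of total cost $\gg\rho\delta T$ are flushed within $O(\log(nK))$ days so the pool turns over fast, and (iii) ``borderline'' experts of average cost $\Theta(\rho\delta)$, which may linger, are cheap enough to follow meanwhile; together with making the concentration estimates uniform over the adversarial, pool‑oblivious cost matrix. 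Finally, $M$ never enters the algorithm (its parameters depend only on $\delta,n,T$), so no prior knowledge of $M$ is required; had a bound on $M$ been needed to set parameters, a geometric sequence of guesses with a geometric day‑budget per guess would eliminate it at the cost of an $O(\log T)$ factor absorbed into $\widetilde{O}(\cdot)$.
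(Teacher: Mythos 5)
Your algorithm and analysis plan differ from the paper's in two ways, but the analysis has a genuine gap that the paper's techniques were specifically designed to avoid.

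First, the structural difference: the paper's Algorithm~\ref{alg:SublinearAlgorithmGeneralCosts} maintains one pool and resamples the \emph{entire} pool only when \emph{every} pooled expert's average cost since the round began exceeds $\delta/8$, whereas you evict and refill individual slots with per-expert grace periods. That is a legitimate variant to try. But your exploration argument rests on the claim that ``the grace-period window attached to a freshly sampled expert is uniformly placed relative to that expert's cost profile.'' This is false in arbitrary-order streams. The adversary is oblivious to the pool's random draws, but the \emph{time} at which a given slot becomes vacant and a fresh expert is drawn is a deterministic function of the cost matrix and the draws so far; an adversary who knows your algorithm can shape the cost matrix so that pool turnover is concentrated in particular time windows. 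In particular, nothing prevents the adversary from concentrating the best expert's entire budget of $M$ unit costs into one short interval and arranging the other experts' costs so that the pool churns precisely then; if the best expert is sampled at the start of that interval, its empirical average since entering $P$ blows past $\theta=\Theta(\rho\delta)$ and it is evicted. Your crux item (i) --- ``the benchmark expert is never flushed'' --- cannot be achieved by tuning $\theta$ and $g$, and the Chernoff bound you invoke does not apply because the window is not random.

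The paper confronts exactly this obstacle and resolves it differently: it does \emph{not} try to prove that the best expert survives whenever it is sampled. Instead it defines a day $t$ to be \emph{bad} if, had the best expert been sampled at time $t$, it would later be evicted, and proves in Lemma~\ref{lem:bad-days} (via an amortized/interval-packing argument) that $|\bad|\le 8M/\delta$. It then splits the round count $R$ as $B+G$, where $B$ (rounds beginning on bad days) is bounded deterministically by $|\bad|$, and $G$ (rounds beginning on good days) is bounded by a geometric tail because a good-day sample of the best expert terminates the resampling process. The hypothesis $M\le \delta^2 T/(128\beta\ln n)$ enters precisely to make $B$ small. Your proposal has no analogue of this decomposition, and without it the adversarial case does not go through. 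To repair your proof you would need either to adopt the bad-day decoupling (at which point you are essentially reproving the paper's argument for your per-slot variant, with the added burden that per-slot eviction timing is even more entangled with the cost matrix than whole-pool resampling), or to find a genuinely new reason why an adversarially timed window cannot evict the benchmark, which I do not see.

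A secondary issue: your exploitation bound applies the multiplicative-weights regret guarantee ``from day $t_0$ on'' with the best expert starting at weight $1/K$, but the other slots of $P$ continue to be evicted and refilled after $t_0$, with each refill's weight reset to $1/K$. The standard regret bound (Theorem~\ref{thm:mw} / Definition~\ref{def:seq-pred-alg}) assumes a fixed set of experts over the horizon; adapting it to a pool whose membership changes mid-stream needs to be done explicitly, and accumulates an $O(\beta\ln K/\varepsilon)$ term per reinitialization, which is why the paper is careful to bound the number of rounds $R$ and pay $R\cdot O(\beta\ln k)$ in its cost accounting.
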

Theorem~\ref{thm:main} provides interesting guarantees across multiple regimes of parameters for arbitrary-order streams, i.e., worst-case streams. 
First, our algorithm provides space-accuracy tradeoffs that can achieve a sublinear number of mistakes. 
Specifically, the number of mistakes made by our algorithm nearly matches the asymptotic regret bound of the multiplicative weights algorithm~\cite{gravin_tight_2017} for corresponding values of $\delta=O\left(\sqrt{\frac{\log^2 n}{T}}\right)$. 
For constant $M=O(1)$, a natural algorithm can achieve $\delta$ regret simply by iteratively choosing ``pools'' of  $\widetilde{O}\left(\frac{n}{T}\right)$ experts for ``rounds'' until the best expert is identified, where a round lasts until the pool becomes empty. Here, each expert is removed immediately after an incorrect prediction, and throughout a round, the algorithm makes a prediction by majority vote. 
We remark that the space bound of Theorem~\ref{thm:main} matches this natural algorithm in the regime where $M=O(1)$, even when the algorithm is oblivious to the value of $M$, and generalizes to handle larger values of $M$. 
In contrast, the natural algorithm uses $\tilde{O}\left(\frac{Mn}{\delta T} \right )$ space to achieve $\delta$ regret for larger values of $M$.  
In particular for constant $\delta$, our algorithm guarantees correctness on a constant $1-\delta$ fraction of days using $\widetilde{O}\left(\frac{n}{T}\right)$ space, even if the best expert is incorrect on $O\left(\frac{T}{\log^2 n}\right )$ days, i.e., the best expert makes almost a linear number of mistakes. 
Notably, this worst-case upper bound remains agnostic to the number of possible answers in the discrete decision setting, where the set of all possible answers is a finite set. 
Finally, we remark that Theorem~\ref{thm:main} uses \emph{less space} than the lower bound of Theorem~\ref{thm:LowerBound} (recall that $\delta < 1$), revealing that the hardness of Theorem~\ref{thm:LowerBound} stems from the best expert making a ``large'' number of mistakes. 

On the other hand, Theorem~\ref{thm:main} requires that the best expert incurs at most $O\left(\frac{T}{\log^2 n}\right)$ cost, even for a constant factor approximation. 
It is an interesting open question what regret bounds are achievable on arbitrary-order streams when the best expert is allowed to incur a constant fraction of errors, i.e., cost $O(T)$.  We present the formal version of Theorem~\ref{thm:main} as Theorem~\ref{thm:general-prediction-alg} in Section~\ref{sec:maj}. 

\subsection{Standard Sequential Prediction Algorithms}
At a high level, our memory-constrained algorithms for the experts problem sample subsets or ``pools" of the $n$ experts, and use standard sequential prediction techniques on these subsets.  For the sake of modularity, we will formulate this as a black-box call to a sequential prediction algorithm (without memory constraints).  In this section we define the properties that we require from such a sequential prediction algorithm, and suggest some candidate algorithms.

\begin{definition}[Sequential prediction algorithm]
We say that an algorithm \(\mathcal{A}\) is a valid \emph{sequential prediction algorithm} if, given an instance of the online learning with experts problem such that the cost of expert \(i\) on day \(t\), \(c_i^{(t)} \in [0, 1]\) for all \(i \in [n] and t \in [T]\), and a target parameter \(\eps\), we have that
\[\mathbf{E}[\text{cost of } \mathcal{A}] \le (1 + \eps) \left[\sum_{t=1}^T c_i^{(t)}\right] + \frac{\beta \ln n}{\eps},\]
for some fixed constant \(\beta\), and \(\mathcal{A}\) maintains \(O(n)\) words of memory. \label{def:seq-pred-alg}
\end{definition}

Note that the constraint that \(\mathcal{A}\) maintains \(O(n)\) words of memory is not very restrictive, as this allows \(\mathcal{A}\) to maintain the running total cost of each expert, which is all we need to run many of the best possible algorithms for this problem.

Perhaps the most well-studied such algorithm is the multiplicative weights algorithm.  Here, we introduce this algorithm in the formulation of \cite{arora2012multiplicative} (Algorithm \ref{alg:MW}).

\begin{theorem}[Theorem 2.1 in \cite{arora2012multiplicative}]
\label{thm:mw}
Suppose $c_j^{(t)}\in[0,1]$ for all $j\in[n], t\in[T]$ and $\eps\le\frac{1}{2}$. 
Then the multiplicative weights algorithm (Algorithm~\ref{alg:MW}) satisfies for each $i\in[n]$,
\[\sum_{t=1}^T\sum_{j=1}^n c_j^{(t)}p_j^{(t)}\le(1+\eps)\left[\sum_{t=1}^T c_i^{(t)}\right]+\frac{\ln n}{\eps},\]
(where the left-hand side of this inequality is the expected cost of the multiplicative weights algorithm, and the right-hand side is in terms of the cost of some particular expert.)
\end{theorem}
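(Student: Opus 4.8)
The plan is to reproduce the standard potential-function proof, exactly as in \cite{arora2012multiplicative}. Fix an arbitrary expert $i \in [n]$. Let $w_j^{(t)}$ denote the unnormalized weight that Algorithm~\ref{alg:MW} maintains for expert $j$ at the start of day $t$, so that $w_j^{(1)} = 1$ for all $j$, the sampling probabilities are $p_j^{(t)} = w_j^{(t)}/\Phi^{(t)}$ where $\Phi^{(t)} := \sum_{j=1}^n w_j^{(t)}$, and the update is $w_j^{(t+1)} = w_j^{(t)}\bigl(1 - \eps c_j^{(t)}\bigr)$. Since $c_j^{(t)} \in [0,1]$ and $\eps \le \frac{1}{2}$, every factor $1 - \eps c_j^{(t)} \ge \frac{1}{2} > 0$, so all weights and all potentials stay positive; the left-hand side of the claimed inequality is exactly the expected cost of the algorithm, namely $\sum_{t=1}^T \sum_{j=1}^n p_j^{(t)} c_j^{(t)}$.

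First I would derive the one-step recursion for the potential. Expanding the update and pulling out a factor of $\Phi^{(t)}$,
\[
\Phi^{(t+1)} = \sum_{j=1}^n w_j^{(t)}\bigl(1 - \eps c_j^{(t)}\bigr) = \Phi^{(t)}\Bigl(1 - \eps \sum_{j=1}^n p_j^{(t)} c_j^{(t)}\Bigr) \le \Phi^{(t)}\exp\Bigl(-\eps \sum_{j=1}^n p_j^{(t)} c_j^{(t)}\Bigr),
\]
where the last step uses $1 + x \le e^x$. Iterating this from $t = 1$ to $t = T$ and using $\Phi^{(1)} = n$ gives
\[
\Phi^{(T+1)} \le n\,\exp\Bigl(-\eps \sum_{t=1}^T \sum_{j=1}^n p_j^{(t)} c_j^{(t)}\Bigr).
\]

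Next I would lower-bound the same potential by a single surviving weight: $\Phi^{(T+1)} \ge w_i^{(T+1)} = \prod_{t=1}^T \bigl(1 - \eps c_i^{(t)}\bigr)$, which is legitimate because all weights are positive. Taking natural logarithms of the chain $w_i^{(T+1)} \le \Phi^{(T+1)} \le n\exp\bigl(-\eps \sum_{t}\sum_{j} p_j^{(t)} c_j^{(t)}\bigr)$ yields
\[
\sum_{t=1}^T \ln\bigl(1 - \eps c_i^{(t)}\bigr) \le \ln n - \eps \sum_{t=1}^T \sum_{j=1}^n p_j^{(t)} c_j^{(t)}.
\]
Then I would invoke the elementary bound $\ln(1-x) \ge -x - x^2$, valid for $x \in [0,\frac{1}{2}]$, applied with $x = \eps c_i^{(t)} \in [0,\frac{1}{2}]$, together with $(c_i^{(t)})^2 \le c_i^{(t)}$ (since costs lie in $[0,1]$), to replace the left-hand side by $-\eps\sum_t c_i^{(t)} - \eps^2\sum_t c_i^{(t)}$. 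Rearranging and dividing through by $\eps > 0$ gives $\sum_t \sum_j p_j^{(t)} c_j^{(t)} \le (1+\eps)\sum_t c_i^{(t)} + \frac{\ln n}{\eps}$, which is the claimed inequality.

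The only point requiring care is verifying the two scalar estimates with the correct constants — in particular, that $\eps \le \frac{1}{2}$ is precisely what is needed to keep $x = \eps c_i^{(t)}$ in the range $[0,\frac{1}{2}]$ on which $\ln(1-x) \ge -x - x^2$ holds — while everything else is routine bookkeeping. Since the statement is a verbatim restatement of Theorem~2.1 of \cite{arora2012multiplicative}, the ``proof'' amounts to recalling their argument.
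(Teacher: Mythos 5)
Your proof is correct, and it reproduces the standard potential-function argument from the cited reference \cite{arora2012multiplicative}; the paper itself does not reprove this theorem but simply states it with that citation, so there is no distinct in-paper argument to compare against. The two scalar estimates you flag ($1+x\le e^x$ and $\ln(1-x)\ge -x-x^2$ on $[0,\tfrac12]$) are indeed the only delicate points, and your use of $\eps\le\tfrac12$ together with $c_i^{(t)}\in[0,1]$ to keep $\eps c_i^{(t)}\in[0,\tfrac12]$ is exactly right.
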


\begin{algorithm}[!htb]
    \caption{The multiplicative weights algorithm.}
	\label{alg:MW}
    \KwIn{Number $n$ of experts, number $T$ of rounds, parameter $\eps$}
    \begin{algorithmic}[1]
    \STATE{Initialize $w_i^{(1)}=1$ for all $i\in[n]$.}
    \FOR{$t\in[T]$}
    \STATE{$p_i^{(t)}\gets\frac{w_i^{(t)}}{\sum_{i\in[n]}w_i^{(t)}}$}
    \STATE{Follow the advice of expert $i$ with probability $p_i^{(t)}$.}
    \STATE{Let $c_i^{(t)}$ be the cost for the decision of expert $i\in[n]$.}
    \STATE{$w_i^{(t+1)}\gets w_i^{(t)}\left(1-\eps c_i^{(t)}\right)$}
    \ENDFOR
    \end{algorithmic}
\end{algorithm}

Another example of a sequential prediction algorithm is the follow the perturbed leader algorithm due to \cite{FPLKalaiVempala05} (Algorithm \ref{alg:FPL}), which maintains running totals of the cost incurred by each expert.  
On each day, the algorithm randomly ``perturbs'' the costs, and then follows the prediction of the expert with the lowest perturbed cost.   

\begin{theorem}[Theorem 1.1 in \cite{FPLKalaiVempala05} applied to the experts problem]
Suppose \(c_i^{(t)} \in [0, 1]\) for all \(i \in [n], t \in [T],\) and \(\eps \le 1\).  The \(\fpl\) (Algorithm \ref{alg:FPL}) satisfies for each \(i \in [n]\),
\[ \mathbf{E} \left[\text{cost of }\fpl (\varepsilon) \right] \le (1 + \varepsilon) \left[ \sum_{t=1}^T c_i^{(t)} \right] + \frac{8(1 + \ln n)}{\eps} .\]
For \(n \ge 3\), this satisfies the conditions of Definition \ref{def:seq-pred-alg} for \(\beta = 16\). 
\end{theorem}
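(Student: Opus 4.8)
The strategy is to treat Theorem~1.1 of \cite{FPLKalaiVempala05} as a black box, recognize the experts problem as a particular instance of the online linear optimization setting that theorem is stated for, and then do the (elementary) bookkeeping needed to match Definition~\ref{def:seq-pred-alg}. Recall that \cite{FPLKalaiVempala05} considers a decision set $\mathcal{D}\subseteq\mathbb{R}^d$: on day $t$ the adversary reveals a cost vector $s^{(t)}\in\mathbb{R}^d$, the learner pays $d^{(t)}\cdot s^{(t)}$ for playing $d^{(t)}\in\mathcal{D}$, and the benchmark is $\min_{d\in\mathcal{D}}\sum_{t=1}^T d\cdot s^{(t)}$. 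The experts problem is exactly the case $\mathcal{D}=\{e_1,\dots,e_n\}$, the standard basis of $\mathbb{R}^n$ (playing $e_i$ = following expert $i$), and $s^{(t)}=(c_1^{(t)},\dots,c_n^{(t)})$, so that $e_i\cdot s^{(t)}=c_i^{(t)}$ and $\min_{d\in\mathcal{D}}\sum_t d\cdot s^{(t)}=\min_{i\in[n]}\sum_t c_i^{(t)}$. Under this identification, \fpl (Algorithm~\ref{alg:FPL})---which perturbs each expert's cumulative cost by an independent random amount each day and then follows the expert of least perturbed cumulative cost---is precisely the starred variant $\mathrm{FPL}^*$ of \cite{FPLKalaiVempala05}.

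First I would read off the instance-specific parameters that enter their bound: there are $n$ decisions, every single-day cost satisfies $|d\cdot s^{(t)}|=c_i^{(t)}\in[0,1]$, and the $L_1$ magnitudes of the basis vectors $e_i$ (hence the diameter/range quantities in their theorem) are all $O(1)$. Carrying out this substitution in Theorem~1.1 of \cite{FPLKalaiVempala05} and simplifying the resulting absolute constants gives, for each $i\in[n]$ and every $\eps\le 1$, $\mathbf{E}[\text{cost of }\fpl(\eps)]\le(1+\eps)\bigl[\sum_{t=1}^T c_i^{(t)}\bigr]+\frac{8(1+\ln n)}{\eps}$, which is the displayed inequality.

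It then remains to verify the conditions of Definition~\ref{def:seq-pred-alg}. The multiplicative factor is $(1+\eps)$, as required. The algorithm only ever stores one running total $\sum_{t'\le t}c_i^{(t')}$ per expert (the perturbation is resampled each day and needs no persistent storage), so it uses $O(n)$ words. Finally, for the additive term we only need $8(1+\ln n)\le\beta\ln n$; since $n\ge 3$ gives $\ln n\ge\ln 3>1$, we have $1+\ln n\le 2\ln n$, hence $8(1+\ln n)\le 16\ln n$, so $\beta=16$ suffices.

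The only place that requires genuine care is the second paragraph: one must invoke the \emph{starred} version $\mathrm{FPL}^*$---plain FPL of \cite{FPLKalaiVempala05} yields purely additive regret, not the $(1+\eps)$-multiplicative form Definition~\ref{def:seq-pred-alg} demands---and one must track the absolute constants through their argument so that the overhead is genuinely bounded by $\frac{8(1+\ln n)}{\eps}$ on this instance. A minor secondary point is that $\mathrm{FPL}^*$'s guarantee is stated against an oblivious adversary, which is exactly the regime of Definition~\ref{def:seq-pred-alg}; everything downstream of that is the one-line inequality in the third paragraph.
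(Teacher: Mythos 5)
Your proposal is correct and takes essentially the same route as the paper, which simply cites Theorem~1.1 of \cite{FPLKalaiVempala05} and notes the $\beta=16$ bookkeeping. You make the implicit instantiation explicit—identifying $\mathcal{D}=\{e_1,\dots,e_n\}$, $s^{(t)}=(c_1^{(t)},\dots,c_n^{(t)})$, and correctly flagging that the \emph{starred} variant $\mathrm{FPL}^*$ (not plain FPL) is what yields the $(1+\eps)$ multiplicative form—and your derivation $8(1+\ln n)\le 16\ln n$ for $n\ge 3$ matches the paper's claim.
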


\begin{algorithm}[!htb]
    \caption{The follow the perturbed leader algorithm (\(\fpl\)) from \protect\cite{FPLKalaiVempala05}, instantiated for the experts problem.}
	\label{alg:FPL}
    \KwIn{Number $n$ of experts, number $T$ of rounds, parameter $\eps$}
    \begin{algorithmic}[1]
    \FOR{$t\in[T]$}
        \FOR{$i \in [n]$}
        \STATE{Draw \(r\) from a standard exponential distribution, and set \(p_i^{(t)} = 2r/\eps\) with probability \(\frac{1}{2}\), and \(p_i^{(t)} = -2r/\eps\) otherwise}
        \ENDFOR
        \STATE{Follow the expert \(i\) for whom the sum of their total cost so far and \(p_i^{(t)}\) is the lowest}
    \ENDFOR
    \end{algorithmic}
\end{algorithm}

\subsection{Related Work on the Experts Problem}
While the space complexity of the experts problem in the models described above has not been previously studied, many related problems have been studied in the streaming model, and there are many results for the problem when space is not constrained. We discuss how the hardness, proof techniques, and algorithms for these problems relate to the space-constrained experts problem.

\paragraph{Identifying an Approximately Good Expert is Harder.} 
A closely related problem is the expert identification problem, where the algorithm must output the index of an expert that does approximately as well as the best expert at the end of the stream. 
A natural strategy might be to use a heavy-hitter algorithm to identify the best expert. 
The \emph{heavy hitters} problem is a classical problem that has been well-studied in the streaming model.  In the \(\eps\)-heavy hitters problem, the algorithm sees a stream of elements from \([n] = \{1, 2, \ldots, n\} \).  At the end of the stream, it must output any item that accounts for at least an \(\eps\) -fraction of the \(\ell_p\) norm, for some real value $p \geq 1$. 
One can consider the experts problem as a data stream of insertions to a vector \(V \in \naturals^n\), where the value at index \(i\) is the number of days the \(i\)-th expert has been correct. Running an \(\ell_p\) \(\varepsilon\)-heavy hitters algorithm over this stream would return the experts that are correct on at least \(\varepsilon \norm{V}_p\) days (see, e.g.,  \cite{CCF04,CM05,braverman2016beating,LNNT16,braverman2017bptree,bhattacharyya2018optimal} and the references therein). Another variant of the heavy hitters problem is the \(\varepsilon\)-maximum problem, which outputs a value, rather than the the index, that is within \(\varepsilon \norm{V}_1\) of the true maximum element (see, e.g., \cite{bhattacharyya2018optimal}). 
If we were to select an input where the best expert makes no mistakes, while all other experts are correct on \(T / 2 + 1\) days, \(\varepsilon\) would need to be on the order of \(O(1 / n)\) to find the best expert, and any heavy hitters algorithm for finding any good expert would require \(\Omega(n)\) space, given that their memory usages depends at least linearly on \(1/\varepsilon\). 

More generally, we note that there is a reduction from the well-studied two player set disjointness communication problem (see, e.g., \cite{chakrabarti2003near,bar2004information,weinstein2015simultaneous,braverman2013tight}), in which Alice is given a set $X\in\{0,1\}^n$ and Bob is given a set $Y\in\{0,1\}^n$ and their goal is to distinguish whether $|X\cap Y|=0$ or $|X\cap Y|=1$. 
Observe that Alice can create a stream of $n$ days so only expert $i$ can be correct on day $i$ and furthermore, expert $i$ is correct on day $i$ if and only if $X_i=1$. 
Bob similarly creates a stream of $n$ days so that the best expert on their combined stream is $X\cap Y$ if $|X\cap Y|=1$, thus allowing Bob and Alice to solve the set disjointness problem, by using another round of communication to check whether the best expert is indeed in both $X$ and $Y$. Note that
we can extend this to $T$ days
by copying Alice $T/2$ times and copying Bob $T/2$ times, and placing all copies of Alice before all copies of Bob. 
Since set disjointness requires total communication $\Omega(n)$ even for randomized protocols, this reduction immediately implies there is an \(\Omega(n)\) lower bound for identifying the best expert, even for randomized streaming algorithms and even if the best expert makes no mistakes while any other expert is correct on at most half of the total number of days.
These results show a distinction between the hardness of finding an expert that does well, relative to the best expert, and the hardness of predicting well relative to the best expert. 
We study the latter problem, and break this $\Omega(n)$ lower bound by obtaining an $\widetilde{O}(n/T)$ upper bound for the setting of parameters above. 

\paragraph{Multiplicative Weights.} Multiplicative weights is a meta-algorithm that maintains weights over a set of objects.  On iteration $i + 1$, the algorithm sets the weight of item $x$ according to the update rule $w_x^{(i + 1)} \leftarrow w_x^{(i)} \left(1 - \varepsilon P_{x}^{(i + 1)}\right)$, where $\varepsilon$ is the learning rate, and $P_x^{(i + 1)}$ is some penalty applied to $x$.
Forms of the multiplicative weights algorithm have been independently discovered for problems in many fields 
from as early as the 1950s \cite{Brown51}.  Many of these problems generalize to the discrete prediction with expert advice problem, first analyzed by Littlestone and Warmuth \cite{littlestone_weighted_1989}, or the continuous online learning with expert advice problem, and for both problems, the multiplicative weights algorithm achieves asymptotically optimal regret \cite{ordentlich-cover, cover-universal-data-compression}.  
Notable applications of the multiplicative weights algorithms include AdaBoost \cite{freund1997decision} and approximately solving zero-sum games \cite{freund1999adaptive}.  
Multiplicative weights can also be used to efficiently approximate a wide class of linear programs and semi-definite programs, which have given fast approximations for a broad range of $\textsf{NP}$-complete problems, including the traveling salesperson problem, some scheduling problems, and multi-commodity flow \cite{plotkin-fast-packing, garg-fast-packing}.
Recent work has analyzed the multiplicative weights algorithm for stochastic experts \cite{amir-corrupted-experts, rakhlin-predictable-sequences, sani-easy-data}, and bounded the regret in terms of the variance of the best expert \cite{cesa-bianchi-improved-second-order, hazan-regret-bounded-by-variance}.  There has also been much recent work in adaptively optimizing the learning rate \cite{koolen-learning-learning-rate, foster-adaptive-online-learning, chiang-gradual-variations}.  
For a more complete overview of the history and applications of multiplicative weights, the reader is referred to the surveys \cite{arora2012multiplicative, goos_-line_1998, cesa2006prediction,  foster-econ-survey}.

\paragraph{Follow the Perturbed Leader.}  The follow the perturbed leader (FPL) algorithm (Algorithm~\ref{alg:FPL}), due to Kalai and Vempala \cite{FPLKalaiVempala05}, achieves similar guarantees to multiplicative weights for the experts problem, and can be efficiently generalized to a large set of online problems.  
They define a linear generalization of these online problems.  Consider a set of possible decisions \(\mathcal{D} \subset \mathbb{R}^n\) and a set of possible events \(\mathcal{S} \subset \mathbb{R}^n\).  On each day, the algorithm chooses a decision \(d_t \in \mathcal{D}\).  Then the event of that day \(s_t \in \mathcal{S}\) is revealed, and the algorithm incurs cost \(d_t \cdot s_t\).  The total cost of the algorithm, \(\sum_{t} d_t \cdot s_t\), is evaluated against the best static decision in hindsight, \(\min_{d \in \mathcal{D}} d \cdot \sum_{t} s_t\).  We can model the standard experts problem as an instance of this problem where \(n\) is the number of experts, \(\mathcal{S}\) is made up of vectors with entries between 0 and 1, and \(\mathcal{D}\) is the set of vectors for which one index is 1, and all others are 0.  For each day \(t\), the FPL algorithm then calculates a random perturbation vector \(p_t \in \mathbb{R}^n\). 
Then it follows the decision that optimizes $\min_{d \in \mathcal{D}} \quad d \cdot \left(p_t + \sum_{i \le t} s_i \right)$. 
In the case of the standard experts problem, this is generating a perturbation for each expert, and then following the expert for whom the sum of their cost and their perturbation is the smallest.

This linear generalization means that for some structured problems, this algorithm is computationally more efficient than multiplicative weights.
However, like multiplicative weights, it requires access to the running cost \(\sum_{t} s_t\).  For the general experts problem, this is an \(n\) dimensional vector, so this still requires \(\Omega(n)\) memory.

\paragraph{Online Convex Optimization.} 
A common setting in online convex optimization is to minimize the regret, defined by $\sum_{t=1}^T f_t(x_t)-\min_{x\in X}\sum_{t=1}^T f_t(x)$, where $X$ is some convex set and the functions $f_1,\ldots,f_T:X\to\mathbb{R}$ are convex cost functions. 
A special case of online convex optimization is when the goal is to minimize a convex function $f$ over a convex domain $X$. 
If the algorithm is given oracle access to a noisy gradient, i.e., an oracle that outputs an unbiased estimate to the gradient with ``small'' variance, then stochastic gradient descent is known to have expected regret at most $O(1/\sqrt{T})$. 
For a more precise statement, see~\cite{Shamir16,Hazan16,TaiSBV18}. 

\paragraph{Multi-Armed Bandits.} 
Space complexity has been considered in the related problem of \emph{multi-armed bandits}. 
The multi-armed bandits problem is a classic problem in reinforcement learning, in which there are some number of \emph{arms} and each arm has a fixed reward distribution that can be sampled from at each time step.  
\cite{liau2018stochastic} has shown that only constant space is required to achieve regret that is within an \(O(1 / \Delta)\) factor of the optimal regret, where \(\Delta\) is the difference between the mean reward of the best and second best arms. This space efficient algorithm for the bandits problem, however, is not applicable to the experts problem, since it does not capture the adversarial nature of sequential prediction, i.e.,\ the performance of an expert changes on a daily basis, while the reward distribution of each arm is static in the standard streaming model. Nor does it leverage the ability of the algorithm to view the results of all experts on each day. Hence, expert algorithms are not comparable to bandit algorithms, since an expert algorithm has information about all ``arms'' on each day. \cite{assadi_exploration_2020} analyze the problem of finding the best arm with optimal sample complexity in the streaming arm model, where the algorithm must save an arm to sample from it. They prove a tight bound of requiring \(\Theta(k)\) arms of space to find the top \(k\) arms. These results show that solving the experts problem is fundamentally harder than the multi-armed bandits problem, since solving the bandits problem with low regret can be done in constant space.

\paragraph{Learning in Streams.} There has been a substantial amount of work that analyzes the tradeoff between space and sample complexity for statistical learning and estimation problems in the streaming model, where the stream elements are assumed to be samples drawn i.i.d.\ from a fixed distribution. 
A series of work has studied the problem of inferring the index of a row sampled from a matrix \cite{garg2017extractor,raz2017time,garg2019time}, and the parity learning problem \cite{raz2018fast,Kol2017}.
More recent work has also analyzed distribution testing relevant to cryptographic settings, such as lower bounds in the streaming model for testing against Goldreich's pseudorandom generator \cite{garg2020time}. 
Other lines of work examine more specific learning problems in the streaming model, such as finding correlations in multivariate data \cite{Dagan2018}, collision probability estimation, finding the connectivity of an undirected graph, and rank estimation \cite{crouch2016}. In the distributed setting, communication lower bounds have been analyzed for convex optimization \cite{Arjevani2015}. 
Our study of the experts problem, however, makes no assumptions on the distributions each element in the stream is drawn from, and is a prediction rather than an inference problem.
Space usage was considered by \cite{kar_generalization_2013} in their analysis of pairwise losses in online learning, but not in the general sense of space complexity. They consider a modified form of regret under the pairwise loss with respect to a finite buffer of previous stream items rather than all previous items in the stream. This problem, however, is very different from the experts setting of proving space complexity bounds for any algorithm, while still comparing the performance of the algorithm to that of the best expert. 

\subsection{Overview of our Techniques}
\label{sec:overview}

\subsubsection{Lower Bound by Reducing to Distributed Detection}
We first give an overview of the lower bounds that we present in Section~\ref{sec:lb}. 
We create a new problem that combines \(n\) instances of the distributed detection problem~\cite{braverman2016communication}, with each instance corresponding to an expert --- we call this new problem \diffdist. 
In essence, \diffdist\ is the problem of distinguishing between:
\begin{enumerate}
    \item Every expert flips an independent fair coin to determine its prediction, i.e., each expert predicts correctly with probability \(\frac{1}{2}\).
    \item A single expert predicts correctly with probability \(\frac{1}{2} + O(\avgregret)\) on each day, and every other expert predicts correctly with probability \(\frac{1}{2}\).
\end{enumerate}
We note that the predictions of each expert form a separate instance of the distributed detection problem, each of which has a randomized communication lower bound of \(\Omega(\frac{1}{\avgregret^2})\). 
We then use a careful combination of existing techniques, e.g.,~\cite{zhang_informationtheoretic_lower_2013,braverman2016communication,GargMN14,Raginsky16} to show an \(\Omega\left(\frac{n}{\delta^2}\right)\) randomized communication lower bound for the \diffdist\ problem.  

After having shown a randomized communication lower bound for the \diffdist\ problem, one of our key ideas is then to introduce a randomized reduction from the $\diffdist$ problem, so that each player corresponds to a day, and each expert prediction is the corresponding bit in the \diffdist\ problem. 
We would like to say that the single expert with higher probability of correctness translates to a separation in the cost of the algorithm for online learning with expert advice. 
However, even if the experts are incorrect, it seems possible that an algorithm could ignore the experts and still have high accuracy. 
For example, if we let the the correct answer be 1 for every day, an experts algorithm could be perfectly accurate in case (1) simply by predicting 1 everyday.
Thus, we create a mask for each day by setting it to be a random fair coin flip and we XOR both the outcome of each day, as well as the output of each expert, by the mask. 
Hence in case (1), the algorithm cannot have accuracy significantly higher than \(\frac{1}{2}\) with constant probability, regardless of its output sequence, since the expert predictions defined via our masking, and the correct answer, all remain independent fair coin flips. 
On the other hand, in case (2), the algorithm must be correct on a good fraction of days to keep up with the best expert. 
This results in a separation in performance between cases (1) and (2). 
Our hard instance is inherently distributional, and the choice of hard distribution is crucial to ensure that the algorithm has no information in case (1) about the correct answer on a future day, regardless of the past. 

\subsubsection{Upper Bound for Random-Order Streams}
We first consider upper bounds for the online learning with experts problem in random-order streams, corresponding to the results in Section~\ref{sec:mw}.  
For simplicity, we will describe an overview of our algorithms and proof techniques in the setting where the costs are restricted to \(\{0, 1\}\), i.e., an expert either makes a mistake or does not. 
Due to space constraints on the algorithm, we can only afford to sample a small number of experts in each round. 
Thus our algorithm (in Algorithm~\ref{alg:SublinearAlgorithm:MW}) initializes a pool of $k=O\left(\frac{n\log n}{\delta^2 T}\right)$ different experts from $[n]$ at the beginning of each round. We then run a standard sequential prediction algorithm on this restricted pool of experts. 
If this pool of experts makes too many total mistakes {\it in expectation}, then we resample a new pool of experts and run the sequential prediction algorithm on this new pool.
Note that we can explicitly compute the expectation of the pool, since we have access to the expert predictions in the pool, their corresponding weights, and the outcomes across each day over the duration of the round. 

In the random-order model, we can show that if the best expert does not make too many mistakes, then the sequential prediction algorithm will perform well upon sampling the best expert. 
Thus, to prove our algorithm has low expected regret, we must demonstrate that one of two cases must be true: (1) the algorithm does not sample the best expert because the algorithm has already been performing well or (2) the algorithm samples the best expert ``early'' enough in the stream and the best expert is not subsequently discarded. 
To handle the first case, we observe that we only delete the pool of experts if the sequential prediction algorithm is performing poorly, so if the total number of rounds is low, there must be rounds with significantly long duration and also sufficiently high expectation. 
Thus the algorithm will perform well in expectation. 

\paragraph{Analysis of Pool Selection Times.} 
However, the second case foreshadows an issue in the analysis: if the best expert is never discarded from the pool, then the best expert can only be added to the pool in the last round. 
Thus, even if we condition on the entire algorithm using $R$ rounds for some integer $R>0$, the probability that the best expert is not added to the pool in round $R$ may be significantly larger than the probability that the best expert is added to the pool in round $R$. 
This issue is further compounded by the fact that once a pool is selected, the day on which the next round begins is completely deterministic and possibly adversarial, so it does not suffice to, for example, consider the probability the best expert is added to the pool on a random day. 

We overcome this challenge by ``decoupling'' the number of rounds from the sampling of the best expert. 
What we mean by this is as follows. We consider the distribution of all days on which new rounds begin. 
We can simulate the sampling process with a sequence of times $t_1,t_2,\ldots$ so that each $t_i$ is drawn from the distribution of possible times that round $i$ can end, conditioned on the entire history of the process up to time $t_{i-1}$. 
Observe that this is a well-defined sequential process for defining each term $t_i$, in the sense that to obtain $t_i$,  we can simply draw from a distribution of possible durations for a round and then add the duration to $t_{i-1}$. 

We observe that due to the distributional properties of the random-order model, once the algorithm samples the best expert, then with high probability the sequence of rounds will terminate. 
Moreover, since the algorithm performs sampling with replacement between rounds, the probability that the best expert is added to the pool on each drawing of $k$ experts is the same across all rounds. 
Thus, the probability distribution for the total number of rounds can be related to a geometric distribution --- if the algorithm uses $R$ rounds, then the best expert cannot be sampled in the first $R-1$ rounds with high probability. 
This allows us to show that if the algorithm samples the best expert, there was likely a small number of rounds and therefore the total cost of the algorithm is not too high. 

\paragraph{Unknown Error for the Best Expert.}
It remains to remove the assumption of knowing the error rate for the best expert, for which we again use the promise of the random order streaming model, which allows us to use short prefixes of days in order to obtain an estimate of the error rate.  

To that end, we note that it suffices to acquire a $(1+O(\delta))$-approximation to the number of mistakes made by the best expert, since the regret will only be increased by $O(\delta)$ if we have such an estimate. 
To find a $(1+O(\delta))$-approximation to the number of mistakes of the best expert, we initialize our guess $\gamma$ for the mistakes to be $\frac{T}{2}$. 
We then split the stream into epochs of length $O\left(\frac{\delta T}{\log\frac{1}{\delta}}\right)$ and perform a binary search by repeatedly updating $\gamma$ depending on whether the current guess is too high or too low based on the performance of the best expert in the epoch. 
Thus by epoch $k$, our guess $\gamma$ is within a $\left(1+\frac{1}{2^k}\right)$-factor of the actual number of mistakes made by the best expert. 
Hence it suffices to use $O\left(\log\frac{1}{\delta}\right)$ epochs to update $\gamma$, which can only increase the total regret additively by $\delta$, since each epoch has length $O\left(\frac{\delta T}{\log\frac{1}{\delta}}\right)$. 

\subsubsection{Upper Bound for Arbitrary-Order Streams.}
We consider arbitrary order streams in Section~\ref{sec:maj}. 
Unfortunately, when the stream no longer arrives in a random order, then we again have no guarantees on how the best expert will perform if it is sampled at any given time. 
We observe that if the costs are $\{0,1\}$ for each day, then we can attempt to emulate the simpler majority elimination algorithm by removing all incorrect experts on a day on which the algorithm is incorrect. 
Thus, our starting point is an algorithm that initializes a pool of $k=O\left(\frac{n\log n}{T\delta}\right)$ different experts from $[n]$ at the beginning of each round and removes incorrect experts on incorrect days until the pool is depleted, at which point the next round begins and a new pool of $k$ different experts from $[n]$ is initialized. 
On each day, the algorithm outputs the majority vote of the experts in the pool. 

However, removing all incorrect experts would significantly increase the chance that the best expert is removed from the pool, even over multiple rounds. 
For example, if the best expert makes a constant number of mistakes, then it is possible that it only survives a constant number of days before it is removed from the pool. 
If all other experts perform poorly, then the algorithm could only be correct on a constant number of days in every group of $O\left(\frac{T\delta}{\log n}\right)$ days, which is subconstant even if $\delta$ is a constant. 
Therefore, we should relax the conditions for removal of experts; a natural choice is to only remove experts that have been incorrect for $\frac{\delta}{4}$ fraction of the time since the pool has been initialized, \emph{regardless of the outcome of each day}. 
The intuition is that all experts make errors on at most  a $\frac{\delta}{4}$ fraction of the days in the pool, so the algorithm should make errors on at most an $O(\delta)$ fraction of the days over the pool. 

\paragraph{Accumulation of Errors.}
However, this surprisingly fails because it allows experts to ``build-up'' future errors by having good accuracy on previous days. 
For example, suppose we have a pool of $100$ experts and we choose to eliminate experts that are wrong on half of the days since the pool has been initialized. 
Suppose all experts are correct on the first $50$ days but then from day $51$ to day $100$, exactly half the experts are wrong on every single day. 
On day $100$, half of the experts are eliminated and the algorithm has made $50$ mistakes, but the remaining experts have not made any mistakes. 
Thus even if half of the remaining $50$ experts are wrong on \emph{every single day} from day $101$ to day $200$, they will not be eliminated until day $200$, which causes the algorithm to err on every single day during that interval. 
We can continue this geometric approach by allowing half of the experts to be wrong on an interval with double the length, e.g., $13$ of the remaining $25$ experts are wrong every single day from day $200$ to day $400$, so that the algorithm will always be incorrect after the first $50$ days, which clearly contradicts the desired claim. 
See Figure~\ref{fig:buildup} for an example of such an accumulation. 

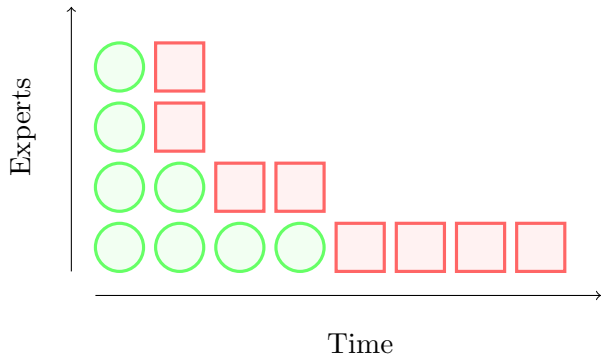
\begin{figure*}[!htb]
\centering
\begin{tikzpicture}[scale=0.8]
\filldraw[color=green!60, fill=green!5, very thick](0,3) circle (0.4);
\filldraw[color=red!60, fill=red!5, very thick](1-0.4,3-0.4) rectangle +(0.8,0.8);

\filldraw[color=green!60, fill=green!5, very thick](0,2) circle (0.4);
\filldraw[color=red!60, fill=red!5, very thick](1-0.4,2-0.4) rectangle +(0.8,0.8);

\filldraw[color=green!60, fill=green!5, very thick](0,1) circle (0.4);
\filldraw[color=green!60, fill=green!5, very thick](1,1) circle (0.4);
\filldraw[color=red!60, fill=red!5, very thick](2-0.4,1-0.4) rectangle +(0.8,0.8);
\filldraw[color=red!60, fill=red!5, very thick](3-0.4,1-0.4) rectangle +(0.8,0.8);

\filldraw[color=green!60, fill=green!5, very thick](0,0) circle (0.4);
\filldraw[color=green!60, fill=green!5, very thick](1,0) circle (0.4);
\filldraw[color=green!60, fill=green!5, very thick](2,0) circle (0.4);
\filldraw[color=green!60, fill=green!5, very thick](3,0) circle (0.4);
\filldraw[color=red!60, fill=red!5, very thick](4-0.4,0-0.4) rectangle +(0.8,0.8);
\filldraw[color=red!60, fill=red!5, very thick](5-0.4,0-0.4) rectangle +(0.8,0.8);
\filldraw[color=red!60, fill=red!5, very thick](6-0.4,0-0.4) rectangle +(0.8,0.8);
\filldraw[color=red!60, fill=red!5, very thick](7-0.4,0-0.4) rectangle +(0.8,0.8);

\draw[->] (-0.4,-0.8) -- (8,-0.8);
\node at (4,-1.6){Time};
\draw[->] (-0.8,-0.4) -- (-0.8,4);
\node[rotate=90] at (-1.6,2){Experts};
\end{tikzpicture}
\caption{A simple example where for $\delta=\frac{1}{2}$, removing experts at least $1-\delta$ error rate can still lead to a pool with $T-\frac{T}{k}$ errors. In this case, green circles denote correct predictions by experts and red squares denote incorrect predictions.}
\label{fig:buildup}
\end{figure*}

The key to the above counterexample is that experts that are incorrect on later days can cause a larger number of incorrect outputs by the algorithm because these experts were correct on previous intervals. 
At a first glance, it seems we can avoid this issue by instead resetting a timer for the remaining experts each time the size of the pool roughly halves. 
Namely, suppose we define a timer $u$ to first demarcate the beginning of the round. 
Any expert that is inaccurate for at least $\frac{\delta}{4}$ fraction of the days since time $u$ is deleted. 
Each time the size of $P$ decreases by roughly half, the variable $u$ is updated to the new time. 
As before, the current round ends when the pool is completely depleted of experts, at which point the next round begins and a new pool is chosen. 
However, this \emph{still} does not work because now the timer can be set adversarially to always cause the best expert to be deleted. 

Surprisingly, the issue is alleviated if we instead require an even more stringent demand from the experts in the pool. 
Instead of asking for experts to make errors on at most $\frac{\delta}{4}$ fraction of the days in the pool, we instead ask experts to make errors on at most $O\left(\frac{\delta}{\log n}\right)$ fraction of the days in the pool. 
Since the timers are no longer reset, it is once again possible for the best expert to not be deleted. 
Moreover, the extra $O(\log n)$ factor allows us to overcome to build-up of errors in the previous counterexample, because the errors can only accumulate over $O(\log n)$ rounds. 

The intuition for the algorithm is that one of two cases should hold. 
Either there is a small number of rounds, which indicates that the experts in some pool performed well over a large period of time, or there is a large number of rounds, in which case it is likely that the best expert is added to the pool in some round. 
We would like to show that in the latter case, the best expert being added to the pool compels the algorithm to perform well overall. 
The idea is that if we add the best expert to the pool on a \emph{random} day, it is unlikely that the best expert will ever be deleted from the pool and thus the algorithm will have good accuracy. 

\paragraph{Decoupling for Arbitrary-Order Streams.}
Whereas the decoupling argument in the random-order model crucially used the fact that the best expert would not be deleted if it was sampled to the pool, this property no longer holds for arbitrary-order streams. 
For instance, there could be $\Omega(M)$ consecutive days in which the best expert makes a mistake, so that our algorithm will likely delete the best expert during that time. 
Thus, we define ``bad'' times as days on which the best expert would be deleted from the pool if it were sampled on that day. 
Then we can upper bound the total number of rounds by the sum of the number of rounds starting on bad times and the number of rounds starting on good times. 
Since the number of rounds beginning on bad times is upper bounded by the number of bad times, it suffices to upper bound the probability distribution for the number of rounds starting on good days. 
However, if the best expert is sampled on a good day, then it will never be deleted, which terminates the sequence of resamplings. 
Thus, the probability distribution for the number of rounds initiated on good days is upper bounded by a geometric distribution. 
Hence, we can show that the number of rounds initiated on good days and thus the \emph{total} number of rounds is ``low'' with high probability. 
It follows that with high probability, the total number of mistakes by the algorithm must therefore also be low since the algorithm only resamples if its accuracy is poor. 
Thus although these techniques may not be as modular as the random-order streams, our algorithm can achieve high-probability bounds for arbitrary-order streams.

\subsection{Technical Preliminaries}

\subsubsection{Standard Technical Tools}
We use the following concentration inequalities.
\begin{fact}[Hoeffding's inequality \cite{hoeffding_probability_1963}]
Suppose \(X_1, \dots, X_n\) are independent random variables in \([a, b]\). 
Then, for any \(t > 0\),
\[\PPr{|\overline{X} - \expect[\overline{X}]|\ge t}\le \exp\left(-\frac{2 n t^2}{(b - a)^2}\right),\]
where \(\overline{X}=\frac{1}{n}\sum_{i=1}^s X_i\). Thus, for a constant probability \(\alpha \in (0, 1)\):
\begin{align*}
    \prob{|\overline{X} - \expect[\overline{X}]| \geq \sqrt{\frac{\log(1 / \alpha)(b - a)^2}{2n}} } \leq \alpha.
\end{align*}
\end{fact}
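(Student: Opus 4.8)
The plan is to apply the standard exponential-moment (Chernoff) method combined with Hoeffding's lemma. Write $S = \sum_{i=1}^n (X_i - \mathbf{E}[X_i])$, so that $\overline{X} - \mathbf{E}[\overline{X}] = S/n$. For any $s > 0$, Markov's inequality applied to the nonnegative random variable $e^{sS}$ gives $\Pr[S \geq nt] = \Pr[e^{sS} \geq e^{snt}] \leq e^{-snt}\,\mathbf{E}[e^{sS}]$, and by independence of the $X_i$ the moment generating function factorizes: $\mathbf{E}[e^{sS}] = \prod_{i=1}^n \mathbf{E}[e^{s(X_i - \mathbf{E}[X_i])}]$.

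The heart of the argument is Hoeffding's lemma: if $Y$ is a random variable with $\mathbf{E}[Y] = 0$ and $Y \in [a', b']$ almost surely, then $\mathbf{E}[e^{sY}] \leq \exp\!\big(s^2(b'-a')^2/8\big)$ for all $s \in \mathbb{R}$. I would prove this by first using convexity of $v \mapsto e^{sv}$ to bound $e^{sy} \leq \frac{b' - y}{b' - a'} e^{sa'} + \frac{y - a'}{b' - a'} e^{sb'}$ for every $y \in [a', b']$; taking expectations and using $\mathbf{E}[Y] = 0$ yields $\mathbf{E}[e^{sY}] \leq \frac{b'}{b' - a'} e^{sa'} - \frac{a'}{b' - a'} e^{sb'}$. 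Writing this as $e^{\phi(u)}$ with $u = s(b' - a')$ and $\theta = -a'/(b'-a') \in [0,1]$, where $\phi(u) = -\theta u + \log(1 - \theta + \theta e^{u})$, one checks $\phi(0) = \phi'(0) = 0$ and $\phi''(u) = \frac{\theta(1-\theta)e^u}{(1-\theta + \theta e^u)^2} \leq \tfrac14$ (the last inequality is $pq \leq 1/4$ applied to $p = \theta e^u/(1 - \theta + \theta e^u)$ and $q = 1-p$). Taylor's theorem with remainder then gives $\phi(u) \leq u^2/8$, which is the claimed bound.

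Applying Hoeffding's lemma with $Y = X_i - \mathbf{E}[X_i] \in [a - \mathbf{E}[X_i],\, b - \mathbf{E}[X_i]]$, an interval of width $b - a$, each factor is at most $\exp\!\big(s^2(b-a)^2/8\big)$, so $\Pr[S \geq nt] \leq \exp\!\big(-snt + ns^2(b-a)^2/8\big)$. Optimizing the exponent over $s > 0$ (the minimum is at $s = 4t/(b-a)^2$) yields $\Pr[S \geq nt] \leq \exp\!\big(-2nt^2/(b-a)^2\big)$, and running the identical argument for $-S$ (i.e.\ for the variables $\mathbf{E}[X_i] - X_i$) gives the matching lower-tail bound; taking a union bound over the two tails gives $\Pr[|\overline{X} - \mathbf{E}[\overline{X}]| \geq t] \leq \exp\!\big(-2nt^2/(b-a)^2\big)$, up to the usual factor of $2$ that one can either carry or absorb. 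Finally, the ``Thus'' consequence is pure algebra: setting the right-hand side equal to $\alpha$ and solving for $t$ gives $t = \sqrt{\log(1/\alpha)(b-a)^2/(2n)}$, so that $\Pr\!\big[|\overline{X} - \mathbf{E}[\overline{X}]| \geq \sqrt{\log(1/\alpha)(b-a)^2/(2n)}\big] \leq \alpha$.

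I expect the only nontrivial step to be Hoeffding's lemma, and within it the verification that $\phi'' \leq 1/4$ together with the second-order Taylor bound on $\phi$; the remaining ingredients --- Markov's inequality, the factorization by independence, minimizing a quadratic in $s$, and the final rearrangement --- are entirely routine.
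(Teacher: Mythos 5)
The paper does not prove this statement; it cites it as a known Fact from Hoeffding's 1963 paper. Your proof is the standard textbook argument (Markov on the moment generating function, Hoeffding's lemma via convexity and the second-order bound on $\phi$, optimize the quadratic in $s$, union bound over tails), and it is correct. You are also right to flag the factor-of-$2$ issue: the two-sided bound as stated in the paper should really read $2\exp(-2nt^2/(b-a)^2)$, and the ``Thus'' consequence should correspondingly use $\log(2/\alpha)$ rather than $\log(1/\alpha)$; this is a small slip in the paper's statement rather than in your argument, and it has no effect on any asymptotic claim downstream.
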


\begin{fact}[Multiplicative Chernoff bound]
\cite{mitzenmacher_probability_2005}
	Let \(X_i\) where \(i \in [n]\) be independent random variables taking values in \(\left\{0, 1\right\}\). Let \(X\) denote their sum. Then, the following tail bounds hold:
	\begin{align*}
	\prob{X > (1 + \delta)\expect\left[X\right]} &\leq \exp\left(-\frac{\delta^2\expect\left[X\right]}{2 + \delta}\right) & \text{for}\ & 0 \leq \delta\\
	\prob{X < (1 - \delta)\expect\left[X\right]} & \leq \exp\left(-\frac{\delta^2 \expect\left[X\right]}{2}\right) & \text{for}\ & 0 \leq \delta \leq 1.
	\end{align*}
\end{fact}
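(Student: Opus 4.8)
The plan is to prove both tails by the standard moment generating function (Chernoff) argument, reducing each to a one-variable calculus inequality. Write $X=\sum_{i=1}^n X_i$, set $p_i=\expect[X_i]=\prob{X_i=1}$, and $\mu=\expect[X]=\sum_{i=1}^n p_i$. The only distributional fact used is that, by independence, for every real $s$,
\[
\expect[e^{sX}]=\prod_{i=1}^n\expect[e^{sX_i}]=\prod_{i=1}^n(1+p_i(e^s-1))\le\prod_{i=1}^n e^{p_i(e^s-1)}=e^{\mu(e^s-1)},
\]
where the inequality is $1+x\le e^x$ applied with $x=p_i(e^s-1)$ (this holds for every real $s$, hence works for both tails).

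\textbf{Upper tail.} Fix $\delta\ge 0$. For any $s>0$, Markov's inequality applied to the nonnegative variable $e^{sX}$ gives
\[
\prob{X>(1+\delta)\mu}\le e^{-s(1+\delta)\mu}\expect[e^{sX}]\le\exp(\mu(e^s-1)-s(1+\delta)\mu).
\]
Minimizing the exponent over $s$ yields $s=\ln(1+\delta)$ and the bound $\exp(-\mu f_0(\delta))$ with $f_0(\delta):=(1+\delta)\ln(1+\delta)-\delta$. It then suffices to check $f_0(\delta)\ge\delta^2/(2+\delta)$ for all $\delta\ge 0$, which yields the claimed bound after substitution. Setting $f(\delta)=f_0(\delta)-\delta^2/(2+\delta)$, one computes $f(0)=0$, $f'(0)=\ln(1+\delta)\big|_{\delta=0}-\tfrac{\delta(\delta+4)}{(\delta+2)^2}\big|_{\delta=0}=0$, and $f''(\delta)=\tfrac{1}{1+\delta}-\tfrac{8}{(\delta+2)^3}$, which is nonnegative precisely because $(\delta+2)^3-8(1+\delta)=\delta^3+6\delta^2+4\delta\ge0$; hence $f'$ is nondecreasing, so $f'\ge f'(0)=0$, so $f$ is nondecreasing, so $f\ge f(0)=0$ on $[0,\infty)$.

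\textbf{Lower tail.} Fix $0\le\delta\le1$. For any $t>0$, Markov's inequality on $e^{-tX}$ gives
\[
\prob{X<(1-\delta)\mu}\le e^{t(1-\delta)\mu}\expect[e^{-tX}]\le\exp(\mu(e^{-t}-1)+t(1-\delta)\mu).
\]
Minimizing over $t$ gives $t=-\ln(1-\delta)>0$ and the bound $\exp(-\mu(\delta+(1-\delta)\ln(1-\delta)))$. Expanding $\ln(1-\delta)=-\sum_{k\ge1}\delta^k/k$ gives $(1-\delta)\ln(1-\delta)=-\delta+\sum_{k\ge2}\tfrac{\delta^k}{k(k-1)}\ge-\delta+\tfrac{\delta^2}{2}$ for $\delta\in[0,1)$, so $\delta+(1-\delta)\ln(1-\delta)\ge\delta^2/2$ and the bound becomes $\exp(-\delta^2\mu/2)$, as claimed.

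The only nonroutine step is the upper-tail calculus lemma $f_0(\delta)\ge\delta^2/(2+\delta)$: unlike the lower tail it has no clean term-by-term power series proof valid for all $\delta\ge0$ (near $0$ both sides already agree to order $\delta^2$, and one also must control large $\delta$), so one argues via the signs of successive derivatives as above; the reduction of $f''\ge0$ to the obvious polynomial inequality $\delta^3+6\delta^2+4\delta\ge0$ is what makes it painless. Everything else --- the moment generating function identity, the two applications of Markov, and the two one-line optimizations over $s$ and $t$ --- is bookkeeping.
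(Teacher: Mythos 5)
Your proof is correct and is the canonical Chernoff argument: bound $\expect[e^{sX}]$ via independence and $1+x\le e^x$, apply Markov, optimize $s$, then establish the calculus inequalities $(1+\delta)\ln(1+\delta)-\delta\ge\delta^2/(2+\delta)$ (via two derivatives reducing to $\delta^3+6\delta^2+4\delta\ge 0$) and $\delta+(1-\delta)\ln(1-\delta)\ge\delta^2/2$ (via the power series, whose tail is termwise nonnegative). The paper itself gives no proof of this statement — it is quoted as a known fact with a citation to Mitzenmacher and Upfal — so there is nothing to compare against, but your derivation matches the standard textbook argument that the citation points to and I see no gaps.
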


We also need the two following distances on distributions.
\begin{definition}[Distances between distributions]

The \textbf{squared Hellinger distance} between two discrete distributions \(P, Q\) with \(K\) outcomes is defined as \(h^2(P, Q) = \frac{1}{2}\sum_{i = 1}^K (\sqrt{P_i}-  \sqrt{Q_i})^2\).   

The \textbf{total variation distance} is defined as \(\mathrm{TV}(P, Q) =\frac{1}{2} \sum_{i = 1}^K |P_i - Q_i|\).  

Further, for any discrete distributions \(P, Q\), the following inequality is true: \(h^2(P, Q) \geq \frac{1}{2}\mathrm{TV}(P, Q)^2\).
\end{definition}

\subsubsection{Information Theory and Communication Complexity}

We use standard notions of \(H(X)\) for the entropy of random variable \(X\), and \(H(X \mid Y)\) for the conditional entropy of random variable \(X\) conditioned on random variable \(Y\). Mutual information is defined as \(I(X;Y) = H(X) - H(X \mid Y)\). For a more detailed reference, see \cite{cover1999elements}. 

We prove our lower bounds using the \emph{blackboard model}, where each of \(T\) parties communicates by posting a message to the blackboard, and we denote the transcript of all communication in a protocol as \(\transcript \in \left\{0, 1\right\}^*\), which has access to a public source of randomness. The communication complexity of a protocol, as a result, is the bit length of the transcript \(\left|\transcript\right|\). 
We utilize a lower bound on the following problem to prove a streaming lower bound for the experts problem.

\begin{definition}[Distributed detection problem \cite{braverman2016communication}]
	For fixed distributions $\mu_0$ and $\mu_1$, let $X^{(1)}, \dots, X^{(T)}$ be sampled i.i.d.\ from $\mu_V$, for $V \in \{0, 1\}$.  The \textbf{distributed detection problem} is the task of determining whether $V = 0$ or $V = 1$, given the values of $X = (X^{(1)}, \dots, X^{(T)})$. 
\end{definition}

Consider a specific instance of the distributed detection problem where \(\mu_0 = B_{0.5}\) and \(\mu_1 = B_{0.5 + \varepsilon}\) and \(\varepsilon \in [0, \frac{1}{2}]\). We will refer to this as the \textit{\(\varepsilon\)-distributed detection problem}. We would like to show an mutual information lower bound on any algorithm that solves the \(\varepsilon\)-distributed detection problem with high probability. To do so, we rely on the framework for deriving a strong data processing inequality (SDPI) introduced in \cite{braverman2016communication}. They first define a SDPI constant, \(\beta\).
Let \(B_{p}\) denote the Bernoulli distribution with parameter \(p \in [0, 1]\). 
\begin{definition}[SDPI constant \cite{braverman2016communication}]
Let \(V \sim B_{0.5}\) (fair coin) and the channel \(V\rightarrow X\) be defined as it is in the distributed detection problem above. Then, there exists a constant \(\beta \leq 1\) that depends on \(\mu_0\) and \(\mu_1\), s.t. for any transcript \(\Pi\) that depends only on \(X\) i.e.\ \(V \rightarrow X \rightarrow \Pi\) is a Markov chain, we have
\begin{align}
    I(V; \Pi) \leq \beta I(X ; \Pi).
    \label{eqn:SDPIBound}
\end{align} Let \(\beta(\mu_0, \mu_1)\), the SDPI constant, be the infimum over all possible \(\beta\) such that \eqref{eqn:SDPIBound} holds.  
\end{definition}

We can use the SDPI constant, \(\beta(\mu_0, \mu_1)\), to relate the squared Hellinger distance between the distribution of \(\Pi\) under \(\mu_0\) vs.\ \(\mu_1\) to the mutual information between the inputs \(X\) and \(\Pi\).
\begin{fact}[Theorem 3.1 of \cite{braverman2016communication}]
Suppose \(\mu_1 \leq c \cdot \mu_0\). Then, in the distributed detection problem, we have that
\begin{align*}
\mathrm{h}^2(\Pi\mid_{V = 0}, \Pi\mid_{V = 1}) \leq K(c + 1) \beta(\mu_0, \mu_1) \cdot I(X;\Pi \mid V= 0),
\end{align*} where \(K\) is a universal constant, and \(\Pi \mid_{V = v}\) denotes the distribution of \(\Pi\) conditioned on \(V = v\).
\label{fact:HellingerBound}
\end{fact}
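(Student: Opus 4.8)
The statement is Theorem~3.1 of \cite{braverman2016communication}; here is the plan I would follow to establish it. There are three conceptual ingredients, and the difficulty is concentrated in the third. First, exploit that \(V\) is a uniform bit: the marginal law of \(\Pi\) is the balanced mixture \(\overline{\Pi} = \tfrac12\,\Pi\mid_{V=0} + \tfrac12\,\Pi\mid_{V=1}\), so
\begin{align*}
I(V;\Pi) = \tfrac12 D\!\left(\Pi\mid_{V=0}\,\big\|\,\overline{\Pi}\right) + \tfrac12 D\!\left(\Pi\mid_{V=1}\,\big\|\,\overline{\Pi}\right) = \mathrm{JSD}\!\left(\Pi\mid_{V=0},\,\Pi\mid_{V=1}\right),
\end{align*}
the Jensen--Shannon divergence, where \(D(\cdot\,\|\,\cdot)\) is the Kullback--Leibler divergence. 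A routine pointwise comparison of the integrands of \(\mathrm{JSD}\) and \(h^2\) yields a universal constant \(c_1>0\) with \(\mathrm{JSD}(P,Q)\ge c_1\, h^2(P,Q)\) for all \(P,Q\); hence \(h^2(\Pi\mid_{V=0},\Pi\mid_{V=1}) \le \tfrac{1}{c_1}\, I(V;\Pi)\), which reduces the theorem to bounding \(I(V;\Pi)\).

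Second, the bound on \(I(V;\Pi)\) should come from a strong data processing inequality for the channel \(V\to X\), since \(V\to X\to\Pi\) is a Markov chain. The plain SDPI gives \(I(V;\Pi)\le\beta(\mu_0,\mu_1)\, I(X;\Pi)\), which is essentially the right shape but measures the information content of \(X\) against its mixture marginal \(\overline{\mu}^{\otimes T}\) rather than against \(\mu_0^{\otimes T}\). The target inequality instead features the conditional quantity \(I(X;\Pi\mid V=0) = \mathbf{E}_{x\sim\mu_0^{\otimes T}}\!\big[D(\Pi\mid_{X=x}\,\|\,\Pi\mid_{V=0})\big]\), so what is really needed is a \emph{conditional} SDPI that replaces \(I(X;\Pi)\) by \(I(X;\Pi\mid V=0)\) at a cost of only an \(O(c+1)\) factor; the hypothesis \(\mu_1\le c\mu_0\) (equivalently \(\overline{\mu}=\tfrac12\mu_0+\tfrac12\mu_1\le\tfrac{c+1}{2}\mu_0\)) is precisely what makes this possible.

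The third step, which I expect to be the main obstacle, is to execute this change of measure from \(\overline{\mu}\) to \(\mu_0\) \emph{coordinate by coordinate}: a naive change of measure applied to the full product measures loses a factor \(c^{\,T}\), which is useless, whereas the Kullback--Leibler contraction coefficient tensorizes, so the single-letter bound \(\overline{\mu}\le\tfrac{c+1}{2}\mu_0\) can be inserted one coordinate at a time and contributes only an overall \(O(c+1)\). Making this rigorous is exactly the ``distributed data processing inequality'' machinery of \cite{braverman2016communication}, and I would invoke their argument here rather than reprove it. Folding the constant \(1/c_1\) and the \(O(c+1)\) from the change of measure into a single universal constant \(K\) then yields \(h^2(\Pi\mid_{V=0},\Pi\mid_{V=1}) \le K(c+1)\,\beta(\mu_0,\mu_1)\, I(X;\Pi\mid V=0)\), as claimed.
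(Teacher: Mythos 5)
The paper does not prove this statement: it is stated as a Fact and attributed to Theorem~3.1 of \cite{braverman2016communication}, so there is no in-paper argument to compare your attempt against. Evaluated on its own terms, your sketch has a genuine gap at step~3, which is where essentially all the technical content of the theorem resides. The claim that the change of measure from the mixture marginal of \(X\) to \(\mu_0^{\otimes T}\) costs only a single factor of \(O(c+1)\) rather than \(c^T\) is precisely the ``distributed data processing inequality,'' and you explicitly defer its justification to the machinery of \cite{braverman2016communication} itself. Since that machinery \emph{is} the proof of Theorem~3.1, the proposal as written reduces the statement to a citation of the statement and does not establish it.

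A secondary concern is step~1. The identity \(I(V;\Pi)=\mathrm{JSD}(\Pi\mid_{V=0},\Pi\mid_{V=1})\) for uniform \(V\), and the pointwise comparison \(\mathrm{JSD}(P,Q)\ge c_1\,\mathrm{h}^2(P,Q)\), are both true. But I do not believe \cite{braverman2016communication} routes through \(I(V;\Pi)\): as I recall, they bound the squared Hellinger distance \(\mathrm{h}^2(\Pi\mid_{V=0},\Pi\mid_{V=1})\) directly via a per-coordinate decomposition of the transcript (a cut-and-paste-style lemma for communication protocols), which is exactly what lets the single-coordinate SDPI be applied one coordinate at a time with no exponential blow-up. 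Mutual information with a single latent bit \(V\) does not admit the same coordinate decomposition, so beginning with your step~1 may steer the argument toward a version of step~3 that is harder, not easier, to carry out than the route in the cited source. If you want a self-contained proof, the piece to supply is the Hellinger tensorization for transcripts together with the single-letter SDPI insertion; the Jensen--Shannon reduction is a detour.
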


Our last ingredient is fact that implies an upper bound on \(\beta(\mu_0, \mu_1)\). 
\begin{fact}[Lemma 7 of \cite{zhang_informationtheoretic_lower_2013}]
    Assume we sample \(V\) from a distribution, and suppose \\
    \(\sup_x\frac{\mathbf{Pr}_{X_i\sim\mu_1}(X_i=x_i)}{\mathbf{Pr}_{X_i\sim\mu_0}(X_i=x_i)}\le c\) for all \(i \in [T]\).
    
    Then, the following inequality holds:
    \begin{align*}
		I(V;\transcript) \leq  2(c^2 - 1)^2 I(X;\transcript).
	\end{align*}
	\label{fact:SDPI}
\end{fact}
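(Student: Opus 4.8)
The plan is to treat this as a \emph{strong data-processing inequality} (SDPI): along the Markov chain $V \to X \to \transcript$, the information leaked about $V$ is at most a $\Theta((c^2-1)^2)$ fraction of the information leaked about $X$, with a contraction factor that does not depend on the number $T$ of coordinates. The first observation is that a purely distance-based argument cannot work: data processing and tensorization of the $\chi^2$-divergence only give $\chi^2\!\left(\transcript\mid_{V=1}\,\Vert\,\transcript\mid_{V=0}\right) \le \chi^2\!\left(\mu_1^{\otimes T}\,\Vert\,\mu_0^{\otimes T}\right) = \left(1+\chi^2(\mu_1\Vert\mu_0)\right)^{T}-1 \le c^{T}-1$, which is useless; controlling $I(V;\transcript)$ by $I(X;\transcript)$ genuinely requires exploiting that $X$ splits into per-party coordinates that are conditionally independent given $V$. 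So the plan has two parts: (i) decompose $I(V;\transcript)$ over the communication and reduce to one coordinate at a time, and (ii) prove a sharp single-letter SDPI for the channel $V \to X^{(t)}$.

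For (i), I would write the blackboard transcript as the sequence of messages $\transcript = (\transcript_1,\transcript_2,\dots)$ in posting order, let $\transcript_{<r}$ be the prefix before round $r$, and let $t(r)$ be the party speaking in round $r$. The chain rule gives $I(V;\transcript) = \sum_r I(V;\transcript_r\mid\transcript_{<r})$ and $I(X;\transcript) = \sum_r I(X;\transcript_r\mid\transcript_{<r}) \ge \sum_r I\!\left(X^{(t(r))};\transcript_r\mid\transcript_{<r}\right)$, so it suffices to establish the per-round bound $I(V;\transcript_r\mid\transcript_{<r}) \le 2(c^2-1)^2\, I\!\left(X^{(t(r))};\transcript_r\mid\transcript_{<r}\right)$. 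Since $\transcript_r$ is a function of $X^{(t(r))}$, $\transcript_{<r}$, and private randomness, for each fixed value of $\transcript_{<r}$ we have the conditional Markov chain $V \to X^{(t(r))} \to \transcript_r$, and the per-round bound reduces to averaging a single-letter SDPI over the conditional law of $V$ given $\transcript_{<r}$. This averaging is the step I expect to be the main obstacle: conditioning on $\transcript_{<r}$ (a function of all of $X$) can push the conditional joint law of $\left(V,X^{(t(r))}\right)$ far from a product, so the single-letter SDPI must be established \emph{uniformly over all priors on $V$} and with the conditional channel still equal to $\mu_V$; making this rigorous is exactly the content of the distributed-data-processing framework of \cite{braverman2016communication} that underlies Fact~\ref{fact:HellingerBound}, which I would invoke or re-derive.

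For (ii), the single-letter statement is that for a binary $V$, $Y\sim\mu_V$, and any channel $Y\to Z$, $I(V;Z)\le 2(c^2-1)^2\,I(Y;Z)$ whenever $\sup_y \mu_1(y)/\mu_0(y)\le c$. I would prove this by a second-order (local) analysis rather than a crude one: expanding $D_{\mathrm{KL}}$ of the conditional law of $Z$ around its marginal and using $D_{\mathrm{KL}}(P\Vert Q)\le\chi^2(P\Vert Q)$ reduces $I(V;Z)$ to a $\chi^2$-divergence / maximal-correlation quantity, whose contraction coefficient for the channel $V\to Y$ I would bound directly from the likelihood-ratio constraint --- for instance $\chi^2(\mu_1\Vert\mu_0)=\int\mu_1^2/\mu_0-1\le c-1$, and, since $|\mu_1-\mu_0|\le(c-1)\mu_0$ pointwise, the squared maximal correlation between $V$ and $Y$ is at most $4(c-1)^2\le(c^2-1)^2$. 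The point that forces this finer analysis is that the Dobrushin/total-variation contraction coefficient is only $O(c-1)$, which is not good enough: the downstream $\Omega(1/\avgregret^2)$ communication lower bound for $\avgregret$-distributed detection needs the quadratic scaling $(c^2-1)^2\asymp\avgregret^2$, so obtaining a coefficient quadratic in $(c-1)$ is essential. Assembling (i) and (ii) --- applying the single-letter SDPI in each round with the conditional prior $V\mid\transcript_{<r}$ and summing --- gives $I(V;\transcript)\le 2(c^2-1)^2\sum_r I\!\left(X^{(t(r))};\transcript_r\mid\transcript_{<r}\right)\le 2(c^2-1)^2\,I(X;\transcript)$, as claimed.
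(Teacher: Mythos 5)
This statement is cited in the paper as Lemma~7 of \cite{zhang_informationtheoretic_lower_2013} and is not proved there, so there is no in-paper argument to compare against; your proposal is an attempt to reprove the imported lemma. Your high-level strategy (view this as a strong data-processing inequality, decompose $I(V;\transcript)$ round by round via the chain rule, and apply a single-letter SDPI to the speaking party's coordinate) is indeed the strategy used in that literature, and your observation that a purely distance/tensorization argument cannot work is a correct and useful sanity check. You also correctly identify the crux: what happens to the channel $V\to X^{(t(r))}$ after conditioning on $\transcript_{<r}$.

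The gap is that you misdescribe, and do not resolve, that crux. You write that the single-letter SDPI ``must be established \ldots with the conditional channel still equal to $\mu_V$,'' but this is simply false: conditioned on $V=v$ and $\transcript_{<r}=\pi$, the rectangle (cut-and-paste) property of blackboard protocols gives $\Pr\bigl[X^{(t(r))}=x \mid V=v,\ \transcript_{<r}=\pi\bigr] \propto \mu_v(x)\,g_\pi(x)$ for some $v$-independent tilting function $g_\pi$, not $\mu_v(x)$ itself. After normalizing, the per-round likelihood ratio between the two conditional laws becomes $\frac{\mu_1(x)/\mu_0(x)}{Z_1(\pi)/Z_0(\pi)}$ with $Z_v(\pi)=\sum_x \mu_v(x)g_\pi(x)$, which is bounded by $c^2$ rather than $c$ (and requires a two-sided bound $\mu_1/\mu_0\in[1/c,c]$, not just the one-sided bound you quote). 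Your per-round invocation of the single-letter SDPI therefore applies to the wrong channel, and the constant accounting (where $c$ enters quadratically) would have to be redone for the tilted channel; deferring this to ``invoke or re-derive \cite{braverman2016communication}'' leaves the central step of the lemma unproved. Separately, your single-letter SDPI itself is only sketched: you assert a maximal-correlation bound of $4(c-1)^2$ and note $4(c-1)^2\le(c^2-1)^2$, but you neither derive the KL-to-$\chi^2$-to-maximal-correlation chain with constants nor show it yields the claimed $2(c^2-1)^2$. The right skeleton is there, but both the tilted-channel step and the single-letter constant need to be carried out before this is a proof.
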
 
\vspace{-22pt}
This immediately implies the following corollary about the \(\varepsilon\)-distributed detection problem.
\begin{corollary}
Let \(\varepsilon \in [0, 0.5]\), \(\mu_0 = B_{0.5}\), and \(\mu_1 = B_{0.5 + \varepsilon}\). Setting \(b = 1 + \varepsilon\), \Cref{fact:SDPI} implies the following inequality:
\begin{align*}
    I(V;\Pi) \leq 2(2\varepsilon + \varepsilon^2)^2I(X;\Pi) = (8\varepsilon^2 + 8\varepsilon^3 + 2\varepsilon^4)I(X;\Pi) \leq 18\varepsilon^2I(X;\Pi).
\end{align*} Consequently, \(\beta(\mu_0, \mu_1) \leq 18\varepsilon^2\).
\label{corollary:SDPIBetaBound}
\end{corollary}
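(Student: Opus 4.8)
The plan is to specialize the generic strong data processing inequality of \Cref{fact:SDPI} to the channel underlying the $\varepsilon$-distributed detection problem and then discard the lower-order powers of $\varepsilon$. Recall that here $V \to X \to \Pi$ is a Markov chain, with $X = (X^{(1)},\dots,X^{(T)})$ a vector of i.i.d.\ draws from $\mu_V$, where $\mu_0 = B_{0.5}$ and $\mu_1 = B_{0.5+\varepsilon}$. The first step is to exhibit a constant $b$ that uniformly bounds the per-coordinate likelihood ratio $\mathbf{Pr}_{X_i \sim \mu_1}(X_i = x)/\mathbf{Pr}_{X_i \sim \mu_0}(X_i = x)$ over all outcomes $x$ and all coordinates $i \in [T]$; since $\mu_0,\mu_1$ are both supported on $\{0,1\}$ and the coordinates are identically distributed, this reduces to a single two-outcome comparison, and one takes $b = 1+\varepsilon$ as in the statement. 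Verifying that $b$ is a legitimate coordinate-wise bound is the one place to be careful, and it is exactly what makes the hypothesis of \Cref{fact:SDPI} applicable.

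Given $b$, I would invoke \Cref{fact:SDPI} with $c = b$, which yields $I(V;\Pi) \le 2(b^2-1)^2\, I(X;\Pi)$. Substituting $b = 1+\varepsilon$ gives $b^2 - 1 = 2\varepsilon + \varepsilon^2$, so the leading coefficient expands to $2(2\varepsilon+\varepsilon^2)^2 = 8\varepsilon^2 + 8\varepsilon^3 + 2\varepsilon^4$. To reach the clean form I would absorb the cubic and quartic terms using $\varepsilon \le \tfrac{1}{2}$: then $8\varepsilon^3 \le 4\varepsilon^2$ and $2\varepsilon^4 \le \tfrac{1}{2}\varepsilon^2$, so the coefficient is at most $\tfrac{25}{2}\varepsilon^2 \le 18\varepsilon^2$, giving $I(V;\Pi) \le 18\varepsilon^2\, I(X;\Pi)$ for every such $\Pi$.

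Finally, this inequality holds for every transcript $\Pi$ with $V \to X \to \Pi$ a Markov chain --- precisely the family quantified over in the definition of the SDPI constant $\beta(\mu_0,\mu_1)$ via \eqref{eqn:SDPIBound} --- and $\beta(\mu_0,\mu_1)$ is by definition the infimum of all constants for which that inequality holds, so $\beta(\mu_0,\mu_1) \le 18\varepsilon^2$, which is the claimed bound. I do not expect a real obstacle here: the only genuine content is correctly identifying the likelihood-ratio bound $b$ in the first step, after which the argument is a one-line substitution into \Cref{fact:SDPI} followed by an elementary bound on the lower powers of $\varepsilon$, and the remaining step is just unwinding the definition of $\beta(\mu_0,\mu_1)$.
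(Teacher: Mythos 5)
Your approach is the one the paper intends: identify the coordinate-wise likelihood-ratio bound $c$, substitute into \Cref{fact:SDPI}, expand $2(c^2-1)^2$, and absorb the lower-order powers of $\varepsilon$ using $\varepsilon\le\frac12$. However, the step you single out as ``the one place to be careful'' --- verifying that $b=1+\varepsilon$ actually bounds $\sup_x \mathbf{Pr}_{X_i\sim\mu_1}(X_i=x)/\mathbf{Pr}_{X_i\sim\mu_0}(X_i=x)$ --- is asserted but never carried out, and carrying it out shows that it fails. Since $\mu_0=B_{0.5}$ and $\mu_1=B_{0.5+\varepsilon}$, the ratio at $x=1$ is $(\frac12+\varepsilon)/\frac12=1+2\varepsilon$, so the supremum over $x\in\{0,1\}$ is $1+2\varepsilon$, not $1+\varepsilon$; the value $1+\varepsilon$ would correspond instead to $\mu_1=B_{0.5+\varepsilon/2}$. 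This slip is already present in the corollary as stated (the paper also sets $b=1+\varepsilon$), so you have faithfully reproduced its intended argument, but an actual verification at the step you correctly flagged as critical would have caught the discrepancy.

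Nothing downstream breaks qualitatively: with the correct $c=1+2\varepsilon$ one gets $c^2-1=4\varepsilon(1+\varepsilon)$, hence $2(c^2-1)^2=32\varepsilon^2(1+\varepsilon)^2\le 72\varepsilon^2$ for $\varepsilon\le\frac12$, so $\beta(\mu_0,\mu_1)=O(\varepsilon^2)$ still holds, and \Cref{thm:EpsilonIC} and \Cref{lemma:InfoOfDiffDist} go through unchanged up to constants; only the explicit constant $18$ needs to be revised (e.g.\ to $72$). Your final step, reading the conclusion off the definition of $\beta(\mu_0,\mu_1)$ as an infimum over valid SDPI constants, is exactly right.
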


Given the above results, we can now show a lower bound for the mutual information between \(\Pi\) and \(X\) in the case where \(V = 0\).
\begin{theorem}[Mutual information lower bound for the \(\varepsilon\)-distributed detection problem]
Let \(\Pi\) be the transcript of an algorithm that solves \(\varepsilon\)-distributed detection problem with probability \(1 - p\) for any fixed choice of \(p \in [0, 0.5)\). Then, we can provide the following lower bound:
\begin{align*}
     I(X;\Pi \mid V = 0) = \Omega\left(\varepsilon^{-2}\right).
\end{align*}
\label{thm:EpsilonIC}
\end{theorem}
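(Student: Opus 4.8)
The plan is to chain together the three ingredients developed just above: the strong data processing inequality of \cite{braverman2016communication} (our Fact~\ref{fact:HellingerBound}), the explicit bound on the SDPI constant $\beta(\mu_0,\mu_1) \le 18\varepsilon^2$ from Corollary~\ref{corollary:SDPIBetaBound}, and the elementary fact that a correct-with-probability-$(1-p)$ protocol must produce transcripts whose distributions under $V=0$ and $V=1$ are noticeably far apart in total variation. Working backward: if $\Pi$ solves $\varepsilon$-distributed detection with error $p < \tfrac12$, then the decision rule applied to $\Pi$ distinguishes the two hypotheses, so $\mathrm{TV}(\Pi\mid_{V=0},\Pi\mid_{V=1}) \ge 1 - 2p = \Omega(1)$ (a constant, since $p$ is fixed). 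By the inequality $h^2(P,Q) \ge \tfrac12 \mathrm{TV}(P,Q)^2$ recorded in the distances definition, this gives $h^2(\Pi\mid_{V=0},\Pi\mid_{V=1}) = \Omega(1)$ as well.

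First I would verify the hypothesis of Fact~\ref{fact:HellingerBound}: with $\mu_0 = B_{0.5}$ and $\mu_1 = B_{0.5+\varepsilon}$ and $\varepsilon \le \tfrac12$, we have $\mu_1 \le c\cdot\mu_0$ pointwise with $c = \tfrac{0.5+\varepsilon}{0.5} = 1 + 2\varepsilon \le 2$ (the ratio is largest on the outcome $X_i = 1$). So Fact~\ref{fact:HellingerBound} applies with this $c$, $K$ a universal constant, and the outcome count is $K$-many — here the underlying single-sample space is Bernoulli so the relevant constant factor is absorbed. This yields
\begin{align*}
\Omega(1) = h^2(\Pi\mid_{V=0},\Pi\mid_{V=1}) \le K(c+1)\,\beta(\mu_0,\mu_1)\cdot I(X;\Pi\mid V=0) \le K\cdot 3\cdot 18\varepsilon^2 \cdot I(X;\Pi\mid V=0),
\end{align*}
using $c + 1 \le 3$ and Corollary~\ref{corollary:SDPIBetaBound}. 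Rearranging gives $I(X;\Pi\mid V=0) \ge \dfrac{\Omega(1)}{54K\varepsilon^2} = \Omega(\varepsilon^{-2})$, which is exactly the claim.

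The one place requiring care — the main obstacle, such as it is — is the very first step: turning the ``solves with probability $1-p$'' promise into a quantitative lower bound on the total variation distance between the two transcript distributions. The subtlety is that the protocol has access to public randomness and its success probability is over both the input draw $X \sim \mu_V$ and this randomness, so I would condition on the public coins, observe that for the averaged-over-coins decision procedure the error is still at most $p$ under each of $V=0$ and $V=1$, and then invoke the standard fact that any (possibly randomized) test distinguishing $P$ from $Q$ with total error probability at most $2p$ forces $\mathrm{TV}(P,Q) \ge 1-2p$. Everything else is a direct substitution of the already-established bounds, so no genuinely new argument is needed beyond assembling the pieces in this order.
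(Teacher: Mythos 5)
Your proposal is correct and follows essentially the same route as the paper: obtain a constant lower bound on the (Hellinger) distance between $\Pi\mid_{V=0}$ and $\Pi\mid_{V=1}$ from correctness, then chain Fact~\ref{fact:HellingerBound} with $c=2$ and Corollary~\ref{corollary:SDPIBetaBound} to conclude. The only cosmetic difference is that you apply TV data processing to the transcripts directly and then $h^2 \ge \tfrac12\mathrm{TV}^2$, whereas the paper applies Hellinger data processing first and the TV inequality on the post-processed decision; these are equivalent.
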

\vspace{-22pt}
\begin{proof}
By the definition of \(\Pi\), we can apply some function \(\widehat{V}\) such that \(\widehat{V}(\Pi) = V\) with a constant probability strictly greater than 0.5. Thus, \(\mathrm{TV}(\widehat{V}(\Pi)\mid_{V = 0}, \widehat(\Pi)\mid_{V = 1}) = \Omega(1)\). This implies the following:
\begin{align}
\mathrm{h}^2(\Pi\mid_{V = 0}, \Pi\mid_{V = 1}) \geq \mathrm{h}^2(\widehat{V}(\Pi)\mid_{V = 0}, \widehat{V}(\Pi)\mid_{V = 1}) \geq \frac{1}{2}\mathrm{TV}(\widehat{V}(\Pi)\mid_{V = 0}, \widehat{V}(\Pi)\mid_{V = 1})^2 \geq  \Omega(1).
\label{eq:CorrectnessBound}
\end{align}
The first inequality is by data processing, the second is by the relationship between squared Hellinger and total variation distance, and the last inequality is by correctness of \(\Pi\).

We note that we can apply \Cref{fact:HellingerBound}, with \(c = 2\), to derive the following:
\begin{align*}
\mathrm{h}^2(\Pi\mid_{V = 0}, \Pi\mid_{V = 1}) \leq 3K \beta(\mu_0, \mu_1) \cdot I(X;\Pi \mid V = 0).
\end{align*} 

Applying \eqref{eq:CorrectnessBound} and \Cref{corollary:SDPIBetaBound} to the above equation, we get:
\begin{align*}
\Omega(1) \leq 3K(18\varepsilon^2) \cdot I(X;\Pi \mid V = 0).
\end{align*} Since \(K\) is a constant, we get our desired result.
\end{proof}

\section{Lower Bound for All Streaming Models}
\label{sec:lb}
We will provide our lower bound in terms of \(\delta\), the average regret the algorithm incurs. We note that our lower bound is valid in the i.i.d.\  setting for discrete prediction, and consequently is a lower bound in all (adversarial, random order, i.i.d., and continuous costs) settings we consider in the paper.
Moreover, the lower bound holds even if the algorithm still has access to all \(\Omega(n)\) predictions of the experts when the outcome of the day is revealed to it (and loses access to the predictions only when it receives the next day's predictions). 

Our lower bound is achieved by reducing the problem of discrete prediction with expert advice to a \(n\)-fold version of the distributed detection problem we call \(\diffdist\). 

\begin{definition}[The \(\offset\shortdash\diffdist\) Problem]
	We have $T$ players, each of whom holds $n$ bits, indexed from $1$ to $n$.  We must distinguish between two cases, which we refer to as ``$V = 0$" and ``$V = 1$". Let \(\mu_0\) be a Bernoulli distribution with parameter \(\frac{1}{2}\), i.e., a fair coin, and let \(\mu_1\) be a Bernoulli distribution with parameter \(\frac{1}{2} + \offset\).
	\begin{itemize}
		\item (NO Case, ``$V = 0$") Every index for every player is drawn i.i.d. from a fair coin, i.e.,\ \(\mu_0\).
		\item (YES Case, ``$V = 1$") An index $L\in[n]$ is selected arbitrarily -- the $L$-th bit of each player is chosen i.i.d.\ from $\mu_1$. All other bits for every player are chosen i.i.d. from $\mu_0$.
	\end{itemize}
\end{definition}

Intuitively, each player in the \(\offset\shortdash\diffdist\) problem corresponds to a different day in the learning with experts problem. 
The $n$ bits held by each player correspond to the $n$ expert predictions for each day. 
Thus, in the NO case for the \(\offset\shortdash\diffdist\) problem, each expert is correct on half of the days in expectation (and with high probability) while in the YES case, there exists a single expert that is correct on a  $\frac{1}{2}+\offset$ fraction of the days in expectation (and with high probability). We note that we consider an algorithm that solves the \(\varepsilon\)-\diffdist problem with probability \(1 - p\), for some fixed constant \(p \in [0, 1]\), where this probability is over both the randomness of the input distribution, as well as the private randomness of the algorithm. 

\subsection{Communication Lower Bound of the \texorpdfstring{\(\varepsilon\)-\(\diffdist\)}{Epsilon-DiffDist} Problem}
Recall that we prove our lower bounds using the blackboard model, where each element of the stream is treated as a party (for us, each party will correspond to a day of predictions and corresponding outcome), and an algorithm for computing on the stream is seen as a multiparty communication protocol. Each of \(T\) parties has private randomness and communicates by posting a message to the blackboard, and we denote the transcript of all communication in a protocol by \(\transcript \in \left\{0, 1\right\}^*\). 
The communication cost of a protocol is the maximum bit length of the transcript, where the maximum is taken over all inputs and all coin tosses of the protocol. The communication complexity is the minimum communication cost of a correct protocol, where we will consider {\it distributional} correctness, meaning that a protocol is correct with failure probability $\gamma$ if it fails with probability at most $\gamma$, where the probability is taken over the joint distribution of the inputs and the protocol's private coins. We will take $\gamma$ to be a constant throughout, and will specify the input distributions we consider. We will also allow the protocol to have its own private coins, as we will need this when proving a direct sum theorem for information cost, described below. In the streaming model, the space complexity of an algorithm is the maximum amount of space in bits used by the algorithm. Note that any lower bound \(S\) on the randomized communication complexity in the blackboard model implies an \(S / T\) lower bound on the space complexity of a $1$-pass randomized streaming algorithm for solving the communication problem, since one player must communicate at least \(S/T\) bits. 

To prove a communication lower bound on the \(\varepsilon\)-\(\diffdist\) problem, we prove an analogue of the direct sum theorem \cite{bar2004information} that applies to the \(\varepsilon\)-distributed detection problem. The classic direct sum theorem from \cite{bar2004information} cannot be directly applied, since it is only applicable to decision problems, where the correct answer can be solely determined from the inputs. In our case the goal is to correctly infer a latent bit --- the correct answer is not a deterministic function of the input bits, but rather we are in a hypothesis testing scenario where we must infer the latent bit correctly with good probability under its respective posterior distribution. 

Hence, we will use a technique that is an analogue of the direct sum theorem in \cite{bar2004information}, but instead we directly show a lower bound on the mutual information in the case \(V = 0\). The mutual information $I(X ; Y)$ between two variables $X$ and $Y$ is equal to $H(X) - H(X | Y)$, or equivalently, $H(Y) - H(Y | X)$, where for a random variable $Z$, $H(Z)$ is the Shannon entropy of the distribution of $Z$. We refer the reader to \cite{bar2004information} for more background on information theory and the information complexity that we use. 

\begin{lemma}[Decomposable lemma]
Consider the distribution \(\mathbf{X} \sim \mu_0^n\) under the NO case of the \(\varepsilon\)-\(\diffdist\) problem. For any protocol \(\Pi\) that solves the \(\varepsilon\)-\(\diffdist\) problem with constant probability, the following mutual information inequality holds: \(I(\mathbf{X}; \Pi) \geq \sum\limits_{i = 1}^n I(\mathbf{X}_i; \Pi)\) under \(\mu_0^n\).
\label{lemma:DecomposableLemma}
\end{lemma}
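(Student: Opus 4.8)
The plan is to show that under the product distribution $\mathbf{X} \sim \mu_0^n$, the coordinates $\mathbf{X}_1, \dots, \mathbf{X}_n$ are mutually independent, and then to exploit the standard fact that for independent inputs, the mutual information with any fixed transcript "super-additively" decomposes, i.e.\ $I(\mathbf{X}; \Pi) \geq \sum_i I(\mathbf{X}_i; \Pi)$. The key point is that this direction of the inequality (as opposed to the subadditivity $I(\mathbf{X};\Pi) \le \sum_i I(\mathbf{X}_i \mid \mathbf{X}_{<i}; \Pi)$, which holds by the chain rule) requires only that the $\mathbf{X}_i$ are independent, and does not require anything about the structure of $\Pi$ beyond it being a (possibly randomized) function whose only dependence on $\mathbf{X}$ is through $\mathbf{X}$. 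Concretely, first I would write $I(\mathbf{X};\Pi) = H(\mathbf{X}) - H(\mathbf{X}\mid\Pi)$. Since the $\mathbf{X}_i$ are independent, $H(\mathbf{X}) = \sum_i H(\mathbf{X}_i)$. Then I would use subadditivity of conditional entropy, $H(\mathbf{X}\mid\Pi) \le \sum_i H(\mathbf{X}_i \mid \Pi)$, which holds for any joint distribution. Combining these two facts gives
\[
I(\mathbf{X};\Pi) = \sum_{i=1}^n H(\mathbf{X}_i) - H(\mathbf{X}\mid\Pi) \ge \sum_{i=1}^n H(\mathbf{X}_i) - \sum_{i=1}^n H(\mathbf{X}_i\mid\Pi) = \sum_{i=1}^n I(\mathbf{X}_i;\Pi),
\]
which is exactly the claimed bound.

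The only thing that genuinely needs checking is the claim that the $n$ coordinate-streams $\mathbf{X}_1, \dots, \mathbf{X}_n$ are independent under $\mu_0^n$. Here $\mathbf{X}_i \in \{0,1\}^T$ denotes the vector of bit values at index $i$ across all $T$ players. In the NO case every index for every player is drawn i.i.d.\ from a fair coin, so the full collection $\{X^{(t)}_i : t \in [T], i \in [n]\}$ is a collection of i.i.d.\ fair bits; grouping them by index $i$ certainly gives mutually independent blocks. So this step is immediate from the definition of the NO case — the whole content of restricting to $V = 0$ is precisely to make the inputs a product distribution, which is the one assumption the decomposition needs. I should remark that $\Pi$ here is the transcript of the blackboard protocol; its dependence on the data is through $\mathbf{X} = (\mathbf{X}_1,\dots,\mathbf{X}_n)$ together with the protocol's private and public coins (which are independent of $\mathbf{X}$), so $H(\mathbf{X}\mid\Pi)$ and $H(\mathbf{X}_i\mid\Pi)$ are well defined and the subadditivity step applies verbatim.

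I do not anticipate a real obstacle here: the lemma is essentially the "easy half" of a direct-sum-style argument, and the work of actually lower bounding each term $I(\mathbf{X}_i;\Pi)$ is deferred to later (that is where Theorem~\ref{thm:EpsilonIC} and the embedding of a single distributed-detection instance into coordinate $i$ come in, using that the $V=0$ marginal on coordinate $i$ is exactly the $\varepsilon$-distributed detection NO-distribution $\mu_0^T$). The one subtlety worth being careful about in the writeup is to make sure subadditivity of conditional entropy is invoked in the correct direction — it is $H(\mathbf{X}\mid\Pi) \le \sum_i H(\mathbf{X}_i\mid\Pi)$ that we need, and this holds unconditionally (it is just the statement that conditioning does not increase entropy applied after dropping the other coordinates), whereas the reverse would be false in general. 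With the independence of the $\mathbf{X}_i$ pinned down and this direction of subadditivity in hand, the chain of (in)equalities above completes the proof.
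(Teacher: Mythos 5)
Your proof is correct and takes essentially the same approach as the paper: decompose mutual information into entropy, use independence of the coordinates under $\mu_0^n$ to write $H(\mathbf{X}) = \sum_i H(\mathbf{X}_i)$, and apply subadditivity of conditional entropy $H(\mathbf{X}\mid\Pi) \le \sum_i H(\mathbf{X}_i\mid\Pi)$. Your version is in fact a touch cleaner, since the paper's displayed subadditivity step is phrased in terms of joint rather than conditional entropy, which reads like a minor typo.
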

\begin{proof}
First, we decompose mutual information into entropy: \hbox{\(I(\mathbf{X}; \Pi) = H(\mathbf{X}) - H(\mathbf{X} \mid \Pi) \)}. By independence of \(\mathbf{X}_i\) across \(i \in [n]\) under \(\mu_0^n\), we get that \hbox{\(H(\mathbf{X}) = \sum\limits_{i = 1}^n H(\mathbf{X}_i)\)}. On the other hand, by subadditivity of conditional entropy, we get that \hbox{\(H(\mathbf{X}, \Pi) \leq \sum\limits_{i = 1}^n H(\mathbf{X}_i, \Pi)\)}. Combining these two bounds, we arrive at our desired result.
\end{proof}

The decomposable lemma requires us to simply lower bound the mutual information of each coordinate with the transcript. Thus, we show that the mutual information can be lower bounded by using a lower bound on the mutual information of the \(\varepsilon\)-distributed detection problem.

\begin{lemma}[Reduction lemma]
For a protocol \(\Pi\) that solves the \(\varepsilon\)-\(\diffdist\) problem with constant probability at least \(1 - p\), where \(p \in [0, 0.5)\), the following inequality holds in the NO case of the \(\varepsilon\)-\(\diffdist\) problem: for every \(i \in [n]\), \(I(\mathbf{X}_i; \Pi | V = 0) \geq \Omega(\varepsilon^{-2})\).
\label{lemma:ReductionLemma}
\end{lemma}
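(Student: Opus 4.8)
The plan is to reduce the $\varepsilon$-distributed detection problem to the $\varepsilon$-$\diffdist$ problem, coordinate by coordinate, and then invoke the mutual information lower bound of Theorem~\ref{thm:EpsilonIC}. Fix a coordinate $i \in [n]$. The key observation is that a protocol $\Pi$ solving $\varepsilon$-$\diffdist$ can be turned into a protocol that solves the single-coordinate $\varepsilon$-distributed detection problem on inputs planted at coordinate $i$. Specifically, suppose the $T$ players of the distributed detection problem each hold a bit $X^{(t)} \in \{0,1\}$, drawn i.i.d.\ from $\mu_V$ for an unknown $V \in \{0,1\}$. Using public randomness, the players agree on a fixed assignment of i.i.d.\ $\mu_0$ (fair-coin) bits to all coordinates $j \neq i$, and player $t$ places its held bit $X^{(t)}$ into coordinate $i$. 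The resulting $T \times n$ array is distributed exactly as a $\varepsilon$-$\diffdist$ instance: when $V=0$ it is the NO case (all bits fair), and when $V=1$ it is the YES case with planted index $L = i$. Since $\Pi$ solves $\varepsilon$-$\diffdist$ with probability at least $1-p$ on \emph{every} planted index (in particular on $L=i$, and also on the NO case), running $\Pi$ on this embedded instance distinguishes $V=0$ from $V=1$ with probability at least $1-p$. Hence the induced transcript is a valid transcript for the $\varepsilon$-distributed detection problem.

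The next step is to push the mutual information bound through this reduction. Let $\Pi'$ denote the transcript of the induced protocol on the distributed detection instance, i.e.\ $\Pi$ run on the array whose $i$-th column is $X = (X^{(1)},\dots,X^{(T)})$ and whose other columns are the public-randomness fair-coin bits. Because the other columns are generated from public randomness independent of $X$, we have $I(X; \Pi') = I(X; \Pi \mid \text{other columns})$, and moreover the Markov chain $V \to X \to \Pi'$ holds conditioned on the public randomness. Now condition on $V = 0$: in this case $\mathbf{X}_i = X$ and, in the NO case of $\varepsilon$-$\diffdist$, coordinate $i$ is independent of all other coordinates, so the conditioning on the other columns does not reduce the information coordinate $i$ carries — formally $I(\mathbf{X}_i; \Pi \mid V = 0) \geq I(X; \Pi' \mid V = 0)$, since fixing independent side information can only be exploited, not required. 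Applying Theorem~\ref{thm:EpsilonIC} to the induced distributed-detection protocol $\Pi'$ gives $I(X; \Pi' \mid V = 0) = \Omega(\varepsilon^{-2})$, and chaining the two inequalities yields $I(\mathbf{X}_i; \Pi \mid V = 0) = \Omega(\varepsilon^{-2})$, as desired.

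The step I expect to be the main obstacle is making the information-theoretic bookkeeping around the public-randomness embedding fully rigorous, rather than hand-wavy. In particular one must be careful that: (i) the reduction is \emph{legitimate} — the protocol $\Pi$ is promised to work for an \emph{arbitrarily} chosen planted index, so it must in particular succeed when $L = i$, and this is exactly what lets a single fixed $i$ work; (ii) the public randomness used to fill in the other $n-1$ columns is independent of $(V, X)$, so that conditioning on it turns $I(\mathbf{X}_i;\Pi)$ into an average of conditional informations each of which dominates $I(X;\Pi')$; and (iii) Theorem~\ref{thm:EpsilonIC} is stated for transcripts that are a function of $X$ alone with $V \to X \to \Pi$ a Markov chain, so one must verify the induced transcript $\Pi'$ (which also depends on the public-randomness columns, but not on $V$ given $X$) still satisfies this hypothesis — it does, after conditioning on the public coins, since the SDPI-based argument of Theorem~\ref{thm:EpsilonIC} is insensitive to independent side randomness. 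Once these points are nailed down, combining the Reduction Lemma with the Decomposable Lemma (Lemma~\ref{lemma:DecomposableLemma}) immediately gives the $\Omega(n/\varepsilon^2)$ communication lower bound for $\varepsilon$-$\diffdist$.
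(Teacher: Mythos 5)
Your reduction is essentially the same as the paper's (plant the distributed-detection input at coordinate $i$, pad the remaining columns with fair-coin bits, run the $\diffdist$ protocol, and invoke Theorem~\ref{thm:EpsilonIC}), and your final conclusion is correct. However, the information-theoretic bookkeeping in your middle paragraph contains a false identity and an inequality justified by reasoning that goes the wrong way. The claim $I(X;\Pi') = I(X;\Pi \mid \text{other columns})$ is not an identity: writing $R$ for the padding columns, independence $X \perp R$ gives $I(X;\Pi \mid R) = I(X;\Pi) + I(X;R \mid \Pi) \ge I(X;\Pi)$, so conditioning on independent side information can only \emph{increase} the mutual information, not leave it unchanged. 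Consequently, if one interpreted $I(X;\Pi' \mid V=0)$ as the conditional quantity $I(X;\Pi \mid R, V=0)$, the inequality $I(\mathbf{X}_i;\Pi \mid V=0) \ge I(X;\Pi' \mid V=0)$ you assert would actually point in the wrong direction. The heuristic ``fixing independent side information can only be exploited, not required'' describes decoding accuracy, not mutual information, and does not establish the bound you need.

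The fix --- which is what the paper does, and which also makes the public-vs-private-coin discussion unnecessary --- is to note that no inequality is needed at all: under $V=0$, the joint law of $(X, R, \Pi(A(X,R)))$ in the reduction is \emph{identical} to the joint law of $(\mathbf{X}_i, \mathbf{X}_{-i}, \Pi(\mathbf{X}))$ in the NO case of $\varepsilon$-$\diffdist$, because in both cases every entry of the $T \times n$ array is an i.i.d.\ fair coin and the transcript is the same function of that array (and the protocol's internal coins). Since mutual information depends only on the joint distribution, $I(\mathbf{X}_i;\Pi \mid V=0) = I(X;\Pi' \mid V=0)$ exactly, and Theorem~\ref{thm:EpsilonIC} finishes the argument. (The paper uses private coins for the padding, which has the small advantage that the induced distributed-detection transcript is literally $\Pi$ with no public coins to account for; your public-coin version also works once you replace the conditioning argument with the distributional identity.)
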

\begin{proof}

First we note that the following setup is a reduction from the \(\varepsilon\)-distributed detection problem to the \(\varepsilon\)-\(\diffdist\) problem for any arbitrary choice of \(i \in [n]\):

Let \(\widetilde{\mathbf{X}}^{(t)}_i\) be i.i.d.\ samples from a fair coin for all \(j \neq i\) and \(t \in [T]\) (each player can sample their respective simulated inputs independently), and \(\widetilde{\mathbf{X}}_i = X^{(1)}, \dots, X^{(T)}\), where \(X^{(1)}, \dots, X^{(T)}\) is the input of the \(\varepsilon\)-distributed detection problem. Run the \(\varepsilon\)-\(\diffdist\) oracle on \(\widetilde{\mathbf{X}}\). Outputting \(V = 1\) if and only if the \(\varepsilon\)-\(\diffdist\) oracle outputs YES is a protocol for solving the \(\varepsilon\)-distributed detection problem with probability \(1 - p\) i.e.\ the input distribution to the \(\varepsilon\)-\(\diffdist\) oracle is in the YES case if and only if \(V = 1\).

Now that we have shown the reduction, consider the following two distributions:
\begin{enumerate}
    \item The distribution of \((\widetilde{\mathbf{X}}, \Pi(\widetilde{\mathbf{X}}))\) when \(V = 0\).
    
    \item The distribution of \((\mathbf{X}, \Pi(\mathbf{X}))\) for the \(\varepsilon\)-\(\diffdist\) problem in the NO case.
\end{enumerate}

These two distributions are equal i.e.\ both are the distribution where i.i.d.\ fair coins are drawn for each entry of \(\mathbf{X}\).  Thus, by the lower bound of \Cref{thm:EpsilonIC}, \(I(\mathbf{X}_i;\Pi(\mathbf{X})) = I(\widetilde{\mathbf{X}}_i; \Pi(\widetilde{\mathbf{X}})) \geq \Omega(\varepsilon^{-2})\). 

Since what we have shown above is true for every \(i \in [n]\), we get that \(I(\mathbf{X}_i;\Pi(\mathbf{X})) \geq \Omega(\varepsilon^{-2})\) for every \(i \in [n]\). Thus, we have achieved our desired result.
\end{proof}

Note that the \(\varepsilon\)-\(\diffdist\) problem is not solvable with \(o(\varepsilon^{-2})\) players (samples). We obtain the following randomized communication complexity lower bound for $\offset\shortdash\diffdist$, where recall
correctness is distributional, as described at the beginning of this section. 

\begin{lemma}
	The communication complexity of solving the \(\offset\shortdash\diffdist\) problem with a constant \(1 - p\) probability, for any \(p \in [0, 0.5)\), is $\Omega\left(\frac{n}{\offset^2}\right)$.
	\label{lemma:InfoOfDiffDist}
\end{lemma}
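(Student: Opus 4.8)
The plan is to combine the two preceding lemmas with the elementary fact that the communication cost of a protocol upper bounds the entropy of its transcript. Fix any protocol $\Pi$ solving the $\varepsilon$-$\diffdist$ problem with probability at least $1-p$, and let $C$ denote its communication cost. Since at most $C$ bits are ever posted to the blackboard, the transcript $\Pi$ can take fewer than $2^{C+1}$ distinct values, and this bound on the support size holds irrespective of the input distribution or the players' private coins; hence $H(\Pi \mid V = 0) < C + 1$. Because conditional entropy is nonnegative, $I(\mathbf{X}; \Pi \mid V = 0) = H(\Pi \mid V=0) - H(\Pi \mid \mathbf{X}, V = 0) \le H(\Pi \mid V = 0) < C + 1$, so it suffices to prove $I(\mathbf{X}; \Pi \mid V = 0) = \Omega(n/\varepsilon^2)$.

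Conditioned on $V = 0$, the input $\mathbf{X}$ is distributed exactly as $\mu_0^n$, which is the regime in which Lemma~\ref{lemma:DecomposableLemma} applies; invoking it gives
\[ I(\mathbf{X}; \Pi \mid V = 0) \;\ge\; \sum_{i=1}^n I(\mathbf{X}_i; \Pi \mid V = 0). \]
By Lemma~\ref{lemma:ReductionLemma}, each term on the right is $\Omega(\varepsilon^{-2})$, with the hidden constant depending only on $p$ and in particular uniform in $i$ (the reduction it uses is symmetric across coordinates). Summing over the $n$ coordinates yields $I(\mathbf{X}; \Pi \mid V = 0) = \Omega(n\varepsilon^{-2})$, and therefore $C = \Omega(n/\varepsilon^2)$, which is the claimed bound.

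Essentially all of the technical content has already been discharged into Lemmas~\ref{lemma:DecomposableLemma} and~\ref{lemma:ReductionLemma}, so the only thing that needs care here is bookkeeping: one must carry out every step conditioned on the event $V = 0$ (the regime shared by both lemmas), confirm that the support-size argument controlling $H(\Pi \mid V=0)$ survives the presence of private randomness, and check that Lemma~\ref{lemma:ReductionLemma} supplies a single absolute constant so that the sum over $i \in [n]$ is genuinely $\Omega(n/\varepsilon^2)$ rather than $n$ copies of possibly-shrinking bounds. If one would rather place a prior on $V$ (e.g.\ uniform, with $L$ uniform on $[n]$ in the YES case) than condition on an event, the identical chain goes through after absorbing a factor of $\frac{1}{2}$ into the $\Omega(\cdot)$, using $I(\mathbf{X};\Pi \mid V) \ge \frac{1}{2}\, I(\mathbf{X};\Pi \mid V = 0)$ together with $I(\mathbf{X};\Pi \mid V) \le H(\Pi \mid V) < C+1$.
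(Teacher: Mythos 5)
Your proof is correct and takes essentially the same approach as the paper: the paper's own proof is the one-line "direct application of the Decomposable Lemma and the Reduction Lemma to lower bound the mutual information, and consequently communication," and you have simply unpacked that into its constituent steps (transcript entropy bounded by communication cost, then the chain $C > I(\mathbf{X};\Pi \mid V=0) \ge \sum_i I(\mathbf{X}_i;\Pi \mid V=0) \ge n\cdot\Omega(\varepsilon^{-2})$). The bookkeeping points you flag — carrying the conditioning on $V=0$ throughout, and noting that the support-size bound on the transcript holds uniformly over inputs and private coins — are exactly the right things to check, and they do hold.
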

\begin{proof}
We have our desired lower bound by direct application of \Cref{lemma:DecomposableLemma,lemma:ReductionLemma} to lower bound the mutual information, and consequently communication, of a protocol that solves the \(\varepsilon\)-\(\diffdist\) problem with constant probability.
\end{proof}

\subsection{Reduction from \texorpdfstring{\(\diffdist\)}{DiffDist} to the Experts Problem}
We can now show a lower bound for the discrete prediction experts problem by reducing to it from the \(\offset\shortdash\diffdist\) problem.
Define an oracle algorithm, \(\oracle\), that achieves \(\delta\) average regret on the expert prediction problem with constant probability more than \(\frac{1}{2}\). Our goal is to show that we can solve the \(\offset\shortdash\diffdist\) problem with constant probability more than $\frac{1}{2}$ by using \(\oracle\). 
At a high level, we will treat each player \(i\)'s bit string as the predictions that a set of \(n\) ``experts'' made on day \(i\). 
To provide intuition for our reduction, we first describe a simpler reduction from \(\offset\shortdash\diffdist\) to the experts problem. The instance of the experts problem we construct has 1 as the correct answer on every day. We let each bit index correspond to an expert, and consequently, the predictions of the experts on a day \(i\) are the bits of player \(i\).

In the YES case of \(\offset\shortdash\diffdist\), there is an index, i.e.\ an expert, which is correct on approximately \(\frac{1}{2} + O(\avgregret)\) of the days. Thus, \(\oracle\) should also be correct \(\frac{1}{2} + O(\avgregret)\) of the time with high probability. On the other hand, in the NO case, the experts are all predicting uniformly randomly. Hence, the best expert does not do better than flipping a fair coin for its prediction. Our goal is to have \(\oracle\) have accuracy at least \(\frac{1}{2} + O(\delta)\) in the YES case and less than \(\frac{1}{2} + O(\delta)\) days in the NO case. 

However, while \(\oracle\) ensures an upper bound on \(\delta\), it makes no guarantees about the maximum accuracy \(\oracle\) can achieve. For example, a simple algorithm that simply predicts \(1\) on each day will achieve 100\% accuracy in both cases.  Thus, our simple reduction is insufficient since it cannot ``force'' \(\oracle\) to be sufficiently inaccurate in the NO case.

\paragraph{Masking in the Reduction.} To remedy this issue, we introduce a notion of ``masking'' i.e.\ obfuscating the the correct answer of each day in our construction so we can ensure an upper bound on the accuracy of \(\oracle\) when in the NO case of the \(\offset\shortdash\diffdist\) problem. In our actual reduction, formulated in \Cref{algorithm:DiffDistToExperts}, we compute a ``mask'' for each day by sampling a random bit from an independent fair coin. The mask XOR'ed with 1 will be the correct answer to the experts problem on that day. In addition, we also XOR the mask with the player's bits corresponding to that day to produce the experts predictions. 
This masking procedure ensures that all expert predictions and true outcomes are mutually independent in the NO case. That is, since the mask is drawn i.i.d.\ from a fair coin on each day, and the expert predictions are also drawn i.i.d.\ from a fair coin, the masked expert predictions remain distributed according to i.i.d.\ fair coins.  So, the true outcome on each day is distributed according to a fair coin that is completely independent of the expert predictions and past information provided to \(\oracle\).  Thus, \(\oracle\) can do nothing to increase (or decrease) its probability of success on each day from \(\frac{1}{2}\). 
On the other hand, in the YES case, there still remains an expert that is correct on \(\frac{1}{2} + \Theta(\avgregret)\) days, so \(\oracle\) will still get a \(\frac{1}{2} + \Omega(\avgregret)\) fraction of days correct.

\begin{algorithm}[!htb]
    \caption{The following algorithm is a reduction from \(\offset\shortdash\DiffDist\) to the experts problem where \(\oracle\) is an oracle algorithm that solves the experts problem with \(\delta\) regret and probability \(\frac{1}{2}\). Let \(c = \sqrt{2\ln(24)}\) and set \(\offset = \delta(c + 1)\), which we assume is less than $1/2$. Let \(\oplus\) be the XOR operation.}
    
    \KwIn{$\{\mathbf{X}^{(1)},\ldots,\mathbf{X}^{(T)}\}$, where $\mathbf{X}^{(t)} \in \{0, 1\}^n$ for each \(t \in [T]\).}
    \begin{algorithmic}
		\STATE{Let \(\texttt{state}_0\) be the initial state of $\oracle$.} 
		\FOR{each $t\in[T]$}
		    \STATE{Player \(t\) does the following:}
            \begin{ALC@g}
            
            \STATE{Sample $\texttt{mask}_t$ from an independent fair coin.}
		    
		    \STATE{$\texttt{maskedX}_t\gets \mathbf{X}^{(t)} \oplus (\texttt{mask}_t)^n$.}
		    
		    \STATE{Compute prediction and next state: $\texttt{prediction}_t, \texttt{state}_t \gets \oracle(\texttt{state}_{t - 1}, \texttt{maskedX}_t)$}
		    \STATE{\(\oracle\) is correct on day \(t\) iff $\texttt{prediction}_t \oplus \texttt{mask}_t  = 1$}
		    
		    \LineIf{\(t < T\)}{write \(\texttt{state}_{t}\) to the blackboard}
		    \end{ALC@g}
        \ENDFOR
        \STATE{Let \(S\) be the fraction sample days that are correct (each player communicates a bit indicating whether the algorithm was correct on each day).}
        \LineIf{\(S < \frac{1 + \avgregret c}{2}\)}{\(\mathbf{return}\) 0 \(\mathbf{else\ return}\) 1}
    \end{algorithmic}
    \label{algorithm:DiffDistToExperts}
\end{algorithm}

\thmlowerbound*
\begin{proof}
We show this proof for $\delta=\frac{1}{2+2\sqrt{2\ln(24)}}$ and $p=\frac{1}{4}$ by showing that \Cref{algorithm:DiffDistToExperts} is a valid reduction that solves \(\offset\shortdash\diffdist\) problem with at least \(\frac{3}{4}\) probability; the proof extends naturally to any constant $\delta$, $p<\frac{1}{2}$. 

Define \(c =\sqrt{2\ln(24)}\) for simplicity, and set \(\offset = \delta (c + 1)\). Since \(\frac{1}{2} + \offset\) is a valid probability, we require \(\frac{1}{2} + \offset \leq 1\). This implies an upper bound of \(\delta \leq \frac{1}{2(c + 1)}\). 

Let \(S\) be the accuracy of \(\oracle\), i.e., the proportion of days it is correct on. 
We will now show both cases of the \(\offset\shortdash\diffdist\) problem are solved with constant probability. In the NO case, \(S\) is simply the mean of \(T\) independent flips of a fair coin, since the correct answer on each day is independent of all other information. 

We can use Hoeffding's inequality to bound the probability that \(S\) exceeds the decision threshold, noting that, in this case, \(\mathbf{E}[S] = \frac{1}{2}\).
\begin{align*}
\prob{S \geq \frac{1 + \avgregret c}{2}} &\leq \exp \left( - \frac{c^2}{2} \right) \leq \frac{1}{3}.
\end{align*}

In the YES case, let \(E\) be the accuracy of the expert that is correct with probability \(\frac{1}{2} + \avgregret(c + 1)\). Thus, it is the mean of \(T\) i.i.d.\ Bernoullis with parameter \(\frac{1}{2} + \delta(c + 1)\). We know that the probability \(S \leq E - \delta\) is less than \(\frac{1}{4}\) by the guarantee on \(\oracle\). As a result, we can upper bound the error probability as follows:

\begin{align*}
\prob{{S} < \frac{1 + \avgregret c}{2}} &\leq \prob{E < \frac{1}{2} + \left(\frac{c}{2} + 1 \right) \delta} + \prob{S < E - \delta} \\
&\le 2 \exp \left(-\frac{c^2}{2}\right) + \frac{1}{4} \quad < \frac{1}{3}.
\end{align*} 
The first inequality is by conditional probability being larger than joint probability and the union bound.  
The final line is by Hoeffding's inequality, \(\delta^2 T \ge 1\), and our choice of \(c\).

Thus, we have shown for both cases, there is at least a \(\frac{2}{3}\) probability of solving the \(\offset\shortdash\diffdist\) problem, and hence \Cref{algorithm:DiffDistToExperts} is a valid reduction. As a result, the total communication cost of \Cref{algorithm:DiffDistToExperts} is \(\Omega( \frac{n}{\avgregret^2})\) by \Cref{lemma:InfoOfDiffDist}.

Let \(C\) be the total communication cost of writing each \(\texttt{state}_t\) to the blackboard in \Cref{algorithm:DiffDistToExperts}. That is, the total communication of \Cref{algorithm:DiffDistToExperts} is \(C + T\), \(C\) for the \(\texttt{state}_t\)s and \(T\) for the correct outcomes.  We can assume that \(T \in o\left(\frac{n}{\delta^2}\right)\), otherwise our bound is trivial.  So this means that \(C \in \Omega(\frac{n}{\delta^2})\), and we have shown our desired lower bound on the communication of \(\oracle\). A streaming lower bound of \(\Omega(\frac{n}{\delta^2T})\) follows directly from this communication lower bound, which completes our proof for $\delta=\frac{1}{2+2\sqrt{2\ln(24)}}$ and $p=\frac{1}{4}$.
The proof extends naturally to all $\delta,p<\frac{1}{2}$.   
\end{proof}

\section{Prediction with Experts in the Standard Streaming Model}
\label{sec:maj}
As a warm-up to our near-optimal algorithm for online learning with experts in the random-order model, we show an algorithm that can handle arbitrary-order streams.  
To build intuition, we first consider the simpler discrete prediction with experts problem, and then generalize our algorithm for general costs in \([0, 1]\).  
Recall that in the discrete prediction problem, the cost of each decision is either $0$ or $1$. 
We first propose a space constrained version of the simplest version of the majority elimination algorithm where experts that are incorrect for a ``significant'' fraction of days are eliminated. 
The algorithm uses the desired $\tilde{O}\left(\frac{n}{T}\right)$ space complexity for correctness on a constant fraction of days, even when the best expert makes as many as $O(T/\log^2 n)$ mistakes.
When errors on only a $\delta$ fraction of days are permitted for some subconstant $\delta$, the algorithm uses $\tilde{O}\left(\frac{n}{T\delta}\right)$ space but demands that the number of mistakes made by the best expert is at most $O\left(\frac{\delta T}{\log^2 n}\right)$. 

\subsection{Discrete Prediction in the Standard Streaming Model}

Our algorithm for this upper bound is to run the majority elimination algorithm in small chunks at a time.  The typical majority elimination algorithm (not in the space constrained setting) maintains a voting pool of experts, that starts by including all of the experts.  On each day, the algorithm predicts the majority vote of the experts in the pool.  When the outcome is revealed, the algorithm removes any expert who made an incorrect prediction from the pool.  If the pool is empty, the algorithm resets by adding all of the experts back into the pool.  In this formulation, the algorithm makes at most \(\log n\) times as many mistakes as the best expert.  We modify this algorithm to work in the space constrained setting, and impose a laxer requirement on the experts in our pool to achieve a better bound.

The algorithm proceeds in rounds. 
At the beginning of each round, the algorithm initializes a pool, $P$, of $k = \frac{16n\log^2 n}{T\delta}$ different experts and the variable $u$ to mark the time that the round begins. 
On each day, the algorithm outputs the majority vote of the experts in the pool. 
Any expert that is inaccurate for at least $\frac{\delta}{8 \log n}$ fraction of the days since time $u$ is deleted. 
Once the pool $P$ is completely depleted of experts, the current round ends and the next round begins. 
A complete description is given in Algorithm~\ref{alg:SublinearAlgorithm}. 

\begin{algorithm}[!htb]
    \caption{An expert algorithm that maintains a pool of experts occupying \(\widetilde{O}(n / T \delta)\) space, and eliminates an expert if its accuracy drops below $1 - \frac{\delta}{8\log n}$. This is algorithm is a streaming analogue of the typical majority elimination for experts.}
		\label{alg:SublinearAlgorithm}
    \KwIn{Number $n$ of experts, number $T$ of rounds, fraction $1-\delta$ of mistakes}
    \begin{algorithmic}[1]
                \STATE{$k\gets\frac{16n\log^2 n}{T\delta}$, $u\gets 1$}
				\STATE{Let \(P\) be a random set of $k$ unique indices of $[n]$.}
				\FOR{each time $t\in[T]$}
				\STATE{Output the majority vote of the experts in \(P\).}
				\STATE{Discard any experts in $P$ with lower than \(1 - \frac{\delta}{8\log n}\) accuracy since time $u$.}
				\IF{$P=\emptyset$}
				\STATE{Let $P$ be a random set of $k$ unique indices of $[n]$.}
				\STATE{$x\gets 0$, $u\gets t$}
        \ENDIF
				\ENDFOR
    \end{algorithmic}
\end{algorithm}

\noindent
We first bound the number of mistakes made by the algorithm in a particular round. 
\begin{lemma}
\label{lem:mistakes}
Fix a \(\delta > \frac{16 \log^2 n}{T}\), and 
suppose a pool, $P$, of size $k= \frac{16n\log^2 n}{T\delta}$ 
is initiated by Algorithm~\ref{alg:SublinearAlgorithm} at time $t_0$ and $P\neq\emptyset$ before some later time $t$. 
Then the number of mistakes by the algorithm between times $t_0$ and $t$ is at most $\frac{(t-t_0)\delta}{2}+4\log n$.
\end{lemma}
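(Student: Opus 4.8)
The plan is to run the classical majority‑elimination charging argument, but to split the round $[t_0,t)$ into $O(\log n)$ geometrically shrinking \emph{phases} according to the current pool size, and to charge the algorithm's mistakes phase by phase. Throughout this round the pool $P$ only loses experts (a fresh set of $k$ experts is drawn only when $P$ empties, which by hypothesis does not happen before time $t$) and its marker $u$ stays equal to $t_0$, so $|P_d|$ is non‑increasing in $d$. First I would record the two immediate consequences of the elimination test: if expert $j$ is still in the pool on day $d$, its error rate since $t_0$ has never exceeded $\frac{\delta}{8\log n}$, so it has made at most $\frac{\delta}{8\log n}(d-t_0)+1$ mistakes on days $[t_0,d]$; and if $j$ was eliminated at some day $\tau_j$, it makes at most one more mistake than that on its elimination day. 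In particular, \emph{every} expert ever in the pool makes at most $\frac{\delta}{8\log n}(t-t_0)+1$ mistakes over the whole round. I would also note $k=\frac{16n\log^2 n}{T\delta}<n$, which follows from the hypothesis $\delta>\frac{16\log^2 n}{T}$.

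For $i=0,1,\dots$, define phase $i$ to be the set of days $d\in[t_0,t)$ with $|P_d|\in(k/2^{i+1},\,k/2^{i}]$. Because $|P_d|$ is non‑increasing, each phase is a (possibly empty) contiguous block of days, every expert appearing in the pool during phase $i$ was already in the pool on its first day, and since $1\le|P_d|\le k$ there are at most $\log n$ non‑empty phases. Fix a phase $i$, let $m_i$ be the number of days in phase $i$ on which the algorithm errs, and let $k_i\le k/2^i$ be the pool size on its first day. I would bound, from both sides, the number of $(\text{expert},\text{day})$ pairs such that the expert lies in the pool on a phase‑$i$ error day and also errs on that day. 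From below: on any day the majority vote is wrong, at least half the current pool errs, and $|P_d|>k/2^{i+1}$ there, so this count is at least $m_i\cdot\frac12\cdot\frac{k}{2^{i+1}}=m_i\,\frac{k}{2^{i+2}}$. From above: at most $k_i\le k/2^i$ experts are ever in the pool during phase $i$, and each of them errs at most $\frac{\delta}{8\log n}(t-t_0)+1$ times during the round (hence during phase $i$), so the count is at most $\frac{k}{2^i}\bigl(\frac{\delta}{8\log n}(t-t_0)+1\bigr)$. Comparing the two bounds gives $m_i\le 4\bigl(\frac{\delta}{8\log n}(t-t_0)+1\bigr)=\frac{\delta(t-t_0)}{2\log n}+4$.

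Summing over the at most $\log n$ non‑empty phases then yields that the algorithm makes at most $\log n\cdot\bigl(\frac{\delta(t-t_0)}{2\log n}+4\bigr)=\frac{\delta(t-t_0)}{2}+4\log n$ mistakes during the round, as claimed. The genuinely delicate point is the phase decomposition: the naive estimate — bound the \emph{total} number of expert mistakes over the round by roughly $\frac{\delta}{8\log n}\,k(t-t_0)+k$ and divide by the smallest pool size — is far too lossy, since the pool can shrink to size $1$ while the algorithm still errs on a constant fraction of the remaining days. The geometric phases are exactly what makes "number of experts available'' ($\le k/2^i$) trade off against "number of experts forced to err per bad day'' ($\ge k/2^{i+2}$) at every scale, so each phase contributes only $O\!\big(\delta(t-t_0)/\log n\big)+O(1)$ mistakes; the rest is bookkeeping the off‑by‑one terms coming from the precise form of the elimination test and checking that $k<n$ forces at most $\log n$ phases.
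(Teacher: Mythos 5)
Your proposal is correct and takes essentially the same approach as the paper: both decompose the round into $O(\log n)$ geometric phases according to pool size, charge each algorithm mistake against at least $k/2^{i+2}$ expert mistakes, and bound each surviving expert's mistakes by the elimination-test threshold before summing over phases. The only cosmetic difference is that you bound each expert's mistakes uniformly over the whole round while the paper bounds them phase by phase, but both telescope to the identical final estimate.
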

\begin{proof}
Let $u_1,\ldots,u_y$ be a sequence of times defined so that $u_i$ is the first time at which at most $\frac{k}{2^i}$ experts remain in the pool. 
Note that \(y \leq \lceil \log k \rceil\) by definition.
Let $n_i$ be the total number of mistakes the algorithm has made by time $u_i$. 
We note that $n_{i+1}-n_i$ mistakes are made by the algorithm between times $u_i$ and $u_{i+1}$, and each mistake requires at least $\frac{k}{2^{i+2}}$ mistakes across all experts, since there at least $\frac{k}{2^{i + 1}}$ experts before time \(u_{i + 1}\), and at least half of them must be wrong for a mistake to be made. Consequently, the total number of mistakes made by the experts between times $u_i$ and $u_{i+1}$ is at least $(n_{i+1}-n_i)\cdot\frac{k}{2^{i+2}}$. 
We can also see that at most $\left\lceil\frac{(u_{i+1}-t_0)\delta}{8\log n}\right\rceil$ mistakes can be made by each of the $\frac{k}{2^i}$ experts that are not deleted by time $u_{i}$, so the total number of mistakes made by the experts between times $u_i$ and $u_{i+1}$ is at most 
\[\left\lceil\frac{(u_{i+1}-t_0)\delta}{8\log n}\right\rceil\cdot\frac{k}{2^i}\le\frac{(u_{i+1}-t_0)k\delta}{8\cdot 2^i\log n}+\frac{k}{2^i}.\] 
Hence we have $(n_{i+1}-n_i)\cdot\frac{k}{2^{i+2}}\le\frac{(u_{i+1}-t_0)k\delta}{8\cdot 2^i\log n}+\frac{k}{2^i}$ so that
\[(n_{i+1}-n_i)\le\frac{(u_{i+1}-t_0)\delta}{2\cdot\log n}+4.\]
Therefore,
\[\sum_{i=1}^{y-1} \left(n_{i+1}-n_i\right)\le\sum_{i=1}^{y-1}\left(\frac{(u_{i+1}-t_0)\delta}{2\cdot\log n}+4\right)\le\frac{(t-t_0)\delta}{2}+4\log k.\]
Thus, we have shown our desired result.
\end{proof}
We now give the full guarantees for our algorithm. 
\begin{theorem}
Fix a \(\delta > \frac{16 \log^2 n }{T}\), and suppose the best expert makes at most $M\le\frac{\delta T}{128\log^2 n}$ mistakes.  
Then Algorithm~\ref{alg:SublinearAlgorithm} for the discrete prediction with experts problem uses $\widetilde{O}\left(\frac{n}{\delta T}\right)$ space and achieves regret at most $\delta$, with probability at least $4/5$. 
\label{thm:discrete-prediction-alg}
\end{theorem}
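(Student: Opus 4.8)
The plan is to combine the per-round mistake bound of Lemma~\ref{lem:mistakes} with a bound on the number of rounds, using a decoupling argument to control how many rounds can occur before the best expert gets sampled and "locked in." First I would establish the space bound: the pool has size $k = \frac{16 n \log^2 n}{T\delta}$, and each expert requires $\widetilde O(1)$ words (an index plus a running count of mistakes since time $u$), so the total space is $\widetilde O\!\left(\frac{n}{\delta T}\right)$ as claimed; we also need $k \le n$, which is exactly the hypothesis $\delta > \frac{16\log^2 n}{T}$.

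Next I would argue that once the best expert enters the pool on a "good" day, it is never deleted for the remainder of that round. Call a day $t$ \emph{bad} if, were the best expert sampled into a fresh pool at time $t$, it would subsequently be deleted before the round ends; call it \emph{good} otherwise. Since the best expert makes at most $M \le \frac{\delta T}{128\log^2 n}$ mistakes total, and deletion requires an error rate of at least $\frac{\delta}{8\log n}$ since the pool's start time, a crude accounting shows the number of bad days is small — roughly $O\!\left(\frac{M \log n}{\delta}\right) = O\!\left(\frac{T}{\log n}\right)$ — so bad days contribute at most that many rounds (each round starting on a bad day lasts at least one day). For rounds starting on good days, the key observation is that each resampling independently includes the best expert with probability $\frac{k}{n} = \frac{16\log^2 n}{T\delta}$; conditioned on a round starting on a good day, if that round samples the best expert then the round never ends (the pool is never emptied because the best expert survives), so the process terminates. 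Hence the number of good-day-initiated rounds is stochastically dominated by a geometric random variable with success probability $\frac{k}{n}$, giving $O\!\left(\frac{n}{k}\log\frac{1}{p}\right) = O\!\left(\frac{T\delta}{\log n}\right)$ such rounds with probability $\ge 4/5$. Here I would need to be slightly careful that "sampling the best expert" and "the day being good" interact correctly — a round starting on a good day and sampling the best expert genuinely never terminates — and that the geometric-domination argument survives the adversarial choice of where each round begins, which is exactly the decoupling idea sketched in Section~\ref{sec:overview}: condition on the sequence of round-start times generated sequentially.

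Finally I would assemble the total mistake count. By Lemma~\ref{lem:mistakes}, a round spanning $[t_0, t)$ contributes at most $\frac{(t-t_0)\delta}{2} + 4\log n$ mistakes; summing over all rounds, the $\frac{(t-t_0)\delta}{2}$ terms telescope to at most $\frac{T\delta}{2}$, and the $4\log n$ term is paid once per round. With $R = O\!\left(\frac{T\delta}{\log n}\right) + O\!\left(\frac{T}{\log n}\right)$ rounds (good plus bad) with probability $\ge 4/5$, the additive contribution is $R \cdot 4\log n = O(T\delta)$, provided the constants in $k$ and the deletion threshold are chosen appropriately (this is where the $\frac{1}{128}$ and the $\frac{1}{8\log n}$ slack are spent). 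Combined with the trivial fact that the best expert's own $M \le \frac{\delta T}{128\log^2 n}$ mistakes are lower-order, the total number of mistakes is at most $\delta T$, i.e.\ regret at most $\delta$, with probability at least $4/5$. The main obstacle is the second step: making the decoupling/geometric-domination argument rigorous in the adversarial-order setting, since the day on which each new round begins is determined by the (possibly adversarial) stream rather than being random, so the independence of "does this round sample the best expert" across rounds must be justified carefully against the history-dependent round boundaries, and the good/bad dichotomy must be defined so that a good-day round that samples the best expert provably never closes.
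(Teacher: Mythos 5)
Your approach matches the paper's proof step for step: the good/bad-day decomposition (a day is bad if the best expert, freshly sampled there, would later be discarded), the bound $B \le |\text{Bad}| = O(M \log n / \delta)$ on rounds starting on bad days, the decoupling/sequential-construction argument that stochastically dominates the number $G$ of good-day rounds by a geometric variable with success probability $k/n$, and the final application of Lemma~\ref{lem:mistakes} to the bounded round count. The one place you hand-wave is a genuine quantitative gap, and it is not a matter of ``constants'' or slack in $128$ versus $\frac{1}{8\log n}$ as you suggest. From the stated hypothesis $M \le \frac{\delta T}{128 \log^2 n}$ you correctly derive $|\text{Bad}| = O(T/\log n)$, but then the bad-day rounds alone contribute up to $B \cdot 4 \log n = O(T)$ mistakes, which overshoots the target $O(\delta T)$ by a factor of $1/\delta$. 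For the argument to close you need $|\text{Bad}| = O(\delta T / \log n)$, i.e., $M \le O\left(\frac{\delta^2 T}{\log^2 n}\right)$. That is exactly what the paper's own proof uses --- it invokes ``Since $M \le \frac{\delta^2 T}{128 \log^2 n}$'' when bounding $B$ --- and it is consistent with the informal Theorem~\ref{thm:main}, which assumes $M \le \frac{\delta^2 T}{1280 \log^2 n}$. The theorem statement as printed appears to be missing a power of $\delta$; your proposal inherits this inconsistency rather than resolving it, but the remainder of your argument is the same as the paper's and is correct under the $\delta^2$ hypothesis.
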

\begin{proof}
Algorithm~\ref{alg:SublinearAlgorithm} only makes more than $\delta T$ total mistakes if it completes at least $\frac{\delta T}{8\log n}$ rounds. This is because, by \Cref{lem:mistakes} and the lower bound on \(\delta\), the bound on mistakes after $\frac{\delta T}{8\log n}$ rounds is
\[\frac{T\delta}{2}+\frac{T\delta}{8\log n}\cdot4\log n\le T\delta.\] 

Thus our goal is to show that the probability that \Cref{alg:SublinearAlgorithm} completes at least $\frac{\delta T}{8\log n}$ rounds is low. We will accomplish this by arguing that probability of the best expert being either (1) sampled and deleted or (2) never sampled in the first $\frac{\delta T}{8\log n}$ rounds is low. 
We can show that the first case can only happen with small probability through a standard calculation. 
We would like to similarly say that conditioned on a large number of rounds, the probability that the algorithm sampled and subsequently deleted the best expert is low. 
Unfortunately, the conditional event that the algorithm completes a large number of rounds can significantly alter the distribution of events, resulting in a more involved analysis. 

We thus use a decoupling argument for the distribution of times to instead analyze the distribution for the total number of rounds. 
Let $d_0,d_1,d_2,\ldots$ be the random variables representing the times on which the pool of $k$ experts is sampled, so that $d_i$ is the random variable for the time on which the pool of experts is empty for the $i$-th time and thus resampled. 
We construct a sequence $t_0,t_1,t_2,\ldots$ of times so that the distribution of $t_0,t_1,t_2,\ldots$ will match the distribution of $d_0,d_1,d_2,\ldots$. 
Let $t_0=0$ and for each $i>0$, let $t_i$ be drawn uniformly from the distribution of possible times at which a new pool of $k$ experts drawn at time $t_{i-1}$ is completely removed from the pool. 
The sequence $\{t_i\}_i$ is terminated if the experts drawn at time $t_{i-1}$ are not all removed from the pool by time $T$. 
Note that the process in which the sequence $t_0,t_1,\ldots$ is generated matches the process in which the sequence $d_0,d_1,\ldots$ is determined, so that their distributions are identical so we can instead work with the random sequence $t_0,t_1,\ldots$.

Let $\bad$ be the set of times on which the best expert would be eliminated by the algorithm if it were added to pool of experts on a time $t\in \bad$. 
Because the best expert makes at most $M$ mistakes and the algorithm deletes experts that have made mistakes on at least $\frac{\delta}{8\log n}$ fraction of the times since they have been in the pool, then it follows that $|\bad|\le\frac{8M\log n}{\delta}$. 

Let $R$ be the random variable that corresponds to the total number of rounds in the algorithm. 
Let $B$ be the random variable that corresponds to the total number of rounds in the algorithm that started on days $t$ such that $t\in \bad$. 
Let $G$ be the random variable that corresponds to the total number of rounds in the algorithm that started on days $t$ such that $t\notin \bad$. 
Observe that $B\le|\bad|\le\frac{8M\log n}{\delta}$ and $R\le B+G\le\frac{8M\log n}{\delta}+G$. 
Since $M\le\frac{\delta^2 T}{128\log^2 n}$, then 
\[R\le\frac{\delta T}{16\log n}+G.\]
Thus it remains to analyze the distribution of $G$. 

Observe that if the best expert is sampled on a time $t$ with $t\notin\bad$, then the best expert will not be deleted and thus there will be no subsequent round. 
Hence if $G=j$ for some $j\ge 1$, then the best expert must not have been sampled into the first $j-1$ pools that were sampled on times $t_1,\ldots,t_{j-1}\notin \bad$. 
It follows that $\PPr{G \ge j}\le(1-k/n)^{j - 1}$ for each integer $j\ge 1$. Since $k=\frac{16n\log^2 n}{T\delta}$, we have
\begin{align*}
 \PPr{G \ge \frac{\delta T}{16 \log n}} &\le \left(1 - \frac{16 n \log^2 n}{T \delta} \cdot \frac{1}{n} \right)^{\frac{\delta T}{16 \log n}} \\
 &\le \left(1 - \frac{16 \log^2 n}{T \delta}\right)^{\frac{T \delta}{16 \log^2 n} \cdot \log n} \\
 &\le e^{-\log n} 
 \qquad \qquad \le \frac{1}{n} 
\end{align*}

Therefore, $\PPr{R\ge\frac{\delta T}{16\log n}+\frac{\delta T}{16\log n}}\le\frac{1}{n}$. 
Hence,
\[\PPr{R<\frac{\delta T}{8\log n}}\ge 1-\frac{1}{n},\]
which is \(\ge \frac{4}{5}\) for sufficiently large \(n\). 
Conditioned on the event that $R<\frac{\delta T}{8\log n}$, then by Lemma~\ref{lem:mistakes}, the total the number of mistakes by the algorithm is at most 
\[\frac{T\delta}{2}+4R\log n\le\frac{T\delta}{2}+\frac{T\delta}{8\log n}\cdot4\log n\le T\delta.\] 
The average regret of the algorithm is defined as the difference between the error rate of the algorithm and the optimal error rate (the error rate of the best expert). 
This is upper bounded by the error rate of the algorithm, which is at most $T\delta/T = \delta$.
\end{proof}

\subsection{Online Prediction for General \texorpdfstring{$[0, 1]$}{[0,1]} Costs in the Standard Streaming Model}

Now, we show that with a simple change, we can modify this algorithm to work for general \([0, 1]\) costs. The trouble with this algorithm for \([0, 1]\) costs is that it is no longer clear what it means to take the ``majority prediction" on each day.  However, there are other sequential prediction algorithms that we could use, that work well in the case of general costs. Thus, for \([0, 1]\) costs, we can implement Algorithm~\ref{alg:SublinearAlgorithmGeneralCosts}.

\begin{algorithm}[!htb]
    \caption{An expert algorithm that works for \([0, 1]\) costs in the standard streaming model.}
		\label{alg:SublinearAlgorithmGeneralCosts}
    \KwIn{Number $n$ of experts, number $T$ of rounds, fraction $1-\delta$ of mistakes, (black-box sequential prediction algorithm of choice)}
    \begin{algorithmic}[1]
        \STATE{$k\gets\frac{16 \beta n\ln n \ln T}{T\delta} \in \widetilde{O}(\frac{n}{T \delta})$, $u\gets 1$}
		\STATE{Let \(P\) be a random set of $k$ unique indices of $[n]$.}
		\STATE{Initialize a sequential prediction algorithm (Definition \ref{def:seq-pred-alg}) for the experts in \(P\) with \(\eps = \frac{1}{2}\)}
		\FOR{each time $t\in[T]$}
    		\STATE{Output prediction according to sequential prediction algorithm running on \(P\).}
    		\IF{every expert in \(P\) has an error rate higher than \(\frac{\delta}{8}\) since time \(u\)}
        		\STATE{Let $P$ be a random set of $k$ unique indices of $[n]$.}
        		\STATE{$x\gets 0$, $u\gets t$}
        		\STATE{Re-initialize sequential prediction algorithm for experts in \(P\) with \(\eps = \frac{1}{2}\)}
            \ENDIF
		\ENDFOR
    \end{algorithmic}
\end{algorithm}

Our analysis follows the same structure as the one for discrete costs.  We can use the guarantee of the sequential prediction algorithm (Definition \ref{def:seq-pred-alg}) in place of Lemma \ref{lem:mistakes}.  The bound on the number of ``bad days" that is used in Theorem~\ref{thm:discrete-prediction-alg} is now somewhat more involved and is presented as its own lemma.  

\begin{lemma}
Let \(\bad\) be the set of times on which the best expert would be eliminated by the algorithm if it were added to a pool of experts on a time \(t \in \bad\), and suppose the best expert incurs total cost at most \(M\).  Then, 
\[|\bad|\le \frac{8 M}{\delta}.\]
\label{lem:bad-days}
\end{lemma}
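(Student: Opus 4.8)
The plan is to track how many "rounds" (timer intervals) the best expert could survive before being eliminated, and to show that the *total* length of all such intervals is at most $8M/\delta$. First I would set up notation: following the structure of Lemma~\ref{lem:mistakes} and Algorithm~\ref{alg:SublinearAlgorithmGeneralCosts}, the deletion criterion says an expert is removed once its cumulative cost since the timer $u$ exceeds a $\frac{\delta}{8}$ fraction of the elapsed time. So a time $t$ belongs to $\bad$ exactly when, *were* the best expert placed in the pool at the round-start time $u(t)$ corresponding to $t$, its accumulated cost over $[u(t),t]$ would already exceed $\frac{\delta}{8}(t - u(t))$. The key point is that the timer $u$ is reset only at the start of each round, and each round is a disjoint time interval.

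The main step is a charging/amortization argument. Partition the stream into the (at most $T$) rounds, and within each round consider the contiguous block of bad times. On any such block $[u, t]$, by definition of $\bad$ the best expert has incurred cost more than $\frac{\delta}{8}(t-u)$ over days in that block (or rather, over $[u,t]$, which only contains the bad days plus possibly good days where the best expert is still "catching up"). I would argue that across a single round, the set of bad times forms a prefix-like structure: once the running-average cost of the best expert drops back below $\frac{\delta}{8}$ it is no longer bad, but this can toggle. To handle the toggling cleanly, I would instead directly bound $|\bad|$ by observing that \emph{each} bad day $t$ can be associated with $\Theta(\delta)$ units of the best expert's cost that have accumulated strictly within the current round and not yet been "spent." More precisely: if $t$ is bad in round with start $u$, then cost$(u,t) > \frac{\delta}{8}(t-u) \ge \frac{\delta}{8}|\{\text{bad days in }[u,t]\}|$, so the number of bad days in that round is at most $\frac{8}{\delta}\cdot(\text{cost of best expert in that round})$. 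Summing over all rounds, which are disjoint, gives $|\bad| \le \frac{8}{\delta}\sum_{\text{rounds}}(\text{cost of best expert in round}) \le \frac{8M}{\delta}$, since the rounds partition $[T]$ and the best expert's total cost is at most $M$.

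The step I expect to be the main obstacle is making the inequality "number of bad days in round $\le \frac{8}{\delta}\cdot(\text{best expert's cost in that round})$" fully rigorous, because a bad day $t$ certifies cost exceeding $\frac{\delta}{8}(t-u)$ over the \emph{whole} interval $[u,t]$, not just over the bad days in it; and within one round the bad times need not be contiguous. I would resolve this by taking, within each round, the \emph{last} bad day $t^\ast$ of the round (if any): it satisfies cost$(u, t^\ast) > \frac{\delta}{8}(t^\ast - u)$, and since $t^\ast - u$ is at least the number of bad days in that round, we get (bad days in round) $< \frac{8}{\delta}\,\text{cost}(u,t^\ast) \le \frac{8}{\delta}\,(\text{best expert's cost in round})$. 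Summing over the at-most-$T$ disjoint rounds and using $\sum(\text{best expert's cost in round}) = (\text{best expert's total cost}) \le M$ yields $|\bad| \le \frac{8M}{\delta}$, as claimed. A minor technical point to check is that "bad" is well-defined independent of the actual pool contents — it is a hypothetical about the best expert alone given the fixed round-start times $u$ — so the round decomposition used here is exactly the one the algorithm actually produces, and the argument goes through.
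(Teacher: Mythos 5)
Your proof is built on a misreading of the definition of $\textsc{Bad}$, and the core charging argument does not go through once the definition is read correctly.

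The lemma defines $t \in \textsc{Bad}$ as a \emph{forward-looking} property of the day $t$: if the best expert were freshly added to the pool \emph{on day $t$} (so the timer $u$ is set to $t$), would the algorithm later delete it? Concretely, $t$ is bad exactly when there exists some \emph{future} day $e \ge t$ with $\sum_{i=t}^{e} m_i \ge \frac{\delta}{8}(e - t + 1)$, where $m_i$ is the best expert's cost on day $i$. This set is determined entirely by the cost sequence $(m_i)_i$ and has nothing to do with the algorithm's actual (random) round boundaries. You instead interpret ``$t$ is bad'' as a \emph{backward-looking} statement keyed to the run's round-start time $u(t)$: that the best expert's accumulated cost over $[u(t), t]$ already exceeds $\frac{\delta}{8}(t - u(t))$. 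That is a different, and random, set. The claim you rely on --- that the last bad day $t^\ast$ in a round starting at $u$ satisfies $\mathrm{cost}(u, t^\ast) > \frac{\delta}{8}(t^\ast - u)$ --- simply does not follow from the correct definition, because a day's badness is witnessed by an interval that starts at $t^\ast$ and runs into the future, not by the interval $[u, t^\ast]$. As a sanity check, the lemma is later used to bound $B$, the number of rounds that \emph{start} on a bad day; with your definition no round-start day is ever bad (the interval $[u(t),t]$ has length zero there), so your version of the lemma would not feed into the theorem at all.

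The paper's argument instead covers the bad days directly by disjoint ``witness'' intervals defined from the forward-looking certificates. For each bad day $t_j$ one picks the first deletion day $e_j \ge t_j$, which gives $\sum_{i=t_j}^{e_j} m_i \ge \frac{\delta}{8}(e_j - t_j + 1)$. One then greedily selects a subsequence of bad days $t_{a_1} < t_{a_2} < \cdots$ so that the intervals $[t_{a_k}, e_{a_k}]$ are pairwise disjoint yet still cover every bad day. Disjointness lets one sum the certificates to get total cost at least $\frac{\delta}{8} \cdot |\textsc{Bad}|$, and this total cost is at most $M$, giving the bound. This greedy interval-cover over the \emph{witness} intervals (not the algorithm's rounds) is the ingredient your proposal is missing; you would need to replace your round-based partition with this cover, since $\textsc{Bad}$ has no canonical relationship to the round boundaries.
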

\begin{proof}
We show this via amortized analysis. 
For convenience, denote the threshold \(L = \frac{\delta}{8}\) and let $m_i$ be the cost of the best expert on day $i$ for each $i\in[T]$. 
Let $\{t_1,\ldots,t_b\}$ be the days in $\bad$, so that we would like to show that $b=|\bad|\le\frac{8 M}{\delta}$. 
For each $t_j\in\bad$ with $j\in[b]$, i.e., the $j$-th bad day, let $e_j$ be the day on which the best expert would be deleted if it were sampled on day $t_j$, so that $\sum_{i=t_j}^{e_j} m_i\ge (e_j-t_j+1)L$.

We define a subsequence $t_{a_1},\ldots,t_{a_{b'}}$ of $t_1,\ldots,t_b$ as follows. 
Let $a_1=1$ and for each $k>1$, let $a_k=\min_{i\in[b]}\{i\,:\,t_i>e_{a_{k-1}}\}$. 
In other words, $t_{a_k}$ is the first bad day after the deletion of the $(k-1)$-th term in the subsequence. 
Thus the sequence $t_{a_1},\ldots,t_{a_{b'}}$ is a subsequence of $t_1,\ldots,t_b$ and the intervals $[t_{a_1},e_{a_1}],\ldots,[t_{a_{b'}},e_{a_{b'}}]$ are disjoint, so we have that \( \sum_{i = 1}^T m_i \ge \sum_{j = 1}^{b'} \sum_{i = t_{a_j}}^{e_{a_j}} m_i\).
We also know that each of the \(b\) bad days is in one of these intervals, so $\sum_{j=1}^{b'} e_{a_j}-t_{a_j}+1\ge b$.

Therefore, we have 
\[ M = \sum_{i = 1}^T m_i \ge \sum_{j = 1}^{b'} \sum_{i = t_{a_j}}^{e_{a_j}} m_i \ge \sum_{j=1}^{b'} (e_{a_j}-t_{a_j}+1)L\ge bL.\]

It follows that $b\le\frac{M}{L}$, where $b=|\bad|$ is the number of bad days and $L=\frac{\delta}{8}$. 
Thus,
\[|\bad|\le \frac{8 M}{\delta}.\]
\end{proof}

\noindent This allows us to give the full guarantees for the algorithm for general costs.

\begin{theorem}
Let \(\delta > \frac{16 \beta \ln^2 n}{T}\), and suppose the best expert makes at most $M\le\frac{\delta^2 T}{128 \beta \ln n}$ mistakes.  
Then Algorithm~\ref{alg:SublinearAlgorithmGeneralCosts} for the online learning with experts problem uses $\widetilde{O}\left(\frac{n}{\delta T}\right)$ space and achieves regret at most $\delta$ in expectation.

Recall that \(\beta\) is a fixed constant that depends on the black-box sequential prediction algorithm that is used (Definition \ref{def:seq-pred-alg}).
\label{thm:general-prediction-alg}
\end{theorem}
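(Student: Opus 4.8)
The plan is to follow the same template as the proof of Theorem~\ref{thm:discrete-prediction-alg}, substituting the black-box guarantee of Definition~\ref{def:seq-pred-alg} for the majority-vote bound of Lemma~\ref{lem:mistakes} and Lemma~\ref{lem:bad-days} for the discrete count of bad days. Fix the best expert $i^\star$. The key structural observation is that a round ends exactly when every expert currently in the pool has average cost exceeding $\delta/8$ since the round began; this event depends only on the fixed cost sequence and on which $k$ experts were sampled into the pool, so the round boundaries $0 = \tau_0 < \tau_1 < \cdots < \tau_R \le T$ are a deterministic function of the sampled pools $P_1, P_2, \dots$ and, in particular, are independent of the internal randomness of the sequential-prediction subroutine. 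I would condition on the pools throughout the argument.

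First I would bound the expected cost incurred within a single round of length $\ell_r = \tau_r - \tau_{r-1}$. Because the round has not ended before its last day, at least one expert $i \in P_r$ has average cost at most $\delta/8$ since the round began, hence cumulative cost at most $\frac{\delta \ell_r}{8} + 1$ over the whole round (adding at most $1$ for the last day). Applying Definition~\ref{def:seq-pred-alg} with $\eps = \frac12$ to the $k$-expert pool over this round bounds the algorithm's expected cost on the round by $\frac32\bigl(\frac{\delta \ell_r}{8} + 1\bigr) + 2\beta \ln k$. Summing over all $R$ rounds and using $\sum_r \ell_r \le T$, the expected cost of the algorithm conditioned on the pools is at most $\frac{3\delta T}{16} + R\bigl(\frac32 + 2\beta \ln k\bigr)$; for $k = \frac{16\beta n \ln n \ln T}{T\delta}$ the per-round overhead $\frac32 + 2\beta \ln k$ is $O(\beta \ln n)$ (the $\ln\ln T$ term in $\ln k$ being absorbed against the $\ln T$ that appears in $n/k$ below).

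Next I would bound $\expect[R]$. Let $\bad$ be the set of days on which $i^\star$, if inserted into a freshly sampled pool, would eventually be deleted; by Lemma~\ref{lem:bad-days} we have $|\bad| \le 8M/\delta$. Write $R = B + G$, where $B$ counts the rounds that begin on a day in $\bad$ and $G$ counts the rounds that begin on a day outside $\bad$, so that $B \le |\bad| \le 8M/\delta$. To handle $G$ I would use the decoupling idea described in the overview: whether a given round is the $j$-th round to begin outside $\bad$ is determined by the \emph{earlier} pools only, so conditioned on that event the round's pool is still a uniformly random $k$-subset and contains $i^\star$ with probability $k/n$; moreover, once $i^\star$ is sampled into a pool whose round begins outside $\bad$, it is never deleted and the process halts. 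Peeling off pools one at a time gives $\PPr{G \ge j} \le (1 - k/n)^{j-1}$ and hence $\expect[G] \le n/k$. Plugging in $M \le \frac{\delta^2 T}{128\beta\ln n}$ and $k = \frac{16\beta n \ln n \ln T}{T\delta}$ yields $\expect[R] \le 8M/\delta + n/k = O\bigl(\frac{\delta T}{\beta\ln n}\bigr)$.

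Combining the two parts gives $\expect[\text{cost of the algorithm}] \le \frac{3\delta T}{16} + O\bigl(\frac{\delta T}{\beta\ln n}\bigr)\cdot O(\beta\ln n) \le \delta T$ once the constants are chosen as in the statement; dividing by $T$ and using that the regret is at most the algorithm's average cost gives expected regret at most $\delta$. The space bound is immediate, since the pool together with the state of the sequential-prediction subroutine occupies $O(k) = \widetilde{O}(n/(\delta T))$ words. I expect the round-counting step to be the main obstacle: one must make the decoupling rigorous, i.e., argue that conditioning on a large number of rounds having occurred does not bias the pool that starts the $j$-th good round away from the uniform distribution --- exactly the difficulty flagged for the discrete case, resolved by peeling off pools one at a time and using that the identity of the good rounds is a function of the earlier pools alone. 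A lesser nuisance is the $\ln k$ versus $\ln n$ discrepancy, which is harmless whenever $\ln\ln T = O(\ln n)$ and in any case absorbed into the $\widetilde{O}(\cdot)$ space bound.
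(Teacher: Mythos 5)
Your proposal is correct and follows essentially the same route as the paper's proof: the same round decomposition, the same use of the black-box guarantee (Definition~\ref{def:seq-pred-alg}) with $\eps=1/2$ to bound per-round cost by $\tfrac{3}{2}(\tfrac{\delta\ell_r}{8}+1)+2\beta\ln k$, the same $R=B+G$ split with Lemma~\ref{lem:bad-days} controlling $B$, and the same geometric tail $\PPr{G\ge j}\le(1-k/n)^{j-1}$ from the resampling-with-replacement argument. The one place you deviate is in how the round count feeds into the final bound: the paper shows $\PPr{R<\tfrac{\delta T}{8\beta\ln n}}\ge 1-\tfrac{1}{T}$, conditions on that event, and then handles the failure tail by noting the total cost is trivially at most $T$; you instead bound $\expect[G]\le n/k$ directly and combine $\expect[R]$ linearly with the per-round overhead, which cleanly sidesteps the conditioning and tail-probability bookkeeping. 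Both are valid (the paper's tail step needs $\delta T\ge 2$, which follows from $\delta>16\beta\ln^2 n/T$), and your version is arguably a slightly tidier accounting of the same argument. Your decoupling discussion is also sound: the round boundaries and the good/bad classification of a round's start day depend only on previously revealed pools, so each freshly drawn pool is uniform conditioned on that history, which is exactly what makes the geometric bound go through.
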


\begin{proof}
This proof follows the same structure as the proof of Theorem~\ref{thm:discrete-prediction-alg}. 
First, we note that for our algorithm to be well defined, we require that \(k \le n\), which means that we require
\[ \frac{16 \beta n \ln^2 n}{T \delta} \le n \Longrightarrow \frac{16 \beta \ln^2 n}{T} < \delta.\]

Again, let $R$ correspond to the total number of rounds in the algorithm, let \(B\) be the number of rounds that start on bad days, and \(G\) be the number of rounds that start on good days, so \(R = B + G\).  By Lemma \ref{lem:bad-days}, we know that \(B \le |\textsc{Bad}| \le \frac{8M}{\delta} \le \frac{\delta T}{16 \beta \ln n} \).  

We also know that if \(G = j\), for some \(j \ge 1\) then the best expert was not sampled on the first \(j - 1\) rounds that started on good days.  So \(\PPr{G \ge j} \le \left(1 - k/n\right)^{j - 1}\).  Since \(k = \frac{16 \beta n \ln n \ln T}{T \delta}\), this means 
\begin{align*}
    \PPr{G \ge \frac{\delta T}{16 \beta \ln n}} &\le \left(1 - \frac{16 \beta n \ln n \ln T}{T \delta} \cdot \frac{1}{n} \right)^{\frac{\delta T}{16 \beta \ln n}} \\
    &\le \left(1 - \frac{16 \beta \ln n \ln T }{T \delta} \right)^{\frac{\delta T}{16 \beta \ln n \ln T} \cdot \ln T} \\
    &\le e^{-\ln T} \qquad \le \frac{1}{T}.
\end{align*}

So this gives us
\[\PPr{R < \frac{\delta T}{8 \beta \ln n}} \ge 1 - \frac{1}{T}.\]

Now, conditioning on the event that \(R < \frac{\delta T}{8 \beta \ln n}\), we can bound the expected cost of our algorithm.  Consider a round \(\mathbf{r}\) that goes from time \(\mathbf{r}_{\text{start}}\) to time \(\mathbf{r}_{\text{end}}\).  We know by the guarantee for the sequential prediction algorithm (Definition \ref{def:seq-pred-alg}),
\begin{align*}
    \textbf{E}\Bigg[ \text{cost of seq.\ pred.\ alg.\ on } \mathbf{r} \Bigg] &\le \left(1 + \frac{1}{2}\right) \Bigg[ \text{cost of best expert in pool over } \mathbf{r} \Bigg] + 2 \beta \ln k.
\end{align*}
Our condition for running this round was that until the last day, when we decided to resample, there was at least one expert who had at most \(\frac{\delta}{8}\) average cost.  The cost on that last day is trivially upper bounded by 1.  So we have 
\begin{align*}
    \textbf{E}\Bigg[ \text{cost of seq. pred. alg. on } \mathbf{r} \Bigg] &\le \frac{3}{2} \Bigg[ \frac{\delta}{8} \cdot (\mathbf{r}_{\text{end}} - \mathbf{r}_{\text{start}}) \Bigg] + 1 + 2 \beta \ln k
\end{align*}
Summing over all rounds, this gives us
\begin{align*}
    \textbf{E}\Bigg[ \text{cost of algorithm} \Bigg] &\le \sum_{\text{rounds } \mathbf{r}}  \left[ \frac{3}{16} \cdot\delta \cdot (\mathbf{r}_{\text{end}} - \mathbf{r}_{\text{start}})  + 1 + 2 \beta \ln k \right]\\
    &\le \frac{3}{16} \cdot \delta T + R \cdot (1 + 2 \beta \ln n) \\
    &\le \frac{3}{16} \cdot \delta T + \frac{\delta T}{8 \beta \ln n} \cdot (1 + 2 \beta \ln n)\\
    &\le \frac{\delta}{2} \cdot T.
\end{align*}
So with probability at least \(1 - \frac{1}{T}\), our algorithm has average cost at most \(\delta/2\) in expectation, and with probability at most \(\frac{1}{T}\), our algorithm can have cost up to \(T\).  So this algorithm achieves at most \(\delta\) cost, and therefore at most \(\delta\) regret in expectation.    
\end{proof}

\section{General Costs in the Random-Order Streaming Model}
\label{sec:mw}
In this section, we consider the online learning with experts problem, in which the cost of the decision of an expert on each day can range from $[0,\rho]$, where $\rho>0$ is the width of the problem. 
Without loss of generality, we assume $\rho=1$ throughout this section and instead incur a multiplicative factor in the regret in the guarantees of our algorithms, i.e., our algorithms will have regret $\rho\delta$ rather than $\delta$. 
Whereas the previous algorithm provided guarantees on arbitrary streams, our main algorithm in this section will focus on the random-order streaming model. 

The main result in the previous section, for arbitrary-order streams, relied on an assumption that the best expert incurred sub-constant average cost.  This allowed us to conclude that there were not too many ``bad" days in the stream, where a ``bad" day is one where if we start a round with the best expert on that day, the best expert will appear to do badly, causing the round to end.

For random-order streams, this is no longer a problem, because the best expert will effectively do uniformly well across the entire stream.  This means that any day on which we sample the best expert will likely be a ``good" day.  This allows us to remove the condition on the best expert.  

We will first show an algorithm that achieves \(\delta\) regret, if it knows \(M\), the number of mistakes  made by the best expert.  Then we show how to modify the algorithm to include a searching phase which allows us to estimate \(M\) and so the algorithm does not need to know it in advance.
The algorithm for arbitrary-order streams did not need to know \(M\) in advance because we assumed an upper bound on $M$.  
However, for this algorithm, we remove this upper bound assumption on \(M\), though we will need to look at a prefix of days to estimate \(M\) for use in our algorithm.

\begin{algorithm}[!htb]
    \caption{An expert algorithm that maintains a pool of experts occupying \(\widetilde{O}(n / (\delta^2 T))\) space, and resamples the pool if its expected cost is too high.}
		\label{alg:SublinearAlgorithm:MW}
    \KwIn{Number $n$ of experts, number $T$ of rounds, regret $\delta$, number $M$ of mistakes of the best expert. We later show how to instead estimate $M$.}
    \begin{algorithmic}[1]
    \STATE{$u\gets 1$, $k\gets O\left(\frac{n\log^2 n}{\delta^2 T}\right)$}
	\STATE{Let \(P\) be a random set of $k$ unique indices of $[n]$.}
	\FOR{each time $t\in[T]$}
        \STATE{Run a sequential prediction algorithm (Definition \ref{def:seq-pred-alg}) with \(\eps = \delta/2\) for the experts in \(P\).}
        \IF{the cost of every expert in the pool exceeds\\ $\frac{M}{T}(t-u)+4\sqrt{(t-u)\log T}$ since time $u$}
        	\STATE{Let $P$ be a random set of $k$ unique indices of $[n]$.}
        	\STATE{$u\gets t$}
        \ENDIF
	\ENDFOR
    \end{algorithmic}
\end{algorithm}

We first note that the best expert in the random-order model cannot incur high cost. The following is Hoeffding's bound. 
\begin{lemma}
\label{lem:best:upper:random}
Let $X_1,\ldots,X_t$ be independent random variables such that $X_i\in[0,1]$ with $\mathbb{E}[X_i]=\alpha$ for all $i\in[t]$ and let $X=\sum_{i=1}^t X_i$. 
Then for any $T>1$, $\PPr{|X-\alpha t|\ge4\sqrt{t\log T}}\le\frac{1}{T^2}$.
\end{lemma}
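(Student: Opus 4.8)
The plan is to invoke Hoeffding's inequality (the form stated in the preliminaries) directly; this is a one-step argument. Since $X_1,\dots,X_t$ are independent and take values in $[0,1]$, the average $\overline{X} := \frac{1}{t}\sum_{i=1}^t X_i$ satisfies $\mathbf{E}[\overline{X}] = \alpha$, so Hoeffding gives, for every $s>0$,
\[\PPr{\left|\overline{X} - \alpha\right| \ge s} \le \exp\!\left(-2ts^2\right).\]
First I would rescale from the average to the sum: multiplying the event inside the probability by $t$ turns this into $\PPr{|X - \alpha t| \ge st} \le \exp(-2ts^2)$.

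Next I would choose $s = \frac{4\sqrt{\log T}}{\sqrt{t}}$ so that $st = 4\sqrt{t\log T}$, matching the deviation appearing in the lemma. Substituting, $2ts^2 = 2t\cdot\frac{16\log T}{t} = 32\log T$, hence
\[\PPr{|X - \alpha t| \ge 4\sqrt{t\log T}} \le \exp(-32\log T) \le \frac{1}{T^2},\]
where the final inequality uses $T>1$ (the resulting bound is in fact far stronger than $1/T^2$, e.g. $T^{-32}$ or better depending on the base of the logarithm; the constant $4$ is chosen generously so that the same estimate can be reused with slack in the analysis of Algorithm~\ref{alg:SublinearAlgorithm:MW}).

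There is no genuine obstacle here. The only points requiring a moment's care are (i) the normalization between the average form in which Hoeffding's inequality is stated above and the sum $X$ that appears in the lemma, and (ii) verifying that the constant $4$ in the deviation radius is large enough to drive the tail below $1/T^2$, which it comfortably is. I would present this as a two- or three-line proof with no further machinery.
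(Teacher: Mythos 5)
Your proof is correct and takes the same route the paper intends: the paper gives no explicit proof, remarking only that the lemma \emph{is} Hoeffding's bound, and your two-line rescaling argument (converting the average form to the sum form and plugging in $s = 4\sqrt{\log T}/\sqrt{t}$ to get $\exp(-32\log T) \le T^{-2}$) is exactly that direct application.
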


We can apply Lemma~\ref{lem:best:upper:random} in conjunction with the distributional properties of random-order streams and a union bound to show that with high probability, a pool with the best expert will be retained.

\begin{corollary}
\label{cor:no:resample}
In the random-order model, with probability at least $1-\frac{1}{T}$, Algorithm~\ref{alg:SublinearAlgorithm:MW} will not resample a pool including the best expert.
\end{corollary}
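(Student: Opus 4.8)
The plan is to reduce the statement to a single concentration estimate on the best expert's cost over every contiguous block of days, which then follows from Hoeffding's inequality together with a union bound over all such blocks.

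The starting observation is structural: in Algorithm~\ref{alg:SublinearAlgorithm:MW}, a pool formed at time $u$ is resampled at time $t$ only when \emph{every} expert currently in the pool has incurred cost larger than $\frac{M}{T}(t-u)+4\sqrt{(t-u)\log T}$ over the days since time $u$. Hence, if the best expert is in the pool and its cost over that block is at most this threshold, the resampling condition fails at time $t$. So it suffices to show that with probability at least $1-\frac1T$ the event $\mathcal{E}$ holds, where $\mathcal{E}$ says: for every pair $1\le u\le t\le T$, the best expert's total cost over the days since time $u$ up to time $t$ is at most $\frac{M}{T}(t-u)+4\sqrt{(t-u)\log T}$. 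On $\mathcal{E}$, the triggering condition is never met for a pool that contains the best expert, so no such pool is ever resampled; and every block the algorithm ever inspects for such a pool has the form $\{u,\dots,t\}$ with $u\le t$, so $\mathcal{E}$ covers all of them.

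To bound $\PPr{\neg\mathcal{E}}$, fix one block of $\ell := t-u$ days. In the random-order (exchangeable) model we condition on the unordered stream data; this fixes the identity of the best expert and the multiset $\{m_1,\dots,m_T\}\subseteq[0,1]$ of its per-day costs, whose sum is at most $M$, while the $T$ days are placed in a uniformly random order. Thus the best expert's costs on the $\ell$ days of the block form a uniform sample \emph{without replacement} of size $\ell$ from $\{m_1,\dots,m_T\}$, with expected sum at most $\frac{\ell}{T}M=\frac{M}{T}(t-u)$. Hoeffding's inequality holds, and is only sharper, for sampling without replacement from a bounded population \cite{hoeffding_probability_1963}, so this sum exceeds $\frac{M}{T}(t-u)+4\sqrt{\ell\log T}$ with probability at most $2\exp\!\left(-2(4\sqrt{\ell\log T})^2/\ell\right)=2\exp(-32\log T)\le \frac{1}{T^3}$ for $T$ larger than an absolute constant (blocks with $\ell$ so small that $4\sqrt{\ell\log T}\ge 1$ satisfy the bound trivially). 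This is exactly the estimate behind Lemma~\ref{lem:best:upper:random}, with the slack $4\sqrt{\ell\log T}$ chosen large enough to absorb the union bound. Taking a union bound over the at most $T^2$ pairs $(u,t)$ gives $\PPr{\neg\mathcal{E}}\le T^2\cdot\frac{1}{T^3}=\frac1T$, hence $\PPr{\mathcal{E}}\ge 1-\frac1T$, which is the claim.

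The two points that need care are: (i) Hoeffding must be applied in its sampling-without-replacement form, since the random-order model fixes the best expert's cost multiset and randomizes only the day ordering, so Lemma~\ref{lem:best:upper:random} as literally stated (for independent variables) is invoked through this slightly stronger variant; and (ii) the reduction in the first step hinges on the resampling rule being a conjunction over all pool members, so that a single under-threshold best expert blocks \emph{every} resampling of its pool --- which is why simultaneously controlling all blocks $\{u,\dots,t\}$ is exactly what is required. Everything else is a routine Hoeffding-plus-union-bound computation.
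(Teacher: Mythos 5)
Your proof is correct and takes essentially the same route as the paper: apply a Hoeffding-type bound to the best expert's cost over a block, handle the random-order model via the sampling-without-replacement form of Hoeffding (the paper phrases this as the costs following a multivariate hypergeometric distribution), and union bound. One refinement worth noting: you union over all $T^2$ candidate blocks $[u,t]$, which sidesteps a conditioning subtlety that the paper glosses over --- the time $u_0$ at which the best expert is first sampled is a random variable correlated with the stream, so fixing ``times $t$ after the best expert is sampled'' and applying the concentration bound directly is slightly delicate; controlling every block uniformly (as you do) avoids this entirely, at the harmless cost of a $T^2$ rather than $T$ union bound, which the generous constant $4$ in the deviation term easily absorbs.
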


\begin{proof}
Suppose the best expert is sampled by a pool. 
Let $\alpha=\frac{M}{T}$, where $M$ is the total cost of decisions made by the best expert. 
Observe that Lemma~\ref{lem:best:upper:random} considers a setting where the best expert has i.i.d. cost each day with expectation $\alpha$. 
In the random-order model, the cost of each day will follow a multivariate hypergeometric distribution whose expectation on each day is $\alpha$. 
Hence, the same bound as Lemma~\ref{lem:best:upper:random} holds, so that by a union bound, the best expert will incur cost at most $\alpha t+4\sqrt{t\log n}+\frac{4\log n}{\alpha}$ across all times $t$ after the best expert is sampled by the pool, with probability at least $1-\frac{1}{T}$. 
Thus the pool will not be resampled with probability at least $1-\frac{1}{T}$. 
\end{proof}

We now analyze Algorithm~\ref{alg:SublinearAlgorithm:MW}, which assumes that the cost $M$ of the best expert is given as input to the algorithm. 
We remove this assumption afterwards. 
\begin{theorem}
\label{thm:regret:known:mistakes}
For any $\delta>\sqrt{\frac{16\log^2 n}{T}}$, there exists an algorithm that takes as input a number $M$, which is the cost of the best expert, and achieves regret at most $\delta$ in expectation on random-order streams.  
The algorithm uses $O\left(\frac{n\log^2 n}{\delta^2T}\right)$ space. \end{theorem}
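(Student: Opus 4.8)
The plan is to follow the same three-step template as Theorem~\ref{thm:general-prediction-alg} --- bound the expected cost within a single round, bound the number of rounds $R$ with high probability, and combine --- but to replace the ``bad day'' counting (which relied on $M$ being sub-constant) with Corollary~\ref{cor:no:resample}, which in the random-order model guarantees that, except on a global failure event of probability at most $1/T$, no pool that ever contains the globally best expert is resampled. Consequently the first round whose pool contains the best expert is also the last round, and this is exactly the handle we need for controlling $R$ without any assumption on $M$.

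\emph{Per-round cost.} Fix the pool draws and the realization of the stream; these deterministically determine the round structure, and in particular they are independent of the private randomness of the sequential prediction subroutine. Consider a round $\mathbf{r}$ of length $\ell$. By the resampling rule, on the penultimate day of $\mathbf{r}$ some expert in the pool has cumulative cost since the start of the round at most $\frac{M}{T}\ell + 4\sqrt{\ell\log T}$, hence at most $\frac{M}{T}\ell + 4\sqrt{\ell\log T} + 1$ over all of $\mathbf{r}$. Applying Definition~\ref{def:seq-pred-alg} with $\eps=\delta/2$ to this expert bounds the expected cost of the subroutine on $\mathbf{r}$ by $(1+\tfrac{\delta}{2})\bigl(\tfrac{M}{T}\ell + 4\sqrt{\ell\log T} + 1\bigr) + \tfrac{2\beta\ln k}{\delta}$. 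Summing over all $R$ rounds, using $\sum_{\mathbf{r}}\ell_{\mathbf{r}}=T$ and Cauchy--Schwarz, $\sum_{\mathbf{r}}\sqrt{\ell_{\mathbf{r}}}\le\sqrt{RT}$, and taking expectations via the tower rule,
\[
\mathbf{E}[\text{cost of the algorithm}] \;\le\; \Bigl(1+\tfrac{\delta}{2}\Bigr)\Bigl(M + 4\sqrt{RT\log T} + R\Bigr) + \frac{2R\beta\ln k}{\delta}.
\]
Since $M\le T$, the contribution $\tfrac{\delta}{2}M\le\tfrac{\delta}{2}T$ to the excess cost over $M$ already sits within the $\delta T$ regret budget, so it remains only to handle the terms involving $R$.

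\emph{Bounding $R$ and finishing.} The pools are i.i.d.\ uniform $k$-subsets of $[n]$, so each pool contains the best expert independently with probability exactly $k/n$; combined with Corollary~\ref{cor:no:resample} this gives $\PPr{R\ge j}\le(1-k/n)^{j-1}+1/T$ for every $j\ge1$. Choosing the constant in $k=\Theta\!\bigl(\tfrac{n\log^2 n}{\delta^2 T}\bigr)$ large enough makes $R\le\rho^\star:=\Theta\!\bigl(\tfrac{\delta^2 T}{\log n}\bigr)$ except with probability $O(1/T)$, and on this event a direct computation using $\ln k\le\ln n$ and $\delta>\sqrt{16\log^2 n/T}$ bounds each of $\sqrt{RT\log T}$, $R$, and $\tfrac{R\beta\ln k}{\delta}$ by a small constant times $\delta T$, so the excess cost over $M$ is at most $\delta T$; on the complementary event the total cost is at most $T$, contributing only $O(1)=o(\delta T)$ to the expected excess cost. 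Hence the expected regret is at most $\delta$, and the space is $\widetilde{O}\!\bigl(\tfrac{n}{\delta^2 T}\bigr)$ words since the algorithm stores only the current pool and the $O(k)$-word state of the subroutine. I expect the decoupling step to be the main obstacle: it is tempting to say ``in the final round the best expert is sampled with probability $k/n$,'' but conditioning on a round being the final one skews the pool distribution, so one must instead reason about the \emph{unconditional} geometric process generated by the i.i.d.\ pool draws and only afterward invoke Corollary~\ref{cor:no:resample} to identify ``$R$ is large'' with ``the best expert has appeared in no pool yet.'' A secondary subtlety is keeping the two sources of randomness in order --- the round structure depends only on the pools and the stream, not on the subroutine's coins --- so that Definition~\ref{def:seq-pred-alg} can be applied round by round and the resulting conditional-expectation bounds assembled cleanly.
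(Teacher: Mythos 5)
Your proposal is correct and follows the paper's own proof essentially step for step: bound the per-round expected cost via Definition~\ref{def:seq-pred-alg} applied to the expert that survived the resampling test, sum over rounds (using $\sum_j\sqrt{t_j}\le\sqrt{rt}$), and control the number of rounds by combining Corollary~\ref{cor:no:resample} with the geometric tail $\PPr{R\ge j}\le(1-k/n)^{j-1}$ arising from the i.i.d.\ pool draws. The extra bookkeeping you flag (keeping the subroutine's coins separate from the pool/stream randomness, and adding the $+1/T$ slack for the failure event of Corollary~\ref{cor:no:resample}) is exactly what makes the paper's conditioning argument rigorous, so you have anticipated the one genuinely delicate point rather than missed it.
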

\begin{proof}
Consider Algorithm~\ref{alg:SublinearAlgorithm:MW} and suppose by way of contradiction, that its expected cost is at least $M+\delta T$. 
Let $\alpha=\frac{M}{T}$. 
Suppose the $j$-th pool of experts was run for time $t_j$. 
Then the best expert in the pool has cost at most $\alpha t_j+4\sqrt{t_j\log n}+1$. 
Then by Definition \ref{def:seq-pred-alg} for $\eps=\frac{\delta}{2}$, the expected cost of running the sequential prediction algorithm on the $j$-th pool of experts is at most 
\[\left(1+\frac{\delta}{2}\right)\left(\alpha t_j+4\sqrt{t_j\log n}+1\right)+\frac{2 \beta \ln n}{\delta},\]
for some fixed constant \(\beta\).
Thus, if there are $r$ total rounds over time $t$, the expected cost of the algorithm by linearity of expectation is at most
\begin{align*}
\left(1+\frac{\delta}{2}\right)\left(\alpha t+4\sqrt{rt\log n}+r\right)+\frac{2 \beta r\ln n}{\delta}\\
=\alpha t+O\left(\delta\alpha t+\sqrt{rt\log n}+\frac{r\log n}{\delta}\right).
\end{align*}
Hence, if the expected cost is at least $M+\frac{3\delta T}{4}$ and $\delta>\sqrt{\frac{16\log^2 n}{T}}$, then the algorithm must have used $r=\Omega\left(\frac{\delta^2 T}{\log n}\right)$ rounds. 

We now analyze the probability distribution for the number of rounds that the algorithm uses. 
By Corollary~\ref{cor:no:resample}, any pool that includes the best expert will not be resampled in the random-order model, with probability at least $1-\frac{1}{T}$. 
Thus, conditioning on the event that the first pool sampled that includes the best expert is not resampled, then if the algorithm uses $j$ total rounds, the first $j-1$ rounds must have not sampled the best expert. 
Therefore, if $Z$ is a random variable that represents the number of rounds, we have $\PPr{Z\ge j}\le\left(1-\frac{k}{n}\right)^{j-1}$. 
Since $k=O\left(\frac{n\log^2 n}{\delta^2 T }\right)$ with a sufficiently large constant in the big-Oh, then $\PPr{Z\ge r}\le\frac{1}{\poly(T)}$ for $r=\Omega\left(\frac{\delta^2 T}{\log n}\right)$.  
Hence, we have that with probability at least $1-\frac{2}{T}$, the expected cost of the algorithm is at most $M+\frac{3\delta T}{4}$. 
Otherwise, the cost of the algorithm is at most $T$. 
Thus, the overall expected cost of the algorithm is at most $M+\delta T$. 
\end{proof}

\paragraph{Unknown cost of the best expert in the random-order model.} 
We remark that Algorithm~\ref{alg:SublinearAlgorithm:MW} assumes the cost $M$ incurred by the best expert is known. 
We now describe how this assumption can be easily removed in the random-order model. 
Note that since the overall expected cost of Algorithm~\ref{alg:SublinearAlgorithm:MW} is at most $M+\delta T$, then even if we use a $(1+O(\delta))$-approximation of $M$ as input to the algorithm, then the overall expected cost is $(1+O(\delta))M+\delta T=M+O(\delta T)$, which can be then adjusted to $M+\delta T$ by a rescaling of $\delta$. 
Thus, it suffices to find a $(1+O(\delta))$-approximation to $M$. 

Let $\gamma$ be an estimate for the average cost $\frac{M}{T}$, and we initialize $\gamma$ to $\frac{1}{2}$. 
Note that a $(1+O(\delta))$-approximation to $\gamma$ corresponds to a $(1+O(\delta))$-approximation to $M$. 
We obtain a $(1+O(\delta))$-approximation to $\gamma$ through a binary search. 
We proceed through $\ell:=2\log\frac{1}{\delta}$ epochs so that in each epoch $j\in[\ell]$, $\gamma$ is a $\left(1+\frac{1}{2^j}\right)$-approximation to $\frac{M}{T}$. 
Each epoch $j\in[\ell]$ has length $\frac{\delta T}{2\log\frac{1}{\delta}}$. 
We run Algorithm~\ref{alg:SublinearAlgorithm:MW} on this epoch with input $\gamma\cdot\frac{\delta T}{2\log\frac{1}{\delta}}$ as the estimate for the cost and $\frac{1}{100}$ as the target regret. 
We can also track the average cost $\beta_j$ of the best expert in each epoch $j$. 
If $\gamma>(1+\delta)\beta_j$, then we update $\gamma\gets\gamma-\frac{1}{2^{j+1}}$. 
Similarly if $\gamma<(1-\delta)\beta_j$, then we update $\gamma\gets\gamma+\frac{1}{2^{j+1}}$. 
After the $\ell$ epochs, we will fix $\gamma$ as the estimated average cost for the remainder of the stream and run Algorithm~\ref{alg:SublinearAlgorithm:MW}. 

\begin{theorem}
\label{thm:general:regret}
For any $\delta>\sqrt{\frac{16\log^2 n}{T}}$, there exists an algorithm that achieves regret at most $\delta$ in expectation on random-order streams. 
The algorithm uses $O\left(\frac{n}{\delta^2 T}\log^2 n\right)$ space. 
\end{theorem}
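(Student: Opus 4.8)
The plan is to bootstrap Theorem~\ref{thm:regret:known:mistakes}, which already attains the claimed space and regret when the cost $M$ of the best expert is supplied, by (i) showing that a sufficiently accurate \emph{estimate} of $M$ is enough and (ii) spending a short prefix of the stream to compute one. First I would revisit the proof of Theorem~\ref{thm:regret:known:mistakes} to isolate how $M$ is used: it enters only through Corollary~\ref{cor:no:resample} (which needs the resampling threshold $\tfrac{\widehat M}{T}(t-u)+4\sqrt{(t-u)\log T}$ to dominate the true best expert's cost trajectory, hence needs $\widehat M\ge M$) and through the per-round guarantee of Definition~\ref{def:seq-pred-alg} (where a pool that survives $\tau$ days has a member of cost at most $\tfrac{\widehat M}{T}\tau+O(\sqrt{\tau\log T})$, so summing over the $\widetilde O(\delta^2 T)$ rounds gives total expected cost $(1+\delta/2)\widehat M+\widetilde O(\delta T)$). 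Hence any $\widehat M$ with $M\le \widehat M\le (1+\delta)M$ gives expected cost $M+O(\delta T)$, and after rescaling $\delta$ it suffices to produce, with high probability, a two-sided $(1+\delta)$-approximation $\gamma$ of $M/T$, slightly inflated so that $\widehat M:=\gamma T$ is a genuine upper bound on $M$.

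For step (ii) I would implement the binary-search scheme from the overview. Reserve the first $\Theta(\delta T)$ days, split into $\ell=\Theta(\log\tfrac1\delta)$ epochs of length $m=\Theta(\delta T/\log\tfrac1\delta)$ each, and maintain a guess $\gamma$, initialized to $\tfrac12$, with the invariant that after epoch $j$ it is a $(1+2^{-j})$-approximation of $M/T$ (a binary search on $\log(M/T)$). In epoch $j$, run Algorithm~\ref{alg:SublinearAlgorithm:MW} on that epoch with the current $\gamma$ and a constant target regret, form an empirical estimate $\beta_j$ of the best expert's average cost over the epoch, and move $\gamma$ up or down by the current multiplicative step according to whether $\gamma>(1+\delta)\beta_j$ or $\gamma<(1-\delta)\beta_j$. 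Exchangeability makes each epoch a uniformly random subset of the $T$ days, so by Hoeffding (Lemma~\ref{lem:best:upper:random}) the best expert's empirical cost over an epoch lies within $o(\delta)$ of $M/T$; a union bound over the $\ell$ epochs shows every comparison is correct, so the invariant survives and after $\ell=2\log\tfrac1\delta$ epochs $\gamma$ is the desired $(1+\delta)$-approximation.

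Finally I would assemble the pieces: freeze $\gamma$, and run Algorithm~\ref{alg:SublinearAlgorithm:MW} with $\widehat M=\gamma T$ over the remaining $\ge(1-\Theta(\delta))T$ days, whose expected cost is $M+O(\delta T)$ by step (i). The searching phase costs at most one per day, hence $\Theta(\delta T)$ in total, and contributes nothing to the best expert's cost, so it adds only $O(\delta)$ to the regret; adding the two parts and rescaling $\delta$ gives expected regret $\delta$. For space, the searching phase runs Algorithm~\ref{alg:SublinearAlgorithm:MW} with constant target regret ($\widetilde O(n/T)$ space) plus $O(1)$ words for $\gamma$ and $\beta_j$, all dominated — since $\delta<1$ — by the $O(n\log^2 n/(\delta^2 T))$ of the main run, giving the claimed bound.

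The step I expect to be the main obstacle is the correctness of the estimation: one must argue the estimate is good enough \emph{uniformly in the parameters} while remaining computable in $\widetilde O(n/(\delta^2 T))$ space — we cannot afford a counter per expert, so $\beta_j$ must come only from the experts that actually enter pools during the epoch, and one must show via exchangeability that these still pin down $M/T$ to relative precision $O(\delta)$. This is most delicate when $M/T$ is so small that a $\Theta(\delta T)$-day prefix yields only a weak multiplicative estimate; there one instead has to verify a coarse bound $M\le O(\delta T)$ and show directly that plugging in $\widehat M=\Theta(\delta T)$ already yields $O(\delta T)$ regret, so that the small-$M$ and large-$M$ regimes jointly cover all $\delta>\sqrt{16\log^2 n/T}$. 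Making the epoch lengths and failure probabilities line up within that regime is the remaining bookkeeping.
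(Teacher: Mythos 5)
Your proposal matches the paper's proof essentially step for step: reduce to Theorem~\ref{thm:regret:known:mistakes} by observing that a $(1+O(\delta))$-approximation to $M$ suffices, then spend a $\Theta(\delta T)$-day prefix split into $\ell=\Theta(\log\tfrac1\delta)$ epochs of length $\Theta(\delta T/\log\tfrac1\delta)$ doing a binary search that tightens $\gamma$ to a $(1+2^{-j})$-approximation after epoch $j$ (using Lemma~\ref{lem:best:upper:random} and exchangeability to validate each comparison), and finally freeze $\gamma$ and run Algorithm~\ref{alg:SublinearAlgorithm:MW} on the remainder, charging the prefix's cost to the additive $O(\delta T)$ slack. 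The concern you flag about computing $\beta_j$ without a per-expert counter (and the resulting care needed when $M/T$ is very small) is a legitimate subtlety that the paper's own proof also leaves somewhat implicit, so you are not missing anything relative to the published argument.
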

\begin{proof}
It suffices to (1) show that $\gamma$ converges to a $(1+\delta)$-approximation of the true average cost $\frac{M}{T}$ by the best expert and (2) analyze the regret induced by the procedure until $\gamma$ converges. 
The expected regret of the algorithm afterward is upper bounded by Theorem~\ref{thm:regret:known:mistakes}. 

To show that $\gamma$ converges to a $(1+\delta)$-approximation of the true average cost $\frac{M}{T}$ by the best expert, we consider casework on $\gamma$. 
Suppose $\gamma>(1+\delta)\cdot\frac{M}{T}$. 
Then by Lemma~\ref{lem:best:upper:random}, no experts sampled by the pool will achieve average cost $\gamma$ in the random-order model. 
Thus $\gamma$ will be decreased accordingly. 
On the other hand, if $\gamma<(1-\delta)\cdot\frac{M}{T}$, then again by Lemma~\ref{lem:best:upper:random}, the best expert will be sampled by the pool and have average cost at least $(1-\delta)\cdot\frac{M}{T}$ in the random-order model and then $\gamma$ will be increased accordingly. 
Hence, with probability at least \(1 - \frac{1}{T}\), it holds that $\gamma$ converges to a $(1+\delta)$-approximation of the true average cost $\frac{M}{T}$ of the best expert. 

On the other hand, since each epoch $j\in[\ell]$ only has length $\frac{\delta T}{2\log\frac{1}{\delta}}$ and $\ell= 2\log\frac{1}{\delta}$, then the total cost that can be incurred across the $\ell$ epochs is only $\delta T$. 
Hence the regret can only be increased by an additive $\delta$ due to not knowing the average cost of the best expert. 
Finally, to analyze the space complexity, recall that each epoch has length $\frac{\delta T}{2\log\frac{1}{\delta}}$ and $\frac{1}{100}$ is the target regret. 
With high probability, the best expert makes at least $\frac{\delta M}{200\log\frac{1}{\delta}}$ mistakes in each epoch.  
Thus by Theorem~\ref{thm:regret:known:mistakes}, it suffices to use $O\left(\frac{n}{\delta^2 T}\log^2 n\right)$ space. 
\end{proof}

\section*{Acknowledgements}
We thank Santosh Vempala for pointing out the connection to follow the perturbed leader. 
David P. Woodruff and Samson Zhou were supported by a Simons Investigator Award and by the NSF Grant No. CCF-1815840. 
Vaidehi Srinivas was partially supported by NSF Grant No. CCF-1652491. 
Ziyu Xu was partially supported by a PricewaterhouseCoopers Research Grant.

\bibliography{mem_experts}

\newcommand{\etalchar}[1]{$^{#1}$}
\begin{thebibliography}{CMVW16}

\bibitem[AAK{\etalchar{+}}20]{amir-corrupted-experts}
Idan Amir, Idan Attias, Tomer Koren, Yishay Mansour, and Roi Livni.
\newblock Prediction with corrupted expert advice.
\newblock In {\em Advances in Neural Information Processing Systems},
  volume~33, pages 14315--14325. Curran Associates, Inc., 2020.

\bibitem[AHK12]{arora2012multiplicative}
Sanjeev Arora, Elad Hazan, and Satyen Kale.
\newblock The multiplicative weights update method: a meta-algorithm and
  applications.
\newblock {\em Theory of Computing}, 8(1):121--164, 2012.

\bibitem[AS15]{Arjevani2015}
Yossi Arjevani and Ohad Shamir.
\newblock Communication complexity of distributed convex learning and
  optimization.
\newblock In {\em Advances in Neural Information Processing Systems 28}, pages
  1756--1764, 2015.

\bibitem[AW20]{assadi_exploration_2020}
Sepehr Assadi and Chen Wang.
\newblock Exploration with limited memory: Streaming algorithms for coin
  tossing, noisy comparisons, and multi-armed bandits.
\newblock In {\em Proceedings of the 52nd {{Annual ACM SIGACT Symposium}} on
  {{Theory}} of {{Computing}}}, {{STOC}} 2020, pages 1237--1250, June 2020.

\bibitem[BCI{\etalchar{+}}17]{braverman2017bptree}
Vladimir Braverman, Stephen~R Chestnut, Nikita Ivkin, Jelani Nelson, Zhengyu
  Wang, and David~P Woodruff.
\newblock Bptree: an $\ell_2$ heavy hitters algorithm using constant memory.
\newblock In {\em Proceedings of the 36th ACM SIGMOD-SIGACT-SIGAI Symposium on
  Principles of Database Systems}, pages 361--376, 2017.

\bibitem[BCIW16]{braverman2016beating}
Vladimir Braverman, Stephen~R Chestnut, Nikita Ivkin, and David~P Woodruff.
\newblock Beating countsketch for heavy hitters in insertion streams.
\newblock In {\em Proceedings of the 48th annual ACM symposium on Theory of
  Computing}, pages 740--753, 2016.

\bibitem[BDW18]{bhattacharyya2018optimal}
Arnab Bhattacharyya, Palash Dey, and David~P Woodruff.
\newblock An optimal algorithm for $\ell_1$-heavy hitters in insertion streams
  and related problems.
\newblock {\em ACM Transactions on Algorithms (TALG)}, 15(1):1--27, 2018.

\bibitem[BEO{\etalchar{+}}13]{braverman2013tight}
Mark Braverman, Faith Ellen, Rotem Oshman, Toniann Pitassi, and Vinod
  Vaikuntanathan.
\newblock A tight bound for set disjointness in the message-passing model.
\newblock In {\em 2013 IEEE 54th Annual Symposium on Foundations of Computer
  Science}, pages 668--677, 2013.

\bibitem[BGM{\etalchar{+}}16]{braverman2016communication}
Mark Braverman, Ankit Garg, Tengyu Ma, Huy~L Nguyen, and David~P Woodruff.
\newblock Communication lower bounds for statistical estimation problems via a
  distributed data processing inequality.
\newblock In {\em Proceedings of the 48th annual ACM symposium on Theory of
  Computing}, pages 1011--1020, 2016.

\bibitem[BGZ20]{brayshaw2020new}
David Brayshaw, Paula Gonzalez, and Florian Ziel.
\newblock A new approach to subseasonal multi-model forecasting: Online
  prediction with expert advice.
\newblock In {\em EGU General Assembly Conference Abstracts}, page 17663, 2020.

\bibitem[Blu98]{goos_-line_1998}
Avrim Blum.
\newblock On-{{Line Algorithms}} in {{Machine Learning}}.
\newblock In {\em Online {{Algorithms}}}, volume 1442, pages 306--325. Springer
  Berlin Heidelberg, 1998.

\bibitem[Bro51]{Brown51}
George~W. Brown.
\newblock Iterative solution of games by fictitious play.
\newblock In {\em Analysis of Production and Allocation}, pages 374--376.
  Wiley, 1951.

\bibitem[BYJKS04]{bar2004information}
Ziv Bar-Yossef, TS~Jayram, Ravi Kumar, and D~Sivakumar.
\newblock An information statistics approach to data stream and communication
  complexity.
\newblock {\em Journal of Computer and System Sciences}, 68(4):702--732, 2004.

\bibitem[CBL06]{cesa2006prediction}
Nicol{\`o} Cesa-Bianchi and G{\'a}bor Lugosi.
\newblock {\em Prediction, learning, and games}.
\newblock Cambridge university press, 2006.

\bibitem[CBMS05]{cesa-bianchi-improved-second-order}
Nicol{\`o} Cesa-Bianchi, Yishay Mansour, and Gilles Stoltz.
\newblock Improved second-order bounds for prediction with expert advice.
\newblock In {\em Learning Theory}, pages 217--232, 2005.

\bibitem[CCF04]{CCF04}
Moses Charikar, Kevin~C. Chen, and Martin Farach{-}Colton.
\newblock Finding frequent items in data streams.
\newblock {\em Theor. Comput. Sci.}, 312(1):3--15, 2004.

\bibitem[CKS03]{chakrabarti2003near}
Amit Chakrabarti, Subhash Khot, and Xiaodong Sun.
\newblock Near-optimal lower bounds on the multi-party communication complexity
  of set disjointness.
\newblock In {\em 18th IEEE Annual Conference on Computational Complexity,
  2003. Proceedings.}, pages 107--117, 2003.

\bibitem[CM05]{CM05}
Graham Cormode and S.~Muthukrishnan.
\newblock An improved data stream summary: the count-min sketch and its
  applications.
\newblock {\em J. Algorithms}, 55(1):58--75, 2005.

\bibitem[CMVW16]{crouch2016}
Michael Crouch, Andrew McGregor, Gregory Valiant, and David~P. Woodruff.
\newblock {Stochastic Streams: Sample Complexity vs. Space Complexity}.
\newblock In {\em 24th Annual European Symposium on Algorithms (ESA 2016)},
  pages 32:1--32:15, 2016.

\bibitem[CO96]{cover1996universal}
Thomas~M Cover and Erik Ordentlich.
\newblock Universal portfolios with side information.
\newblock {\em IEEE Transactions on Information Theory}, 42(2):348--363, 1996.

\bibitem[Cov96]{cover-universal-data-compression}
T.M. Cover.
\newblock Universal data compression and portfolio selection.
\newblock In {\em Proceedings of 37th Conference on Foundations of Computer
  Science}, pages 534--538, 1996.

\bibitem[Cov11]{cover2011universal}
Thomas~M Cover.
\newblock Universal portfolios.
\newblock In {\em The Kelly Capital Growth Investment Criterion: Theory and
  Practice}, pages 181--209. World Scientific, 2011.

\bibitem[CT99]{cover1999elements}
Thomas~M Cover and Joy~A Thomas.
\newblock {\em Elements of information theory}.
\newblock John Wiley \& Sons, 1999.

\bibitem[CYL{\etalchar{+}}12]{chiang-gradual-variations}
Chao-Kai Chiang, Tianbao Yang, Chia-Jung Lee, Mehrdad Mahdavi, Chi-Jen Lu, Rong
  Jin, and Shenghuo Zhu.
\newblock Online optimization with gradual variations.
\newblock In {\em Proceedings of the 25th Annual Conference on Learning
  Theory}, pages 6.1--6.20, 2012.

\bibitem[DS18]{Dagan2018}
Yuval Dagan and Ohad Shamir.
\newblock Detecting correlations with little memory and communication.
\newblock {\em CoRR}, abs/1803.01420, 2018.

\bibitem[FRS15]{foster-adaptive-online-learning}
Dylan~J Foster, Alexander Rakhlin, and Karthik Sridharan.
\newblock Adaptive online learning.
\newblock In {\em Advances in Neural Information Processing Systems},
  volume~28. Curran Associates, Inc., 2015.

\bibitem[FS97]{freund1997decision}
Yoav Freund and Robert~E Schapire.
\newblock A decision-theoretic generalization of on-line learning and an
  application to boosting.
\newblock {\em Journal of computer and system sciences}, 55(1):119--139, 1997.

\bibitem[FS99]{freund1999adaptive}
Yoav Freund and Robert~E Schapire.
\newblock Adaptive game playing using multiplicative weights.
\newblock {\em Games and Economic Behavior}, 29(1-2):79--103, 1999.

\bibitem[FV99]{foster-econ-survey}
Dean~P. Foster and Rakesh Vohra.
\newblock Regret in the on-line decision problem.
\newblock {\em Games and Economic Behavior}, 29(1):7--35, 1999.

\bibitem[GK07]{garg-fast-packing}
Naveen Garg and Jochen Könemann.
\newblock Faster and simpler algorithms for multicommodity flow and other
  fractional packing problems.
\newblock {\em SIAM J. Comput.}, 37:630--652, 01 2007.

\bibitem[GKR20]{garg2020time}
Sumegha Garg, Pravesh~K Kothari, and Ran Raz.
\newblock Time-space tradeoffs for distinguishing distributions and
  applications to security of goldreich's prg.
\newblock {\em arXiv preprint arXiv:2002.07235}, 2020.

\bibitem[GM09]{guha2009stream}
Sudipto Guha and Andrew McGregor.
\newblock Stream order and order statistics: Quantile estimation in
  random-order streams.
\newblock {\em SIAM Journal on Computing}, 38(5):2044--2059, 2009.

\bibitem[GMN14]{GargMN14}
Ankit Garg, Tengyu Ma, and Huy~L. Nguyen.
\newblock On communication cost of distributed statistical estimation and
  dimensionality.
\newblock In {\em Advances in Neural Information Processing Systems 27}, pages
  2726--2734, 2014.

\bibitem[GPS17]{gravin_tight_2017}
Nick Gravin, Yuval Peres, and Balasubramanian Sivan.
\newblock Tight {{Lower Bounds}} for {{Multiplicative Weights Algorithmic
  Families}}.
\newblock In {\em 44th {{International Colloquium}} on {{Automata}},
  {{Languages}}, and {{Programming}} ({{ICALP}} 2017)}, 2017.

\bibitem[GRT17]{garg2017extractor}
Sumegha Garg, Ran Raz, and Avishay Tal.
\newblock Extractor-based time-space tradeoffs for learning.
\newblock {\em Manuscript. July}, 2017.

\bibitem[GRT19]{garg2019time}
Sumegha Garg, Ran Raz, and Avishay Tal.
\newblock Time-space lower bounds for two-pass learning.
\newblock In {\em 34th Computational Complexity Conference (CCC 2019)}, 2019.

\bibitem[Haz16]{Hazan16}
Elad Hazan.
\newblock Introduction to online convex optimization.
\newblock {\em Found. Trends Optim.}, 2(3-4):157--325, 2016.

\bibitem[HK10]{hazan-regret-bounded-by-variance}
Elad Hazan and Satyen Kale.
\newblock Extracting certainty from uncertainty: regret bounded by variation in
  costs.
\newblock {\em Mach. Learn.}, 80(2-3):165--188, 2010.

\bibitem[HKW95]{haussler_tight_1995}
David Haussler, Jyrki Kivinen, and Manfred~K. Warmuth.
\newblock Tight worst-case loss bounds for predicting with expert advice.
\newblock In {\em Computational {{Learning Theory}}}, Lecture {{Notes}} in
  {{Computer Science}}, pages 69--83. {Springer}, 1995.

\bibitem[Hoe63]{hoeffding_probability_1963}
Wassily Hoeffding.
\newblock Probability {{Inequalities}} for {{Sums}} of {{Bounded Random
  Variables}}.
\newblock {\em Journal of the American Statistical Association},
  58(301):13--30, 1963.

\bibitem[HS97]{helmbold1997predicting}
David~P Helmbold and Robert~E Schapire.
\newblock Predicting nearly as well as the best pruning of a decision tree.
\newblock {\em Machine Learning}, 27(1):51--68, 1997.

\bibitem[KRT17]{Kol2017}
Gillat Kol, Ran Raz, and Avishay Tal.
\newblock Time-space hardness of learning sparse parities.
\newblock In {\em Proceedings of the 49th Annual ACM SIGACT Symposium on Theory
  of Computing}, STOC 2017, page 1067–1080, 2017.

\bibitem[KSJK13]{kar_generalization_2013}
Purushottam Kar, Bharath~K. Sriperumbudur, Prateek Jain, and Harish~C. Karnick.
\newblock On the {{Generalization Ability}} of {{Online Learning Algorithms}}
  for {{Pairwise Loss Functions}}.
\newblock {\em arXiv:1305.2505 [cs, stat]}, May 2013.

\bibitem[KV05]{FPLKalaiVempala05}
Adam Kalai and Santosh Vempala.
\newblock Efficient algorithms for online decision problems.
\newblock {\em Journal of Computer and System Sciences}, 71(3):291--307, 2005.
\newblock Learning Theory 2003.

\bibitem[KvEG14]{koolen-learning-learning-rate}
Wouter~M Koolen, Tim van Erven, and Peter Gr\"{u}nwald.
\newblock Learning the learning rate for prediction with expert advice.
\newblock In {\em Advances in Neural Information Processing Systems},
  volume~27, 2014.

\bibitem[LNNT16]{LNNT16}
Kasper~Green Larsen, Jelani Nelson, Huy~L. Nguyen, and Mikkel Thorup.
\newblock Heavy hitters via cluster-preserving clustering.
\newblock In {\em {IEEE} 57th Annual Symposium on Foundations of Computer
  Science, {FOCS} 2016, 9-11 October 2016, Hyatt Regency, New Brunswick, New
  Jersey, {USA}}, pages 61--70. {IEEE} Computer Society, 2016.

\bibitem[LSPY18]{liau2018stochastic}
David Liau, Zhao Song, Eric Price, and Ger Yang.
\newblock Stochastic multi-armed bandits in constant space.
\newblock In {\em International Conference on Artificial Intelligence and
  Statistics}, pages 386--394, 2018.

\bibitem[LW89]{littlestone_weighted_1989}
N.~Littlestone and M.~K. Warmuth.
\newblock The weighted majority algorithm.
\newblock In {\em Proceedings of the 30th {{Annual Symposium}} on
  {{Foundations}} of {{Computer Science}}}, {{SFCS}} '89, pages 256--261,
  October 1989.

\bibitem[MU05]{mitzenmacher_probability_2005}
Michael Mitzenmacher and Eli Upfal.
\newblock {\em Probability and Computing: An Introduction to Randomized
  Algorithms and Probabilistic Analysis}.
\newblock {Cambridge University Press}, {New York}, 2005.

\bibitem[MW98]{maass1998efficient}
Wolfgang Maass and Manfred~K Warmuth.
\newblock Efficient learning with virtual threshold gates.
\newblock {\em Information and Computation}, 141(1):66--83, 1998.

\bibitem[OC98]{ordentlich-cover}
Erik Ordentlich and Thomas~M. Cover.
\newblock The cost of achieving the best portfolio in hindsight.
\newblock {\em Mathematics of Operations Research}, 23(4):960--982, 1998.

\bibitem[PST91]{plotkin-fast-packing}
S.A. Plotkin, D.B. Shmoys, and E.~Tardos.
\newblock Fast approximation algorithms for fractional packing and covering
  problems.
\newblock In {\em [1991] Proceedings 32nd Annual Symposium of Foundations of
  Computer Science}, pages 495--504, 1991.

\bibitem[Rag16]{Raginsky16}
Maxim Raginsky.
\newblock Strong data processing inequalities and {\(\Phi\)}-sobolev
  inequalities for discrete channels.
\newblock {\em {IEEE} Trans. Inf. Theory}, 62(6):3355--3389, 2016.

\bibitem[Raz17]{raz2017time}
Ran Raz.
\newblock A time-space lower bound for a large class of learning problems.
\newblock In {\em 2017 IEEE 58th Annual Symposium on Foundations of Computer
  Science (FOCS)}, pages 732--742. IEEE, 2017.

\bibitem[Raz18]{raz2018fast}
Ran Raz.
\newblock Fast learning requires good memory: A time-space lower bound for
  parity learning.
\newblock {\em Journal of the ACM (JACM)}, 66(1):1--18, 2018.

\bibitem[RS12]{rakhlin-predictable-sequences}
Alexander Rakhlin and Karthik Sridharan.
\newblock Online learning with predictable sequences.
\newblock {\em Journal of Machine Learning Research}, 30, 08 2012.

\bibitem[Sha16]{Shamir16}
Ohad Shamir.
\newblock Without-replacement sampling for stochastic gradient methods:
  Convergence results and application to distributed optimization.
\newblock {\em CoRR}, abs/1603.00570, 2016.

\bibitem[SNL14]{sani-easy-data}
Amir Sani, Gergely Neu, and Alessandro Lazaric.
\newblock Exploiting easy data in online optimization.
\newblock In {\em Advances in Neural Information Processing Systems},
  volume~27, 2014.

\bibitem[TMV01]{takimoto2001predicting}
Eiji Takimoto, Akira Maruoka, and Volodya Vovk.
\newblock Predicting nearly as well as the best pruning of a decision tree
  through dynamic programming scheme.
\newblock {\em Theoretical Computer Science}, 261(1):179--209, 2001.

\bibitem[TSBV18]{TaiSBV18}
Kai~Sheng Tai, Vatsal Sharan, Peter Bailis, and Gregory Valiant.
\newblock Sketching linear classifiers over data streams.
\newblock In {\em Proceedings of the 2018 International Conference on
  Management of Data, {SIGMOD}}, pages 757--772, 2018.

\bibitem[Vov90]{vovk_aggregating_1990}
Vladimir Vovk.
\newblock Aggregating strategies.
\newblock In {\em Proceedings of the Third Annual Workshop on Computational
  Learning Theory, {COLT}}, pages 371--386, 1990.

\bibitem[Vov98]{vovk_game_1998}
Vladimir Vovk.
\newblock A game of prediction with expert advice.
\newblock {\em J. Comput. Syst. Sci.}, 56(2):153--173, 1998.

\bibitem[Vov99]{vovk1999derandomizing}
Vladimir Vovk.
\newblock Derandomizing stochastic prediction strategies.
\newblock {\em Machine Learning}, 35(3):247--282, 1999.

\bibitem[Vov05]{vovk2005defensive}
Vladimir Vovk.
\newblock Defensive prediction with expert advice.
\newblock In {\em International Conference on Algorithmic Learning Theory},
  pages 444--458. Springer, 2005.

\bibitem[WW15]{weinstein2015simultaneous}
Omri Weinstein and David~P Woodruff.
\newblock The simultaneous communication of disjointness with applications to
  data streams.
\newblock In {\em International Colloquium on Automata, Languages, and
  Programming}, pages 1082--1093. Springer, 2015.

\bibitem[ZDJW13]{zhang_informationtheoretic_lower_2013}
Yuchen Zhang, John Duchi, Michael~I Jordan, and Martin~J Wainwright.
\newblock Information-theoretic lower bounds for distributed statistical
  estimation with communication constraints.
\newblock In {\em Advances in {{Neural Information Processing Systems}}},
  volume~26, 2013.

\end{thebibliography}


\end{document}